\newcommand{\R}{\mathbb{R}}
\newcommand{\inr}[1]{\bigl< #1 \bigr>}
\newcommand{\N}{\mathbb{N}}
\newcommand{\E}{\mathbb{E}}
\newcommand{\eps}{\varepsilon}
\newcommand{\conv}{\mathop{\rm conv}}
\newtheorem{Theorem}{Theorem}[section]
\newtheorem{Lemma}[Theorem]{Lemma}
\newtheorem{Definition}[Theorem]{Definition}
\newtheorem{Corollary}[Theorem]{Corollary}
\newtheorem{Remark}[Theorem]{Remark}
\numberwithin{equation}{section}
\def \proof {\noindent {\bf Proof.}\ \ }
\def \endproof
\def\IND{\mathbbm{1}}
\newcommand{\la}{\lambda}
\newcommand{\al}{\alpha}
\newcommand{\bP}{\mathbb{P}}
\newcommand{\ti}{\times}
\newcommand{\si}{\sigma}
\newcommand{\cU}{\mathcal{U}}
\newcommand{\thres}{\tau_{\operatorname{thres}}}
\newcommand{\noise}{\nu_{\operatorname{noise}}}
\newcommand{\Si}{\Sigma}
\newcommand{\sign}{\operatorname{sign}}
\newcommand{\co}{\operatorname{conv}}
\newcommand{\red}{}
\newcommand{\blue}{}
\newcommand{\green}{}
\title{Non-Gaussian Hyperplane Tessellations and Robust One-Bit Compressed Sensing}
\author{Sjoerd Dirksen}
\thanks{Lehrstuhl C f{\"u}r
Mathematik (Analysis), RWTH Aachen University, dirksen@mathc.rwth-aachen.de}
\author{Shahar Mendelson}
\thanks{Mathematical Sciences Institute, The Australian National University and Department of Mathematics,\\ \indent Technion, I.I.T, shahar.mendelson@gmail.com}
\begin{document}

\maketitle

\begin{abstract}
\noindent We show that a tessellation generated by a small number of random affine hyperplanes can be used to approximate Euclidean distances between any two points in an arbitrary bounded set $T$, where the random hyperplanes are generated by subgaussian or heavy-tailed normal vectors and uniformly distributed shifts. We derive quantitative bounds on the number of hyperplanes needed for constructing such tessellations in terms of natural metric complexity measures of $T$ and the desired approximation error. Our work extends significantly prior results in this direction, which were restricted to Gaussian hyperplane tessellations of subsets of the Euclidean unit sphere.

As an application, we obtain new reconstruction results in memoryless one-bit compressed sensing with non-Gaussian measurement matrices. We show that by quantizing at uniformly distributed thresholds, it is possible to accurately reconstruct low-complexity signals from a small number of one-bit quantized measurements, even if the measurement vectors are drawn from a heavy-tailed distribution. Our reconstruction results are uniform in nature and robust in the presence of pre-quantization noise on the analog measurements as well as adversarial bit corruptions in the quantization process. Moreover we show that if the measurement matrix is subgaussian then accurate recovery can be achieved via a convex program.
\end{abstract}

\section{Introduction}

In this article we study a fundamental geometric question: can distances between points in a given set $T \subset \R^n$ be accurately encoded using a small number of random hyperplanes? To formulate this question more precisely, let $H_{X_i,\tau_i} = \{x\in \R^n \ : \ \langle X_i,x\rangle +\tau_i=0\}$, $i=1,\ldots,m$, be a set of affine hyperplanes with normal vectors $X_i$ and shift parameters $\tau_i$. These hyperplanes tessellate the set $T$ into (at most) $2^m$ cells and, for any $x\in T$, the bit string $(\sign(\langle X_i,x\rangle +\tau_i))_{i=1}^m \in \{-1,1\}^m$ encodes the cell in which $x$ is located (see Figures~\ref{fig:firstCut} and \ref{fig:secondCut}).  Moreover, for any two points $x,y\in T$, the \green{normalized} Hamming distance between their bit strings
\begin{equation} \label{eq:hamming}
\frac{1}{m}|\{i: \sign(\langle X_i,x\rangle+\tau_i) \not=\sign(\langle X_i,y\rangle+\tau_i)\}|
\end{equation}
counts the \green{fraction} of hyperplanes separating $x$ and $y$. In what follows we are interested in quantifying the number of random hyperplanes that suffice to ensure that \eqref{eq:hamming}
approximates the distance between any two points in $T$ that are not `too close'.
\par
A beautiful result due to Plan and Vershynin \cite{PlV14} essentially solves this question for subsets of the Euclidean unit sphere with respect to the geodesic distance, using homogeneous Gaussian hyperplanes (i.e., $\tau_i=0$ for all $i$). They showed that if $T\subset S^{n-1}$ and the normal vectors $X_1,\ldots X_m$ are independent standard Gaussian vectors, then with probability at least $1-2e^{-cm\rho^2}$, for all $x,y\in T$,
\begin{equation}
\label{eqn:PlVEmbed}
d_{S^{n-1}}(x,y) - \rho \leq \frac{1}{m}|\{i: \sign(\langle X_i,x\rangle) \not=\sign(\langle X_i,y\rangle)\}| \leq d_{S^{n-1}}(x,y) + \rho,
\end{equation}
provided that $m\gtrsim \rho^{-6} \ell_*^2(T)$; here
$$
\ell_*(T) : = \E\sup_{x\in T}|\langle G,x\rangle|
$$
where $G$ is the standard Gaussian random vector in $\R^n$. Thus, $\ell_*(T)$ is the Gaussian mean width of $T$---a natural geometric parameter that is of central importance in geometry (e.g. Dvoretzky type theorems, see for instance \cite{AGM15}) and in statistics, where it is used to capture the difficulty of prediction problems in numerous manuscripts.

It follows from \eqref{eqn:PlVEmbed} that if $x$ and $y$ are `far enough apart', then the fraction of homogeneous Gaussian hyperplanes that separate them concentrates sharply around their geodesic distance.

As far as random homogeneous Gaussian tessellations of $T \subset S^{n-1}$ are concerned, it was conjectured in \cite{PlV14} that $m\simeq \rho^{-2}\ell_*^2(T)$ is necessary and sufficient for \eqref{eqn:PlVEmbed} to hold. The best known sufficient condition for an arbitrary $T \subset S^{n-1}$ is $m\gtrsim \rho^{-4} \ell_*^2(T)$, established in \cite{OyR15}, while for certain `simple' subsets of the Euclidean sphere (e.g., if $T$ is \green{a subspace}) $m\gtrsim \rho^{-2} \ell_*^2(T)$ is known to be sufficient \cite{OyR15,PlV14}.
\par
\begin{center}
\begin{figure}[htp]
\begin{subfigure}{0.45\textwidth}
\centering
\includegraphics[scale=0.8,trim={210 405 150 206},clip]{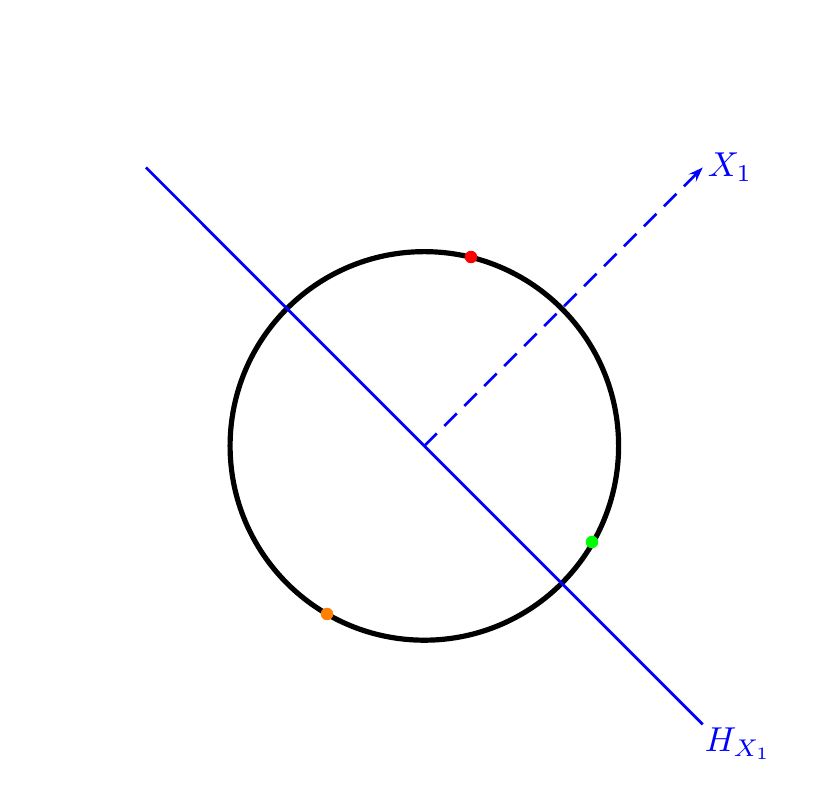}
\end{subfigure}
\begin{subfigure}{0.45\textwidth}
\centering
\includegraphics[scale=0.8,trim={210 405 150 206},clip]{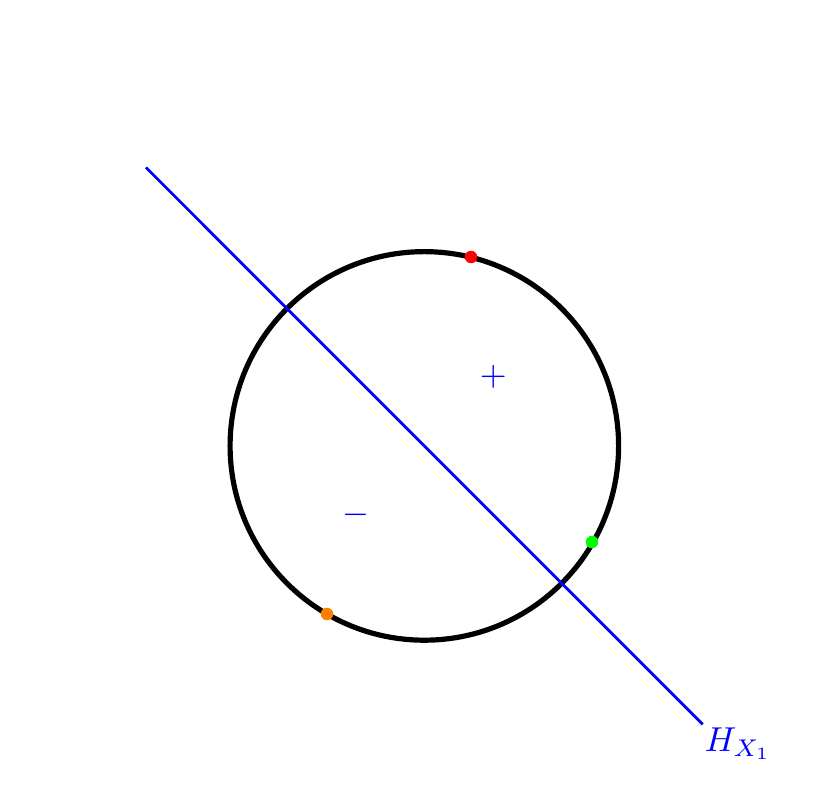}
\end{subfigure}
\caption{Illustration of the hyperplane cut generated by the vector $X_1$ (and shift parameter $0$). The homogeneous hyperplane $H_{X_1}$ divides $\R^n$ into two parts, a ``$+$" and a ``$-$" side. The red and green points are assigned the bit $1$, the orange point is assigned $-1$.}
\label{fig:firstCut}
\end{figure}
\end{center}
\begin{center}
\begin{figure}[htp]
\includegraphics[scale=0.8,trim={230 405 142 206},clip]{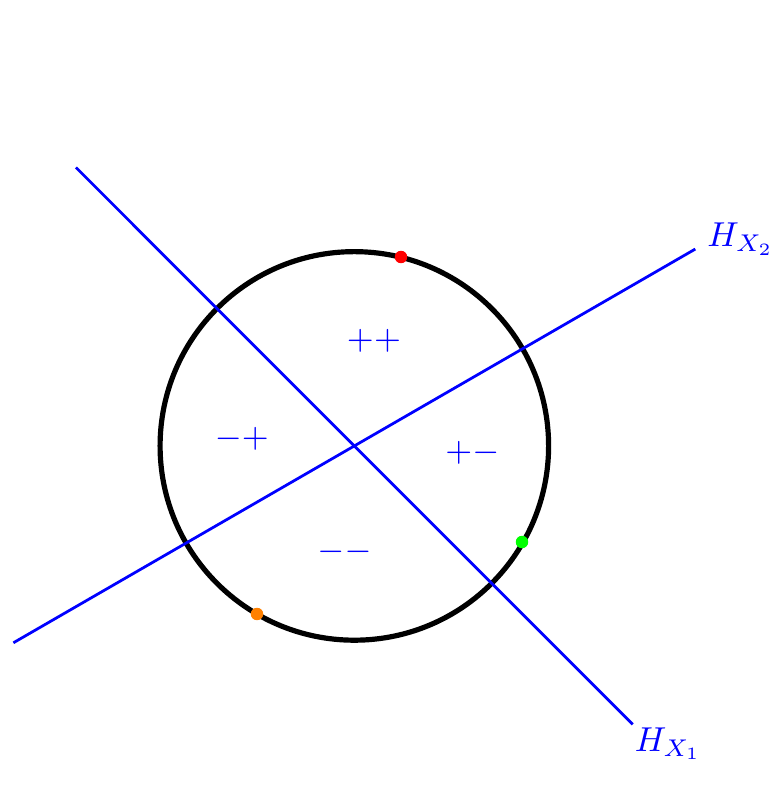}
\caption{The homogeneous hyperplanes $H_{X_1}$ and $H_{X_2}$ divide $\R^n$ into four parts. The red, green, and orange points are assigned the bit sequences $\{1,1\}$, $\{1,-1\}$ and $\{-1,-1\}$, respectively.}
\label{fig:secondCut}
\end{figure}
\end{center}
It is natural to ask whether approximating distances via random tessellations is possible in more general situations, and the obvious cases that come to mind are to consider other distributions for generating the normal vectors (i.e., not Gaussian), and sets $T$ that need not be subsets of $S^{n-1}$. As it happens, these are not only natural but also of extreme importance in signal processing---specifically, when studying signal reconstruction problems from quantized measurements. We describe the connections between the extended version of the random tessellation problem and signal recovery  in detail in Section~\ref{sec:SRHT}.

\par

Unfortunately, it is clear that the two extensions one is interested in are not possible when considering tessellations generated by homogeneous hyperplanes. First of all, it is impossible to separate points lying on a straight line \green{through the origin} using a homogeneous hyperplane. And second, it is easy to find very natural distributions for which \eqref{eqn:PlVEmbed} is false. As an extreme case, if the $X_i$ are i.i.d.\ symmetric Bernoulli random vectors (i.e., the $X_i$'s are selected independently from the uniform distribution on $\{-1,1\}^n$), there are vectors in $S^{n-1}$ that are far apart but still cannot be separated using $H_{X_i}$ --- even if one uses all possible hyperplanes generated by points in $\{-1,1\}^n$.

\green{A possible solution to both problems stems in a phenomenon that appears in the engineering literature: there is extensive experimental evidence that signal recovery from quantized measurements improves substantially if one adds appropriate `noise' to the measurements before quantizing. The operation of adding noise before quantization, which was first proposed in \cite{Rob62}, is called \emph{dithering} (see also the survey \cite{GrN98}). In the context of random tessellations, the geometric interpretation of dithering is adding random parallel shifts to the hyperplanes. As we show in what follows, the addition of such random shifts allows one to address the two problems: random tessellations of arbitrary sets $T$ that are generated by rather general distributions can be used to approximate distances in $T$. Moreover, as an added value, our results explain why dithering is such an effective method in signal recovery problems (see Section \ref{sec:SRHT} for more details).}

\begin{center}
\begin{figure}[htp]
\begin{subfigure}{0.45\textwidth}
\centering
\includegraphics[scale=0.7,trim={210 405 150 220},clip]{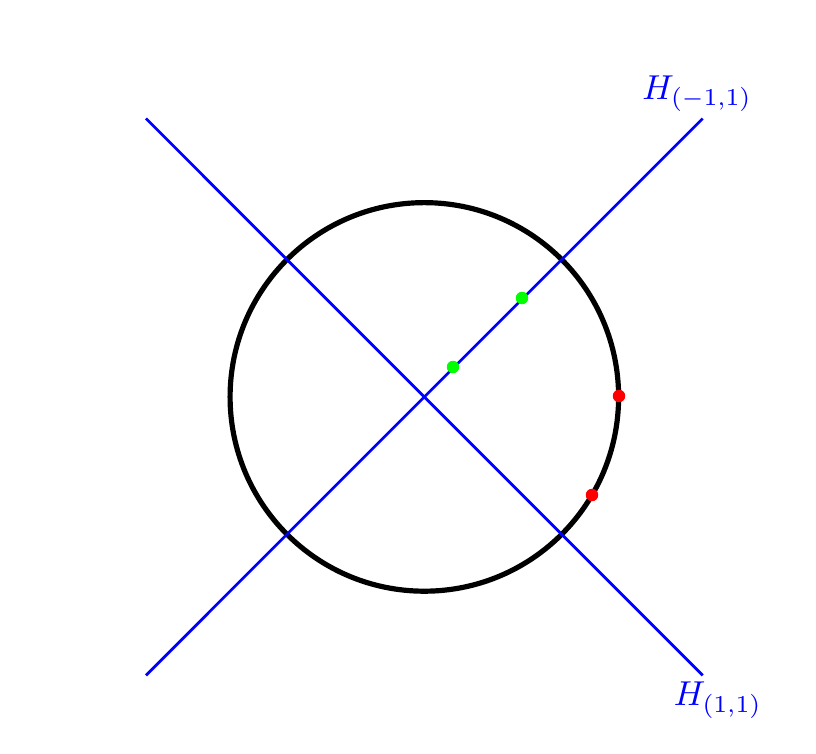}
\end{subfigure}
\begin{subfigure}{0.45\textwidth}
\centering
\includegraphics[scale=0.7,trim={210 405 120 206},clip]{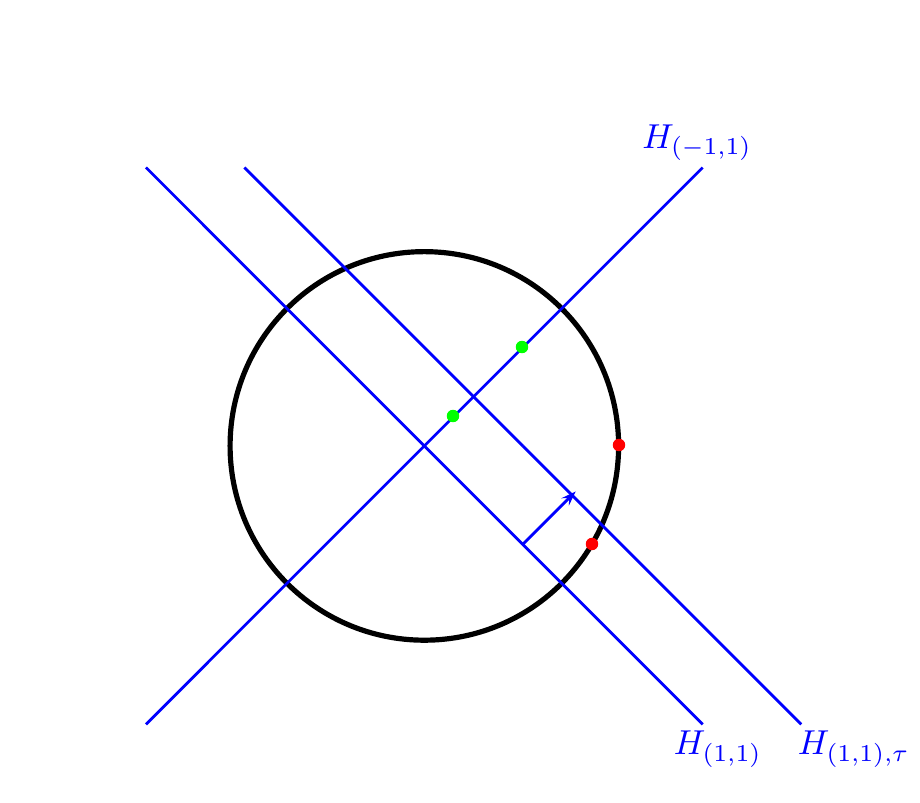}
\end{subfigure}
\caption{Bernoulli vectors in $\R^2$ can only generate two different homogeneous hyperplanes. As a result, there exist two points on the sphere (the examples $e_1$ and $(e_1+\lambda e_2)/\sqrt{1+\la^2}$ for $-1<\lambda<0$ are marked in red) which are far apart, but cannot be separated by a Bernoulli hyperplane. This problem persists in high dimensions. In addition, any two points lying on a straight line through the origin (examples are marked in green) cannot be separated by a homogeneous hyperplane (the latter problem is not specific to the Bernoulli case). Both problems can be solved by using parallel shifts of the hyperplanes.}
\label{fig:BernoulliShift}
\end{figure}
\end{center}
To formulate our results, consider i.i.d.\ shifts $\tau_i$ that are uniformly distributed in $[-\lambda,\lambda]$ for a well chosen $\lambda$, let $T \subset RB_2^n$, the Euclidean ball of radius $R$, set $X$ to be a random vector in $\R^n$ and let $X_1,\ldots,X_m$ be independent copies of $X$ that are also independent of $(\tau_i)_{i=1}^m$. Although the method we introduce can be used in other situations (see in particular Remark~\ref{rem:circulant}), our focus is on two scenarios. The first is an $L$-subgaussian scenario, in which $X$ is isotropic\footnote{Recall than a random vector is isotropic if its covariance matrix is the identity; thus, for every $x \in \R^n$, $\E\inr{X,x}^2 =\|x\|_2^2$.}, symmetric, and $L$-subgaussian, that is, for every $x \in \R^n$ and $p \geq 2$, $\|\inr{X,x}\|_{L^p} \leq L \red{\sqrt{p}}\|\inr{X,x}\|_{L^2}$. The following result is a special case of Theorem~\ref{thm:tess-subgaussian}, and to formulate it we denote by $\co(T)$ the convex hull of the set $T$.
\begin{Theorem} \label{thm:distances-intro}
Set
$$
d(x,y)=\frac{1}{m}|\{i: \sign(\langle X_i,x\rangle + \tau_i) \not=\sign(\langle X_i,y\rangle + \tau_i)\}|.
$$
There exist constants $c_0,\ldots,c_4$ depending only on $L$ such that the following holds. \green{Fix $0<\rho<R$.} If $T \subset R B_2^n$, $\la = c_0 R$ and
$$
m \geq c_1\frac{R \log(eR/\rho)}{\rho^3} \ell_*^2(T),
$$
then with probability at least $1-8\exp(-c_2m\rho/R)$, for any $x,y \in \co(T)$ such that $\|x-y\|_2 \geq \rho$, one has
\begin{equation} \label{eq:isomorphic}
c_3\frac{ \|x-y\|_2}{R} \leq d(x,y) \leq \blue{c_4\sqrt{\log(eR/\rho)}} \cdot \frac{\|x-y\|_2}{R}.
\end{equation}
\end{Theorem}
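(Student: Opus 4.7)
The proof naturally splits into a pointwise computation of the expected Hamming distance and a uniform control of the empirical process indexed by
\begin{equation*}
f_{x,y}(X,\tau) := \IND\bigl[\sign(\inr{X,x}+\tau) \neq \sign(\inr{X,y}+\tau)\bigr], \qquad (x,y) \in \co(T) \times \co(T).
\end{equation*}
Writing $h_{x,y}(X) := \E[f_{x,y}(X,\tau) \mid X]$ and using that $\tau$ is uniform on $[-\la,\la]$ with $\la = c_0 R$, one has
\begin{equation*}
h_{x,y}(X) = \frac{1}{2\la}\bigl|\bigl[\min(-\inr{X,x},-\inr{X,y}),\,\max(-\inr{X,x},-\inr{X,y})\bigr]\cap[-\la,\la]\bigr|.
\end{equation*}
The outer interval has length $|\inr{X,x-y}|$, so $\E d(x,y) \leq \|x-y\|_2/(2\la)$ by isotropy and Jensen. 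For the matching lower bound I would restrict to the event $A=\{|\inr{X,x}|,|\inr{X,y}|\leq\la/2\}$, on which the intersection equals the full $|\inr{X,x-y}|$; subgaussianity makes $\bP(A^c)$ exponentially small in $c_0^2/L^2$, while the equivalence of the $L^1$ and $L^2$ norms of sub-Gaussian variables (via H\"older's inequality and the sub-Gaussian fourth-moment bound) gives $\E|\inr{X,x-y}| \geq c_L\|x-y\|_2$. Choosing $c_0$ large enough in $L$ yields $\E d(x,y) \geq c_L' \|x-y\|_2/R$ for every $x,y \in \co(T)$.

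\textbf{Uniform deviation.} I decompose
\begin{equation*}
d(x,y) - \E d(x,y) = \underbrace{\frac{1}{m}\sum_{i=1}^m\bigl[f_{x,y}(X_i,\tau_i) - h_{x,y}(X_i)\bigr]}_{(\mathrm{I})} + \underbrace{\frac{1}{m}\sum_{i=1}^m h_{x,y}(X_i) - \E h_{x,y}(X)}_{(\mathrm{II})}.
\end{equation*}
Term (II) is the easier one: since $h_{x,y}$ depends on $X$ only through $(\inr{X,x},\inr{X,y})$ and the map $(a,b)\mapsto |[\min(a,b),\max(a,b)]\cap[-\la,\la]|/(2\la)$ is $O(1/\la)$-Lipschitz, Talagrand's contraction principle combined with the standard comparison between sub-Gaussian and Gaussian suprema bounds $\sup_{x,y}|(\mathrm{II})|$ by a constant multiple of $\ell_*(\co(T))/(\la\sqrt{m})$, and $\ell_*(\co(T))=\ell_*(T)$. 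Term (I), conditionally on $(X_i)_{i=1}^m$, is a sum of $\tau$-centered bounded variables; although $\tau\mapsto f_{x,y}(X,\tau)$ is discontinuous, the uniform density of $\tau$ supplies a substitute for Lipschitz regularity: for fixed $X$ the probability that $f_{x,y}$ and $f_{x',y'}$ differ is at most $(|\inr{X,x-x'}|+|\inr{X,y-y'}|)/(2\la)$. Feeding this pseudo-metric into a generic chaining over a Gaussian $\ell_*$-net of $\co(T)\times\co(T)$ at scale $\sim\rho$, and combining with the Bernstein-type variance estimate $\E f_{x,y}^2 \leq \|x-y\|_2/(2\la)$, produces the $R\log(R/\rho)/\rho^3$ factor in the sample complexity together with the residual $\sqrt{\log(R/\rho)}$ loss visible in the upper bound of \eqref{eq:isomorphic}. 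Combining the two deviation estimates with the expectation computation then yields \eqref{eq:isomorphic} for every pair satisfying $\|x-y\|_2 \geq \rho$, on an event of probability at least $1-8\exp(-c_2 m\rho/R)$.

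\textbf{Main obstacle.} The technical heart of the argument is term (I). Because $f_{x,y}$ is an indicator, standard contraction-based bounds on Rademacher or Gaussian complexities do not apply directly, and the dither smooths $f_{x,y}$ only in expectation rather than pointwise. One must therefore craft a chaining argument for indicator-valued processes in which the uniform density of $\tau$ supplies the regularity at each scale, and carefully balance this approximation error against the sub-Gaussian concentration of the linear functionals $\inr{X,x-x'}$. This trade-off is what forces the extra factor $R/\rho$ in the sample complexity beyond the naive $\rho^{-2}\ell_*^2(T)$ rate suggested by the expectation alone, and is also the source of the $\sqrt{\log(R/\rho)}$ in the upper tail of \eqref{eq:isomorphic}.
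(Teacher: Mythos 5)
Your proposal takes a genuinely different route from the paper, and while the expectation computation is correct, the uniform deviation step has a genuine gap that you yourself flag but do not close.

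The paper never computes $\E\,d(x,y)$ or studies $d(x,y)-\E\,d(x,y)$. Instead it works at the level of the empirical counts directly: it proves a pointwise statement (Theorem~\ref{thm:individual-separation}) that with high probability a $\Omega(\|x-y\|_2/\lambda)$-fraction of hyperplanes \emph{well-separate} a fixed pair $x,y$, i.e.\ they separate with quantitative margin $\theta\|x-y\|_2$ (Definition~\ref{def:large-margin}). The margin is exactly what makes the net argument work: via Lemma~\ref{lemma:stability}, a hyperplane that well-separates two net points $v,w$ still plain-separates any $x,y$ close to them, provided $|\inr{X_i,x-v}|$ and $|\inr{X_i,y-w}|$ are small. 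Controlling the number of exceptional indices $i$ over a Euclidean ball $r''B_2^n$ is done by the order-statistics bound in Theorem~\ref{thm:monotone-subgaussian}, and the upper estimate in \eqref{eq:isomorphic} is obtained not from a deviation inequality but from the metric-convexity chaining of local upper bounds, which is precisely where the $\sqrt{\log(eR/\rho)}$ enters (the ratio $r'/r''$).

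Your route replaces all of this with a two-term decomposition $(\mathrm{I})+(\mathrm{II})$ around $\E\,d(x,y)$, and the crucial term $(\mathrm{I})$ is left unresolved. You correctly observe that $\bP_\tau\bigl(f_{x,y}\neq f_{x',y'}\mid X\bigr)\le (|\inr{X,x-x'}|+|\inr{X,y-y'}|)/(2\lambda)$, but this is a bound on an \emph{expected} increment, not a pointwise one, and it yields only a conditional (on $X$) $\tau$-Bernstein variance of order $\frac{1}{m}\sum_i(|\inr{X_i,x-x'}|+|\inr{X_i,y-y'}|)/\lambda$, i.e.\ a chaining metric that behaves like $\sqrt{\|x-x'\|_2/\lambda}$. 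A Dudley/generic-chaining argument in a H\"older-$1/2$-type metric over $\co(T)\times\co(T)$ is delicate and is not sketched in enough detail to verify; the ``Main obstacle'' paragraph amounts to a statement of the difficulty rather than an argument.

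There is also a structural inconsistency that should have given you pause. Your framework, if it succeeded, would produce a symmetric additive deviation bound $\sup_{x,y}|d(x,y)-\E\,d(x,y)|\le c\rho/R$, and since $\E\,d(x,y)\simeq_L\|x-y\|_2/R$ two-sidedly, this would give
\begin{equation*}
c'_L\,\frac{\|x-y\|_2}{R}\ \le\ d(x,y)\ \le\ c''_L\,\frac{\|x-y\|_2}{R}\qquad\text{for all }\|x-y\|_2\geq\rho,
\end{equation*}
i.e.\ a clean upper bound \emph{without} the $\sqrt{\log(eR/\rho)}$ factor. That would be strictly stronger than \eqref{eq:isomorphic}. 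Conversely, if chaining introduces a $\sqrt{\log}$ loss in the deviation, as you claim, it would contaminate the \emph{lower} bound as well, whereas the theorem's lower bound has no such loss. So the asymmetry in \eqref{eq:isomorphic} cannot be the signature of a deviation argument; it is the signature of proving the two inequalities by unrelated mechanisms (net-plus-margin for the lower bound, local-oscillation-plus-metric-convexity for the upper), which is what the paper actually does. To repair your proposal you would essentially have to re-introduce the paper's margin/oscillation machinery to control the empirical process increments, at which point the expectation decomposition no longer buys you anything.
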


Theorem~\ref{thm:distances-intro} shows that to approximate Euclidean distances in $T$ it is sufficient to use a number of hyperplanes that is proportional to the squared Gaussian mean width of $T$. The latter quantity is a natural measure of the `intrinsic dimension' of the set. For instance, if $E$ is a $d$-dimensional subspace, then $T=E\cap B_2^n$ has mean width $\ell_*^2(T)\simeq d$. Another example that plays an important role in what follows is $T=\Sigma_{s,n}$, the set of all $s$-sparse vectors in the unit ball, in this case $\ell_*^2(T)\simeq \log {n\choose s}\simeq s\log(en/s)$.

Note that the lower estimate in \eqref{eq:isomorphic} implies that the hyperplanes endow a \emph{$\rho$-uniform tessellation}: any cell of the tessellation of $T$ has diameter at most $\rho$.

\par

In the second scenario we explore \emph{heavy-tailed} random variables: again $X$ is isotropic and symmetric, but in addition we only assume that $X$ satisfies an $L^1$-$L^2$ equivalence:
\begin{equation}
\label{eqn:L1L2equivIntro}
\|\inr{X,x}\|_{L^2} \leq L \|\inr{X,x}\|_{L^1}, \qquad \text{for every } x \in \R^n.
\end{equation}
In the heavy-tailed scenario a different complexity parameter dictates the required number of hyperplanes. For $K\subset\R^n$ we consider
$$
E(K) := \E \sup_{x \in K} \Big|\Big\langle \frac{1}{\sqrt{m}}\sum_{i=1}^m \eps_i X_i,x\Big\rangle\Big|,
$$
where $(\eps_i)_{i\geq 1}$ is a sequence of independent, symmetric $\{-1,1\}$-valued random variables that is independent of $X_1,...,X_m$. If $X_1,\ldots,X_m$ happen to be isotropic, symmetric and subgaussian, then $E(K) \leq c \ell_*(K)$ for an absolute constant $c$.

\begin{Remark}
The fact that $E(K)$ is dominated by the Gaussian mean width of $K$ is one of the features of subgaussian processes and is an outcome of Talagrand's majorizing measures theorem \cite{Tal14}. Finding upper bounds on $E(K)$ when $X$ is not subgaussian is a challenging question that has been studied extensively over the last 30 years or so and which will not be pursued here.
\end{Remark}

Theorem \ref{thm:tessHeavyIntro} is a special case of Theorem~\ref{thm:tess} below. In what follows, given $K \subset \R^n$ and $r>0$ we denote by ${\mathcal N}(K,r)$ the smallest number of Euclidean balls of radius $r$ that are needed to cover $K$.

\begin{Theorem} \label{thm:tessHeavyIntro}
There exist constants $c_0,\ldots,c_4$ that depend only on $L$ for which the following holds. Fix $0<\rho<R$, let $T \subset R B_2^n$ and set $U=\co(T)$. Let $\lambda = \blue{c_0} R$, $r = c_1\rho^2/R$, $U_r=(U-U)\cap rB_2^n$ and assume that
$$
m\geq c_2\left(\left(\frac{R \ E(U_r)}{\rho^2}\right)^2 + \frac{R\log{\mathcal N}(U,r)}{\rho}\right).
$$
Then with probability at least $1-8\exp(-c_3 m (\rho/R)^2)$, for every $x,y \in U$ that satisfy $\|x-y\|_2 \geq \rho$,
\begin{equation}
\label{eqn:tessHeavyIntro}
c_3\frac{ \|x-y\|_2}{R} \leq d(x,y) \leq c_4\frac{R}{\rho} \cdot \frac{\|x-y\|_2}{R}.
\end{equation}
\end{Theorem}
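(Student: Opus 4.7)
I will follow the classical bias/fluctuation decomposition: first identify
\[ \E d(x,y)=\bP\bigl\{\sign(\langle X,x\rangle+\tau)\neq\sign(\langle X,y\rangle+\tau)\bigr\} \]
and show that it is comparable to $\|x-y\|_2/R$; then control the random fluctuation $\sup_{x,y\in U}|d(x,y)-\E d(x,y)|$ by a small constant multiple of $\rho/R$ with the probability claimed in the theorem. The lower estimate in \eqref{eqn:tessHeavyIntro} then follows by triangle inequality, while the upper estimate in \eqref{eqn:tessHeavyIntro} is essentially immediate: since $d(x,y)\leq 1$ and $\|x-y\|_2/\rho\geq 1$, any $c_4\geq 1$ works.

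For the expectation, write $f_{x,y}(X,\tau):=\IND_{\sign(\langle X,x\rangle+\tau)\neq\sign(\langle X,y\rangle+\tau)}$. Conditioning on $X$ and integrating over $\tau\sim\mathrm{Unif}[-\lambda,\lambda]$, the disagreement event is $\{\tau\in I(X)\}$, where $I(X)$ is the interval with endpoints $-\langle X,x\rangle$ and $-\langle X,y\rangle$, so
\[ \E[f_{x,y}(X,\tau)\mid X]=\frac{|I(X)\cap[-\lambda,\lambda]|}{2\lambda}\leq \frac{|\langle X,x-y\rangle|}{2\lambda}. \]
The upper bound $\E d(x,y)\leq \|x-y\|_2/(2\lambda)$ is now immediate from isotropy. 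For the matching lower bound, Markov's inequality yields $\bP(|\langle X,z\rangle|>\lambda)\leq R^2/\lambda^2=1/c_0^2$ for any $\|z\|_2\leq R$; combined with the $L^2$-$L^1$ equivalence \eqref{eqn:L1L2equivIntro} and a Paley--Zygmund-style truncation argument, this yields $\E[|\langle X,x-y\rangle|\,\IND_A]\gtrsim_L \|x-y\|_2$, where $A=\{|\langle X,x\rangle|,|\langle X,y\rangle|\leq\lambda\}$, provided $c_0=c_0(L)$ is chosen large enough; hence $\E d(x,y)\gtrsim \|x-y\|_2/R$.

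For the uniform fluctuation, I take an $r$-net $N_r\subset U$ of cardinality ${\mathcal N}(U,r)$ at the scale $r=c_1\rho^2/R$ and let $\pi\colon U\to N_r$ denote a nearest-point map. I split
\begin{align*}
d(x,y)-\E d(x,y)&=\bigl[d(x,y)-d(\pi(x),\pi(y))\bigr] \\
&\quad +\bigl[d(\pi(x),\pi(y))-\E d(\pi(x),\pi(y))\bigr] \\
&\quad +\bigl[\E d(\pi(x),\pi(y))-\E d(x,y)\bigr],
\end{align*}
each piece of which should be at most $c\rho/(3R)$. The deterministic last piece is controlled by the $1/\lambda$-Lipschitz dependence of $\E d$ on each argument (visible from the expectation computation), producing an error of order $r/\lambda\simeq \rho^2/R^2\ll \rho/R$. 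The middle piece is handled pairwise by Bernstein's inequality: $f_{x',y'}\in[0,1]$ has variance at most $\|x'-y'\|_2/(2\lambda)\leq 1/c_0$, so a deviation of order $\rho/R$ fails with probability $\leq 2\exp(-cm(\rho/R)^2)$; the union bound over the ${\mathcal N}(U,r)^2$ pairs is absorbed exactly by the hypothesis $m\gtrsim R\log{\mathcal N}(U,r)/\rho$, matching the second term in the sample-complexity assumption.

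The main obstacle is the empirical residual $d(x,y)-d(\pi(x),\pi(y))$. Comparing the intervals $I(X)$ for $(x,y)$ and $(x',y')$ --- whose endpoints differ by $|\langle X,x-x'\rangle|$ and $|\langle X,y-y'\rangle|$ --- yields the crucial pointwise bound
\[ |f_{x,y}-f_{x',y'}|\leq \IND_{\{|\tau+\langle X,x'\rangle|\leq |\langle X,x-x'\rangle|\}}+\IND_{\{|\tau+\langle X,y'\rangle|\leq |\langle X,y-y'\rangle|\}}. \]
Thus the residual is dominated by empirical averages of indicators of thin ``slabs'' of thickness proportional to $|\langle X,x-x'\rangle|$, indexed by $U_r=(U-U)\cap rB_2^n$. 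Symmetrising in $(X_i,\tau_i)$ and applying Talagrand's contraction principle against the linear class $\{\langle X,\cdot\rangle:\cdot\in U_r\}$ together with a generic-chaining argument reduces the expected supremum of this Rademacher process to a quantity involving $E(U_r)/\sqrt{m}$ (after accounting for the $\lambda$-scaling of the slabs). Requiring this expected supremum to be at most $c\rho/R$ yields the first sample-complexity term $(RE(U_r)/\rho^2)^2$ in the hypothesis. A final application of Talagrand's concentration inequality for bounded empirical processes upgrades the expectation bound to a high-probability statement with the stated exponent $m(\rho/R)^2$, and combining the three estimates produces \eqref{eqn:tessHeavyIntro}.
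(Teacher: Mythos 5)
Your overall framing — compute $\E d(x,y)$ and show it is comparable to $\|x-y\|_2/R$, then control the uniform fluctuation $\sup|d-\E d|$ via a net and a residual process — is a legitimate alternative packaging of the paper's argument. The paper proceeds differently on the surface: it defines a notion of $\theta$-well-separation (a hyperplane separates $v,w$ \emph{with margin} $\theta\|v-w\|_2$), proves a high-probability lower bound on $|I_{v,w}(\theta)|$ for net pairs directly (Theorem~\ref{thm:individual-separation}), and then uses the margin together with the event $\sup_{z\in T_{r''}}|\{i:|\langle X_i,z\rangle|\geq cr'\}|\lesssim mr'/\lambda$ (Lemma~\ref{lemma:stability}, Theorem~\ref{thm:single-to-local}) to transfer to arbitrary $x,y$. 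Your slab-indicator bound plays exactly the role of the margin-plus-perturbation trick, and your residual process is the paper's event $\mathcal{A}_2$. So the route is essentially the same once unwound, but your proposal has two concrete gaps.

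First, the Bernstein step for the net fluctuation is wrong as stated. For a net pair with $\|x'-y'\|_2$ close to the diameter $2R$, the variance of $f_{x',y'}$ is order one, so Bernstein for the additive deviation $|d(x',y')-\E d(x',y')|>c\rho/R$ only gives exponent $\simeq m(\rho/R)^2$. The union bound over $\mathcal{N}(U,r)^2$ pairs would then require $\log\mathcal{N}(U,r)\lesssim m(\rho/R)^2$, but the theorem's hypothesis only provides the strictly weaker $\log\mathcal{N}(U,r)\lesssim m\rho/R$ — for $\rho<R$, a net saturating the hypothesis would make your union bound blow up. The fix is exactly what the paper does: since you only need a \emph{multiplicative} lower bound $d(x',y')\geq c\,\E d(x',y')$, use the lower-tail Chernoff bound for sums of Bernoullis, which has exponent $\simeq m\,\bP(E_i)\simeq m\|x'-y'\|_2/\lambda\geq m\rho/R$ for every net pair; this matches the hypothesis. (The $(\rho/R)^2$ in the final probability in the theorem comes from the bounded-differences/Talagrand step for the residual process, not from the net union bound.)

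Second, the contraction step for the slab process is hand-waved in a way that doesn't quite work as written: the indicator $\IND_{\{|\tau_i+\langle X_i,x'\rangle|\leq|\langle X_i,x-x'\rangle|\}}$ is a non-Lipschitz function of $\langle X_i,x-x'\rangle$, and it also depends on $\tau_i$, so Talagrand's contraction does not apply directly. You need the elementary threshold split
\[
\IND_{\{|\tau_i+a|\leq|b|\}}\leq \IND_{\{|\tau_i+a|\leq\theta\}}+\frac{|b|}{\theta}
\]
with $\theta\simeq r'\simeq\rho$. The first term, summed over $i$, is a fixed Bernoulli average for each net centre $v$ (expectation $\simeq\theta/\lambda$), controlled by Chernoff plus a union bound over the net; the second term is where symmetrization and contraction are applied to the genuinely linear class $\{\langle X_i,z\rangle:z\in U_r\}$, producing the $E(U_r)/\sqrt m$ term and the constraint $r\lesssim\rho^2/R$, after which a bounded-differences (or Talagrand) inequality gives the exponent $m(\rho/R)^2$. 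Your phrase ``after accounting for the $\lambda$-scaling of the slabs'' gestures at this, but the split itself is the heart of the matter and must be made explicit. Finally, your observation that the stated upper bound in \eqref{eqn:tessHeavyIntro} is trivially true for $\|x-y\|_2\geq\rho$ and $c_4\geq1$ is correct; the non-trivial upper bound of Theorem~\ref{thm:tess}, valid already for $\|x-y\|_2\geq2r$, is proved by chaining through the local upper bound using metric convexity.
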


\begin{Remark}
It should come as no surprise that the uniform upper estimate on $d(x,y)$ deteriorates the more `heavy-tailed' the random vector $X$ is, while at the same time the lower bound is universal. It reflects the fact that such lower bounds are due to a small-ball property rather than tail estimates.
This universal behaviour implies that almost regardless of the choice of $X$, if $x$ and $y$ are reasonably `far apart' then their distance is exhibited by the fraction of tessellation hyperplanes that separate the points.
\end{Remark}

The connection between the number of hyperplanes $m$ and the accuracy $\rho$ is less explicit in Theorem~\ref{thm:tessHeavyIntro}, because $E(U_r)$ depends on $m$. And even though the uniform central limit theorem shows that $E(U_r)$ converges to $\ell_*(U_r)$ as $m$ tends to infinity, we are interested in quantitative estimates, which are, in general, nontrivial. Since estimating $E(\red{U_r})$ is not the main focus of this article, we shall not pursue this question any further. We only consider the set $T=\Sigma_{s,n}$ for the sake of illustration. In this case, $U,U_r \subset 4(\sqrt{s}B_1^n \cap B_2^n)=4\co(\Sigma_{s,n})$. By Sudakov's inequality,
$$\log {\mathcal N}(T,r)\leq c_1 \frac{\ell_*^2(U_r)}{r^2} \leq c_2 \frac{R^2s\log(en/s)}{\rho^4}.$$
Moreover, $E(U_r)\leq 4 E(\co(\Sigma_{s,n})) = 4 E(\Sigma_{s,n})$. If the latter parameter can be bounded by a multiple $C\ell_*(\Sigma_{s,n})$ of the Gaussian width, then Theorem~\ref{thm:tessHeavyIntro} shows that \eqref{eqn:tessHeavyIntro} holds if
$$
m=c(L) \frac{R^2s\log(en/s)}{\rho^4},
$$
which is, up to worse dependencies on $R$ and $\rho$, the same scaling as in the subgaussian case. As it happens, this can be guaranteed under very mild assumptions on the vector $X$ by using techniques from \cite{LeM14,Men15}. For instance, $X$ can be any isotropic, unconditional, log-concave vector $X$. More striking is the case where the coordinates of $X$ are independent copies of a mean-zero random variable $\xi$ satisfying, for some $c>0$,
$$
(\E|\xi|^p)^{1/p} \lesssim cp^{\al}, \qquad \text{for all } p\leq \log n.
$$
In this case, the assumption is satisfied with $C\lesssim ce^{2\al-1}$. Thus, one only needs control of the first $\log n$ moments and higher moments may not even exist. This condition is satisfied by a wide variety of extremely heavy-tailed random variables. We refer to \cite[Section V]{DLR16} for proofs and many explicit examples.

\par

\vskip0.3cm

Before we present the proofs of Theorems \ref{thm:distances-intro} and \ref{thm:tessHeavyIntro}, let us explore the connection between random hyperplane tessellations and signal recovery problems. Readers that are solely interested in hyperplane tessellations can safely skip straight to Section~\ref{sec:rand-tess}, where the proofs may be found.

\subsection{Application to one-bit compressed sensing}
\label{sec:SRHT}

One good reason for studying non-Gaussian random hyperplane tessellations of arbitrary sets comes from signal recovery problems involving quantized measurements. By \emph{quantization} we mean converting analog measurements of a signal into a finite number of bits. This essential step is part of any signal processing procedure and allows one to digitally transmit, process, and reconstruct signals. The area of \emph{quantized compressed sensing} investigates how to design a measurement procedure, quantizer, and reconstruction algorithm that together recover low-complexity signals---such as signals that have a sparse representation in a given basis. An efficient system has to be able to reconstruct signals based on a minimal number of measurements, each of which is quantized to the smallest number of bits, and to do so via a computationally efficient reconstruction algorithm. In addition, the system should be reliable: it should be robust to both pre-quantization noise (noise in the analog measurements process) and post-quantization noise (bit corruptions that occur during the quantization process).
\par
Our interest is in the popular \emph{one-bit compressed sensing model}, in which one observes quantized measurements of the form
\begin{equation}
\label{eqn:1-bitModel}
q=\text{sign}(Ax + \noise + \thres),
\end{equation}
where $A\in\R^{m\ti n}$, $m\ll n$, $\text{sign}$ is the sign function applied element-wise, $\noise\in \R^m$ is a vector modelling the noise in the analog measurement process and $\thres\in \R^m$ is a (possibly random) vector consisting of quantization thresholds. We restrict ourselves to memoryless quantization, meaning that the thresholds are set in a non-adaptive manner. In this case,
the one bit quantizer $\text{sign}(\cdot + \thres)$ can be implemented efficiently in practice, and because of its efficiency it has been very popular in the engineering literature---especially in applications in which analog-to-digital converters represent a significant factor in the energy consumption of the measurement system \green{(see e.g.\ \cite{BoB08,MoH15})}.

\par
In spite of its popularity, there are few rigorous results that show that one-bit compressed sensing is viable: the vast majority of the mathematical literature \green{(see e.g.\ \cite{BFN17,JLB13,KSW16,PlV13lin,PlV13})} has focused on the special case where $A$ is a standard Gaussian matrix, and the practical relevance of such results is limited---Gaussian matrices cannot be realized in a real-world measurement setup. As an additional difficulty, it is well known that one-bit compressed sensing may perform poorly outside the Gaussian setup. In fact, it can very easily \emph{fail}, even if the measurement matrix is known to perform optimally in `unquantized' compressed sensing. For example, if the threshold vector $\thres=0$, there are $2$-sparse vectors that cannot be distinguished based on their one-bit Bernoulli measurements (see Figure~\ref{fig:BernoulliShift}).

\par

As an application of the new hyperplane tessellation results described in the previous section, we show that one-bit compressed sensing can actually perform well in scenarios that are far more general than the Gaussian setting. What makes all the difference is the rather striking effect that dithering (that is, adding well-designed `noise' to the measurements before quantizing) has on the one-bit quantizer. Indeed, we show that thanks to dithering, accurate recovery from one-bit measurements is possible even if the measurement vectors are drawn from a heavy-tailed distribution. Moreover, the recovery results we establish are robust to both adversarial and potentially heavy-tailed stochastic noise on the analog measurements, as well as to adversarial bit corruptions that may occur during quantization. In what follows we explain why dithering has such an effect: the geometric interpretation of dithering leads to random tessellations that can be used to approximate distances between signals and the ability to approximate distances has a crucial impact on the performance of recovery procedures.

\par

To understand the connection between hyperplane tessellations and signal recovery from one-bit quantized measurements, let us first assume that no bit corruptions occur in the quantization process; and that there is no pre-quantization noise ($\noise=0$). In this case, we observe $q=\text{sign}(Ax + \thres)$. If we let $X_1,\ldots,X_m$ denote the rows of $A$ and $\tau_1,\ldots,\tau_m$ the entries of $\thres$, then $q$ exactly encodes the cell of the hyperplane tessellation in which the signal $x$ is located. A popular strategy to recover $x$ is to search for a vector $x^{\#}\in T$ that is \emph{quantization consistent}, i.e., $q=\sign(Ax^{\#}+\thres)$. For instance, if $T=\Sigma_{s,n}$, the set of all $s$-sparse vectors in the unit ball, then we can find such a vector by solving
\begin{equation} \label{eq:prog2}
\min_{z \in \R^n} \|z\|_0 \qquad \text{s.t.} \qquad q = \sign(Az+\thres), \  \|z\|_2\leq \red{1}.
\end{equation}
Geometrically, a quantization consistent vector is simply a vector lying in the same cell as $x$. We can ensure that $\|x^{\#}-x\|_2\leq \rho$ by showing that $\|x-y\|_2\leq \rho$ for \emph{any} $y\in T$ located in the same cell as $x$. Since we have no further information on the identity of the cell in which $x$ is located, one has to ensure that any pair of points in $T$ located in the same cell are at distance at most $\rho$ from each other, i.e., the hyperplanes $H_{X_i,\tau_i}$ must induce a \emph{$\rho$-uniform tessellation} of $T$. Phrased differently, if $x,y\in T$ are at distance at least $\rho$, then that fact must be exhibited by the hyperplanes $H_{X_i,\tau_i}$: at least one of the hyperplanes must separate $x$ and $y$. Thus, given a $\rho$-uniform tessellation of $T$, one can uniformly recover signals from $T$ using only $\sign(Ax+\thres)$ as data. Moreover, the reverse direction is clearly true: the degree of accuracy in uniform recovery results in $T$ is determined by the largest diameter (in $T$) of a cell of the tessellation endowed by the hyperplanes $H_{X_i,\tau_i}$.

\par

Unfortunately, even if $(H_{X_i,\tau_i})_{i=1}^m$ induces a uniform tessellation of $T$ there is still the question of pre- and post-quantization noise one has to contend with. To understand the effect of post-quantization noise (i.e., bit corruptions that occur during quantization), assume that one observes a corrupted sequence of bits $q_{\text{corr}} \in \{-1,1\}^m$, where the $i$-th bit being corrupted means that instead of receiving $q_i=\sign(\langle X_i,x\rangle +\tau_i)$ from the quantizer, one observes $(q_{\text{corr}})_i=-\sign(\langle X_i,x\rangle +\tau_i)$; thus, one \red{is led to believe} that $x$ is on the `wrong side' of the $i$-th hyperplane $H_{X_i,\tau_i}$. As a consequence, recovery methods that search for a quantization consistent vector can easily fail even if a single bit is corrupted. For instance, the program \eqref{eq:prog2} (with $q$ replaced by $q_{\text{corr}}$) will in the best case scenario search for a vector in the wrong cell of the tessellation, and in the worse case the corrupted bit may cause a conflict and there will be no sparse vector $z$ satisfying $q_{\text{corr}}=\sign(Az+\thres)$ (see Figure~\ref{fig:bitCorr} for an illustration).

\begin{center}
\begin{figure}[htp]
\includegraphics[scale=0.8,trim={125 550 0 136},clip]{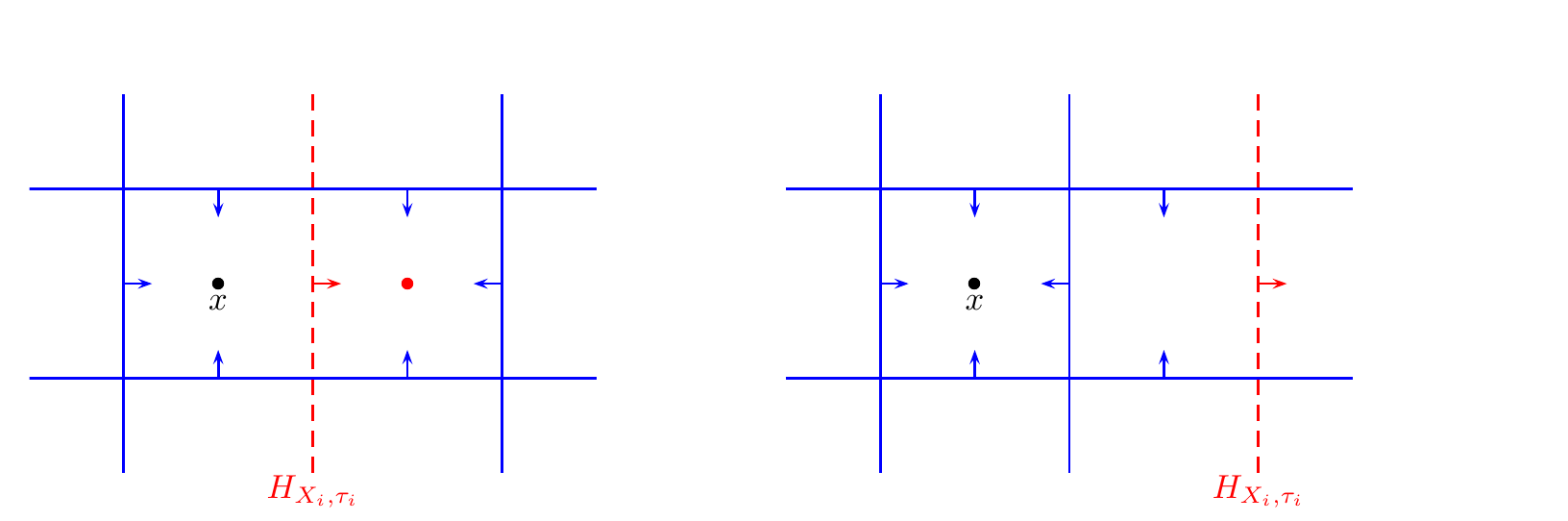}
\caption{The effect of a bit corruption associated with the dashed, red hyperplane $H_{X_i,\tau_i}$. Either the bit corruption leads the program \eqref{eq:prog2} (with $q$ replaced by $q_{\text{corr}}$) to search in the wrong cell of the tessellation marked by the red dot (the picture on the l.h.s.) or causes the program to be infeasible (the picture on the r.h.s.).}
\label{fig:bitCorr}
\end{figure}
\end{center}
The effect of pre-quantization noise (i.e., noise in the analog measurement process) is equally problematic: noise simply causes a parallel shift of the hyperplane $H_{X_i,\tau_i}$, and one has no control over the size of this `noise-induced' shift. Again, the recovery program \eqref{eq:prog2} \green{(with $q=\sign(Ax+\noise+\thres)$)} can easily fail if pre-quantization noise is present (see Figure~\ref{fig:bitCorrNoise}).
\par
\begin{center}
\begin{figure}[htp]
\includegraphics[scale=0.8,trim={125 550 0 136},clip]{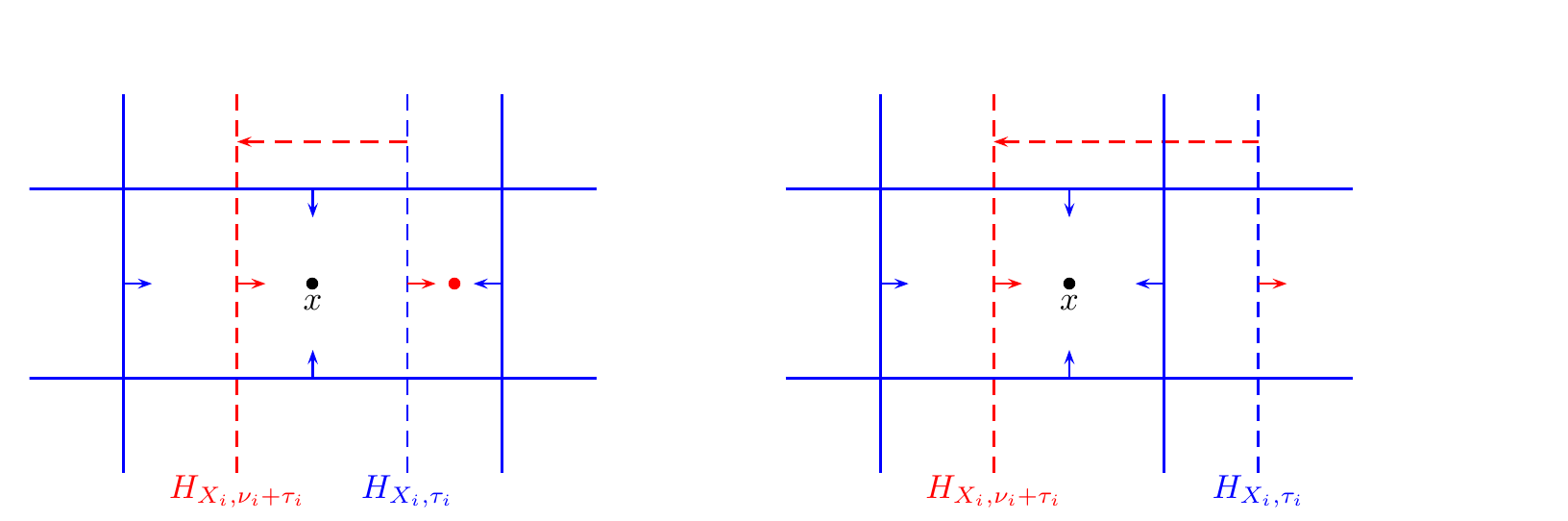}
\caption{The effect of a noise-induced parallel shift of the dashed, blue hyperplane $H_{X_i,\tau_i}$ onto the dashed, red hyperplane \green{$H_{X_i,\nu_i+\tau_i}$}. The program \eqref{eq:prog2} (with $q=\sign(Ax+\noise+\thres)$) searches for a vector $z$ with $\sign(\langle X_i,z\rangle+\tau_i)=\sign(\langle X_i,z\rangle+\nu_i+\tau_i)$. This means that the program incorrectly searches for a solution located to the right of the dashed, blue hyperplane $H_{X_i,\tau_i}$; as a consequence, a solution is found in the wrong cell of the tessellation marked by the red dot (the picture on the l.h.s) or it can even happen that no feasible point exists (the picture on the r.h.s).}
\label{fig:bitCorrNoise}
\end{figure}
\end{center}

One possible way of overcoming this `infeasibility problem' due to noise is by designing a recovery program that is stable: its output does not change by much even if some of the given bits are misleading. For example, one may try search for a vector $z\in T$ whose uncorrupted quantized measurements $\sign(Az+\noise+\thres)$ are closest to the observed corrupted vector $q_{\text{corr}}$. However, since one does not have access to $\noise$, one can only try to match its proxy $\sign(Az+\thres)$ to $q_{\text{corr}}$, i.e. to solve
\begin{equation}
\label{eqn:bitMatchIntro}
\min_{z\in \R^n} d_H(q_{\text{corr}},\sign(Az+\thres)) \qquad {\text{s.t.}} \qquad z\in T,
\end{equation}
\green{where $d_H$ denotes the Hamming distance.} In the context of sparse recovery, the latter program is
\begin{equation}
\label{eqn:bitMatch}
\min_{z\in \R^n} d_H(q_{\text{corr}},\sign(Az+\thres)) \qquad {\text{s.t.}} \qquad \red{\|z\|_0\leq s}, \ \|z\|_2\leq \red{1}.
\end{equation}
To ensure that \red{\eqref{eqn:bitMatchIntro}} yields an accurate reconstruction, the uniform tessellation has to be finer than in the corruption-free case: even if some signs are `flipped', the distance between points in the resulting cell and points in the true one should still be small. Our hyperplane tessellation results ensure this: for any $x,y \in T$ that are at least $\rho$-separated there are \emph{many} hyperplanes that separate the two points---of the order of $\|x-y\|_2 m$. Thus, even after corrupting $\simeq \rho m$ bits one may still detect that $x$ and $y$ are `far away' from one another.

\par

Finally, although \eqref{eqn:bitMatchIntro} can guarantee robust signal recovery, there are no guarantees that it can be solved efficiently. In addition, since \eqref{eqn:bitMatchIntro} matches $\sign(Az+\thres)$, rather than $\sign(Az+\noise+\thres)$, to $q_{\text{corr}}$, it is still quite sensitive to pre-quantization noise. Both problems can be mended by convexification. Indeed, observe that
$$
d_H(q_{\text{corr}},\sign(Az+\noise+\thres)) = \frac{1}{2}\sum_{i=1}^m (1-(q_{\text{corr}})_i\sign(\langle X_i,z\rangle+\nu_i+\tau_i)).
$$
One may relax this objective function by replacing $\sign(\langle X_i,z\rangle+\nu_i+\tau_i)$ by $\langle X_i,z\rangle+\nu_i+\tau_i$ and \red{relax the constraint $z\in T$ to $z\in \co(T)$} leading to the convex program
\begin{equation*}
\min_{z\in \R^n} \frac{1}{2}\sum_{i=1}^m (1-(q_{\text{corr}})_i(\langle X_i,z\rangle+\nu_i+\tau_i)) \qquad \text{s.t.} \qquad z\in \co(T).
\end{equation*}
An equivalent formulation of this program, which only requires the known data $q_{\text{corr}}$ and $A$, is
\begin{equation}
\label{eqn:maxCorPV}
\max_{z\in \R^n} \frac{1}{m}\langle q_{\text{corr}},Az\rangle \qquad \text{s.t.} \qquad \red{z\in \co(T)}.
\end{equation}
This program was proposed in \cite{PlV13} and in what follows we explore a regularized version of \eqref{eqn:maxCorPV}: for $\lambda>0$ we consider
\begin{equation}
\label{eqn:convTintro}
\max_{z\in \R^n} \frac{1}{m}\langle q_{\text{corr}},Az\rangle - \frac{1}{2\la}\|z\|_2^2 \qquad {\text{s.t.}} \qquad z\in \co(T);
\end{equation}
in the context of sparse recovery, this corresponds to the tractable program
\begin{equation*}
\max_{z\in \R^n} \frac{1}{m}\langle q_{\text{corr}},Az\rangle - \frac{1}{2\la}\|z\|_2^2 \qquad \text{s.t.} \qquad \|z\|_1\leq \red{\sqrt{s}}, \ \|z\|_2\leq \red{1}.
\end{equation*}
Let us formulate our main signal recovery results, which are direct outcomes of the results on random tessellations.

Fix a target reconstruction error $\rho$, recall that the quantization thresholds $\tau_i$ are i.i.d.\ uniformly distributed in $[-\la,\la]$, assume that the entries $\nu_i$ of $\noise$ are i.i.d.\ copies of a random variable $\nu$ and that at most $\beta m$ of the bits are arbitrarily corrupted during quantization, i.e., the observed corrupted vector $q_{\text{corr}}$ satisfies $d_H(q_{\text{corr}},q)\leq \beta m$. The adversarial component of the pre-quantization noise $\nu$ is $|\E \nu |$, $\sigma^2$ is its variance and $\|\nu\|_{L^2}$ is its $L^2$ norm. We write $T_r = (T-T) \cap r B_2^n$ for any $r>0$.

\par

Our first recovery result concerns the recovery program \eqref{eqn:bitMatchIntro} in the $L$-subgaussian scenario, in which the rows $X_i$ of $A$ are i.i.d.\ copies of a symmetric, isotropic, $L$-subgaussian vector $X$. In addition, we assume that $\nu$ is $L$-subgaussian: for every $p \geq 2$, $\|\nu\|_{L^p} \leq L \sqrt{p}\|\nu\|_{L^2}$.

\begin{Theorem} \label{thm:mainIntroNC-sub}
There exist constants $c_0,\ldots,c_4>0$ depending only on $L$ such that the following holds. Let $T \subset RB_2^n$,  set $\lambda \geq c_0 (R + \|\nu\|_{L^2}) + \rho$ and put $r=c_1 \rho/\sqrt{\log(e\lambda/\rho)}$. Assume that
$$
\blue{m \geq c_2 \lambda \left( \frac{\ell_*^2(T_r)}{\rho^3} + \frac{\log {\mathcal N}(T, r)}{\rho}\right),}
$$
and that $|\E \nu| \leq c_3\rho$, $\sigma \leq c_3 \rho/\sqrt{\log(e \lambda/\rho)}$ and $\beta \leq c_3 \rho/\lambda$.

Then with probability at least $1-10\exp(-c_4 m \rho/\lambda)$, for every $x \in T$,
any solution $x^{\#}$ of \eqref{eqn:bitMatchIntro} satisfies $\|x^{\#}-x\|_2 \leq \rho$.
\end{Theorem}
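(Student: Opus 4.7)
The proof is by contradiction. Fix $x\in T$, let $x^\#\in T$ minimise \eqref{eqn:bitMatchIntro}, and suppose $\|x^\#-x\|_2>\rho$. Set $h(z):=\sign(Az+\thres)$, so that $q=\sign(Ax+\noise+\thres)$ and $d(x,x^\#)=m^{-1}d_H(h(x),h(x^\#))$ is the noise-free quantity governed by the tessellation theorems. Minimality of $x^\#$ yields $d_H(q_{\text{corr}},h(x^\#))\leq d_H(q_{\text{corr}},h(x))$, so two applications of the triangle inequality in Hamming distance give
\begin{equation*}
m\,d(x,x^\#)\leq 2d_H(q_{\text{corr}},h(x))\leq 2\beta m + 2N(x),
\end{equation*}
where $N(x):=|\{i:\sign(\langle X_i,x\rangle+\tau_i)\neq\sign(\langle X_i,x\rangle+\nu_i+\tau_i)\}|$ counts the noise-induced sign flips at $x$. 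The plan is therefore to combine a tessellation lower bound $d(x,x^\#)\gtrsim\rho/\lambda$ with the hypothesis $\beta\leq c_3\rho/\lambda$ and a uniform bound $\sup_{x\in T}N(x)/m\lesssim\rho/\lambda$; for sufficiently small absolute constants this yields the desired contradiction.

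For the tessellation lower bound, since $\lambda\geq c_0(R+\|\nu\|_{L^2})+\rho\geq c_0 R$, one has $T\subset(\lambda/c_0)B_2^n$, and the refined subgaussian tessellation result (Theorem~\ref{thm:tess-subgaussian}), applied with shift parameter $\lambda$ and radius $\lambda/c_0$ at the local scale $r=c_1\rho/\sqrt{\log(e\lambda/\rho)}$, yields $d(x,y)\geq c\|x-y\|_2/\lambda$ uniformly over $x,y\in\co(T)$ separated by at least $\rho$. The sample size assumption in the hypothesis is precisely what this local form of the theorem demands, and the failure probability is $\lesssim\exp(-cm\rho/\lambda)$. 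Thus $d(x,x^\#)\geq c\rho/\lambda$ on the good event.

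The main obstacle is the uniform control of $N(x)/m$. A noise-induced flip at index $i$ occurs iff $\tau_i$ lies in an interval of length $|\nu_i|$, so
\begin{equation*}
N(x)\leq\sum_{i=1}^m\mathbbm{1}\{|\langle X_i,x\rangle+\tau_i|\leq|\nu_i|\}.
\end{equation*}
For fixed $x$, conditioning on $(X_i,\nu_i)$ and using uniformity of $\tau_i$ on $[-\lambda,\lambda]$, each summand has conditional expectation at most $|\nu_i|/\lambda$, so $\E N(x)/m\leq\E|\nu|/\lambda\leq(|\E\nu|+\sigma)/\lambda\lesssim\rho/\lambda$ under the hypotheses on $\nu$. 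To upgrade this to a uniform bound I would truncate the noise at level $t_0\asymp\|\nu\|_{L^2}\sqrt{\log(e\lambda/\rho)}$: by subgaussianity of $\nu$ and a Chernoff bound, at most $\lesssim m\rho/\lambda$ indices satisfy $|\nu_i|>t_0$ with the required probability, while on the remaining indices it suffices to bound $\sup_{x\in T}\frac{1}{m}\sum_i\mathbbm{1}\{|\langle X_i,x\rangle+\tau_i|\leq t_0\}$. The latter is handled by the same subgaussian chaining and small-ball machinery driving Theorem~\ref{thm:tess-subgaussian}: an $r$-net of $T$ controls the indicator process at each net point, and the fluctuation between $x$ and its nearest net point is absorbed into the $\rho/\lambda$ budget thanks to the choice of $r$. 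This is precisely why both the sample-size bound and the variance hypothesis $\sigma\leq c_3\rho/\sqrt{\log(e\lambda/\rho)}$ carry the same $\sqrt{\log(e\lambda/\rho)}$ factor.
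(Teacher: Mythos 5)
Your high-level structure is sound and in fact takes a mildly different route than the paper. The paper proves a modified tessellation lower bound on the ``noisy-vs-noise-free'' quantity $d_H\bigl(\sign(Ax+\noise+\thres),\sign(Az+\thres)\bigr)$, which requires a new well-separation notion (Definition~\ref{def:noisy-large-margin}), a new individual-separation theorem (Theorem~\ref{thm:noisy-individual-separation}) and Lemma~\ref{lemma:norm-equivalence-noisy}. You instead chain two triangle inequalities through $q_{\text{corr}}$ and $q$ so that the only tessellation lower bound you need is the fully noise-free one on $d_H\bigl(\sign(Ax+\thres),\sign(Ax^{\#}+\thres)\bigr)$, i.e.\ Theorem~\ref{thm:tess-subgaussian} applied directly (to $T$, not $\co(T)$: both $x$ and $x^\#$ lie in $T$, so there is no need to invoke the convex-hull version). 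That genuinely reduces what needs to be proved: once one has the uniform bound on the noise-induced flip count $N(x)$, your decomposition $d_H(h(x),h(x^\#))\le 2\beta m+2N(x)$ avoids the whole ``noisy separation'' apparatus of Section~4, at the cost only of worse absolute constants.

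The one genuine slip is the truncation level $t_0\asymp\|\nu\|_{L^2}\sqrt{\log(e\lambda/\rho)}$. For the ``remaining indices'' step you then need $\sup_{x\in T}\frac1m\sum_i\IND\{|\inr{X_i,x}+\tau_i|\le t_0\}\lesssim\rho/\lambda$, which (via the net argument that you correctly identify with the machinery behind Theorem~\ref{thm:tess-subgaussian}) requires $t_0\lesssim\rho$, since each summand has mean $\le t_0/\lambda$. But under the stated hypotheses $\|\nu\|_{L^2}\le\bigl(|\E\nu|^2+\sigma^2\bigr)^{1/2}\lesssim c_3\rho$, so your $t_0$ is only $\lesssim c_3\rho\sqrt{\log(e\lambda/\rho)}$, which exceeds $\rho$ for large $\lambda/\rho$, no matter how small the fixed constant $c_3$ is. The fix is to truncate at a fixed multiple $c\rho$ of $\rho$ (as the paper does, using $\rho/4$). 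The conditions $|\E\nu|\le c_3\rho$ and $\sigma\le c_3\rho/\sqrt{\log(e\lambda/\rho)}$ are then exactly what make $\bP(|\nu|\ge c\rho)\le\bP(|\nu-\E\nu|\ge c\rho/2)\lesssim\exp(-c'\rho^2/\sigma^2)\le\rho/\lambda$ for $c_3$ small: the $\sqrt{\log}$ in the hypothesis on $\sigma$ is there to make the subgaussian tail of $\nu-\E\nu$ small at the \emph{fixed} scale $\rho$, not to set the truncation level itself. Relatedly, it is not quite right to say the two $\sqrt{\log(e\lambda/\rho)}$ factors (in the covering radius $r$ and in the bound on $\sigma$) ``carry the same factor'' for a single reason: the one in $r$ comes from the subgaussian oscillation bound (Theorem~\ref{thm:monotone-subgaussian}) needed for the uniformisation, whereas the one in $\sigma$ comes from demanding $\bP(|\nu|\ge c\rho)\lesssim\rho/\lambda$. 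With the truncation level corrected, the rest of your argument goes through.
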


To put Theorem~\ref{thm:mainIntroNC-sub} in some context, consider an arbitrary $T \subset B_2^n$ and assume $\|\nu\|_{L^2} \leq 1$, \red{so that} $\lambda$ is a constant that depends only on $L$. By Sudakov's inequality,
\begin{equation}
\label{eqn:sudakovAppl}
\log {\mathcal N}(T, r) \leq c\frac{\ell_*^2(T)}{r^2} \leq c(L) \frac{\log(e/\rho)}{\rho^2}  \ell_*^2(T),
\end{equation}
\red{and trivially} $\ell_*(T_r) \leq \ell_*(T)$, which means that a sample size of
$$
m =c^\prime(L) \frac{\log(e/\rho)}{\rho^3} \ell_*^2(T)
$$
suffices for recovery. In the special case of $T=\Si_{s,n}$, the subset of $B_2^n$ consisting of $s$-sparse vectors, a much better estimate is possible. Indeed, it is standard to verify that there is an absolute constant $c$ such that for any $1 \leq s \leq n$,
\begin{equation}
\label{eqn:covNumEstSparse}
\ell_*(\Si_{s,n}) \simeq \sqrt{s\log(en/s)} \ \ \ {\rm and} \ \ \ \log {\mathcal N}(\Si_{s,n},r) \leq cs\log\left(\frac{en}{sr}\right).
\end{equation}
Moreover, since $(\Si_{s,n} - \Si_{s,n}) \cap r B_2^n \subset r \Si_{2s,n}$ it follows that
$$
\ell_*(T_r) \leq c r \sqrt{s \log(en/s)} = c(L) \frac{\rho}{ \sqrt{\log(e/\rho)}} \cdot \sqrt{s \log(en/s)},
$$
implying that a sample size of
\begin{equation}
\label{eqn:optScalSparse}
m =c'(L) \rho^{-1} s\log\left(\frac{en}{s\rho}\right)
\end{equation}
guarantees that with high probability one can recover any $s$-sparse vector in $B_2^n$ with accuracy $\rho$ via \eqref{eqn:bitMatchIntro}.
\par

\vskip0.4cm

In the heavy-tailed scenario, where we only assume that $X$ is isotropic, symmetric, and satisfies the $L^1$-$L^2$ equivalence \eqref{eqn:L1L2equivIntro}, we obtain the following result. In this setting, we assume that $\nu$ has finite variance $\si^2$ and satisfies an $L^1$-$L^2$ equivalence.
\begin{Theorem} \label{thm:mainIntroNC-heavy}
There exist constants $c_0,\ldots,c_4>0$ depending only on $L$ such that the following holds. Assume that $T \subset R B_2^n$. Let $\lambda \geq c_0 (R + \|\nu\|_{L^2})+\rho$, \blue{set $r=c_1\rho^2/\lambda$, and suppose that $m$ satisfies}
\begin{equation}
\label{eqn:mainIntroNC-heavyCond}
\blue{m \geq c_2 \left(\left(\frac{\lambda E(T_r)}{\rho^2}\right)^2 + \lambda \frac{\log {\mathcal N}(T,r)}{\rho}\right).}
\end{equation}
Assume further that $|\E \nu| \leq c_3\rho$, $\sigma \leq c_3 \rho^{3/2}/\sqrt{\lambda}$ and $\beta \leq c_3 \rho/\lambda$.

Then with probability at least $1-10\exp(-c_4 m (\rho/\lambda)^2)$, for every $x \in T$,
any solution $x^{\#}$ of \eqref{eqn:bitMatchIntro} satisfies $\|x^{\#}-x\|_2 \leq \rho$.
\end{Theorem}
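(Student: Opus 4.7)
The plan is to combine the lower estimate on the tessellation Hamming distance from Theorem~\ref{thm:tessHeavyIntro} with a feasibility bound for \eqref{eqn:bitMatchIntro}, which together force $\|x^\#-x\|_2\leq\rho$ after accounting for the corruption budget $\beta$ and the pre-quantization noise $\noise$. Write $q(z)=\sign(Az+\thres)$ and $q^{\mathrm{noise}}(z)=\sign(Az+\noise+\thres)$, and set
$$
N(z):=\bigl|\{i:\sign(\langle X_i,z\rangle+\tau_i+\nu_i)\neq\sign(\langle X_i,z\rangle+\tau_i)\}\bigr|
$$
for the number of bits that the noise alone flips on signal $z$. Since $d_H(q_{\mathrm{corr}},q^{\mathrm{noise}}(x))\leq\beta m$ and $x^\#$ minimizes $d_H(q_{\mathrm{corr}},q(\cdot))$ over $T$, two invocations of the triangle inequality in Hamming distance give
$$
d(x,x^\#)=\tfrac{1}{m}d_H(q(x),q(x^\#))\leq 2\beta+\tfrac{2}{m}N(x).
$$

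For the matching lower bound I would apply Theorem~\ref{thm:tessHeavyIntro} to $T$, viewed as a subset of $(\lambda/c_0)B_2^n$ (which is permitted because $\lambda\geq c_0(R+\|\nu\|_{L^2})+\rho$), at the scale $r=c_1\rho^2/\lambda$. The sample size hypothesis \eqref{eqn:mainIntroNC-heavyCond} is the rescaled version of the one required by the tessellation theorem, so on an event of probability at least $1-8\exp(-cm(\rho/\lambda)^2)$ the lower estimate $d(x,x^\#)\geq c_3\|x-x^\#\|_2/\lambda$ holds whenever $\|x-x^\#\|_2\geq\rho$. Combining this with the upper bound and the assumption $\beta\leq c_3\rho/\lambda$, the proof reduces to showing that
$$
\sup_{z\in T}\tfrac{1}{m}N(z)\;\lesssim\;\rho/\lambda
$$
on an event of comparable probability; then $\|x-x^\#\|_2>\rho$ will contradict the two displays above.

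The expected value of $N(z)/m$ is easy to bound: since $\tau_i$ has density $(2\lambda)^{-1}$ on $[-\lambda,\lambda]$ and is independent of $(X_i,\nu_i)$, the flip indicator has conditional probability at most $|\nu_i|/(2\lambda)$ given $(X_i,\nu_i)$, and the hypotheses $|\E\nu|\leq c_3\rho$ and $\sigma\leq c_3\rho^{3/2}/\sqrt{\lambda}$ (using $\rho\leq\lambda$) yield $\E|\nu|\leq\|\nu\|_{L^2}\lesssim\rho$, hence $\E N(z)/m\lesssim\rho/\lambda$. Uniform deviation control is the main technical step: I would pass to an $r$-net of $T$ and use Bernstein's inequality for the bounded indicators (which have variance $\lesssim\rho/\lambda$) at each point, taking a union bound over the net via $\log\mathcal{N}(T,r)$, which is already absorbed by \eqref{eqn:mainIntroNC-heavyCond}. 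Off the net, dominate $\IND\{|\langle X_i,z\rangle+\tau_i|\leq|\nu_i|\}$ by the inflated indicator $\IND\{|\langle X_i,z\rangle+\tau_i|\leq|\nu_i|+|\langle X_i,z-z'\rangle|\}$; the density bound on $\tau_i$ converts the extra slack into a term controlled by the Bernoulli-process quantity $E(T_r)/\lambda$ appearing in \eqref{eqn:mainIntroNC-heavyCond}.

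The principal obstacle is this last uniform control of $N(z)/m$ in the heavy-tailed regime: the indicators are discontinuous, so direct chaining is unavailable and must be replaced by the combination of the smoothing provided by the density of $\tau_i$ and an $L^1$-$L^2$ equivalence on $\langle X_i,z-z'\rangle$ of the same flavour that powers Theorem~\ref{thm:tessHeavyIntro}. Once this is in hand, the intersection of the two events has probability at least $1-10\exp(-c_4 m(\rho/\lambda)^2)$ after adjusting constants, and on that event $2\beta+2N(x)/m\leq c_3\rho/\lambda$, yielding $\|x^\#-x\|_2\leq\rho$ as claimed.
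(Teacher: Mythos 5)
Your route is sound and genuinely differs from the paper's in one substantive way. The paper proves a \emph{mixed} lower bound \eqref{eq:in-noisy-lower} on $|\{i : \sign(\inr{X_i,x}+\nu_i+\tau_i) \neq \sign(\inr{X_i,z}+\tau_i)\}|$ directly, by building a noisy analogue of the well-separation machinery (Definition~\ref{def:noisy-large-margin}, Theorem~\ref{thm:noisy-individual-separation}, Lemma~\ref{lemma:norm-equivalence-noisy}). You instead apply two extra triangle inequalities to reduce everything to the \emph{noiseless} Hamming distance $d_H(q(x),q(x^\#))$, which lets you reuse the original tessellation lower bound unchanged. Both arguments need the same two-sided inputs --- a uniform separation lower bound on $T$ and a uniform upper bound on the noise-flipped-bit count $N(z)$ --- and the bookkeeping is comparable (the paper's version of your step 1 gives $d_H(q^{\mathrm{noise}}(x),q(x^\#))\leq 2\beta m + N(x)$, essentially identical to your bound), so yours is a valid and, in fact, more economical decomposition since it avoids the noisy well-separation lemmas altogether.

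Two points need fixing. First, you should invoke the lower bound of Theorem~\ref{thm:tess} rather than Theorem~\ref{thm:tessHeavyIntro}: the latter is stated for $U=\co(T)$ with complexity parameters $E(U_r)$ and $\log{\mathcal N}(U,r)$, whereas the hypothesis \eqref{eqn:mainIntroNC-heavyCond} is phrased in terms of $E(T_r)$ and $\log{\mathcal N}(T,r)$, which are exactly the quantities appearing in Theorem~\ref{thm:tess} (and, crucially, the lower-bound half of that theorem requires no metric convexity, so it applies to $T$ directly). Passing to the convex hull could inflate the covering number and break the reduction. Second, the chaining-plus-Bernstein argument you sketch for $\sup_{z\in T} N(z)/m$ is heavier than needed and is where your sketch is vaguest. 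Observe that the $i$-th bit is flipped by the noise only if $|\inr{X_i,z}+\tau_i|<|\nu_i|$, so
$$
N(z) \leq |\{i : |\inr{X_i,z}+\tau_i| < \rho/2\}| + |\{i : |\nu_i| \geq \rho/4\}|.
$$
The first count is controlled uniformly over $T$ as a direct byproduct of the proof of Theorem~\ref{thm:single-to-local}: combine the dithering estimate \eqref{eq:in-proof-single-to-local-1} on the net with the local oscillation bound \eqref{eq:osc-in-proof-tess}. The second count requires only a scalar Chernoff bound, using the hypotheses on $|\E\nu|$ and $\sigma$. This is precisely how the paper establishes \eqref{eq:in-noisy-upper-x}, and it yields $\sup_{z\in T}N(z)/m\lesssim \rho/\lambda$ with the advertised probability. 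With that replacement and the separation constant in the lower bound tracked so that $2\beta + 2\sup_z N(z)/m$ falls strictly below it, the argument closes.
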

To illustrate this result, assume $\|\nu\|_{L^2} \leq 1$ and consider $T=\Si_{s,n}$, so that $\lambda$ is a constant depending only on $L$. Since $T_r\subset r \Si_{2s,n}$, the first term in \eqref{eqn:mainIntroNC-heavyCond} is bounded by $E^2(\Si_{2s,n})$. The latter can be bounded by $\ell_*^2(\Si_{2s,n})$ under the assumptions on $X$ mentioned after Theorem~\ref{thm:tessHeavyIntro}. Taking into account \eqref{eqn:covNumEstSparse}, it follows that even for these heavy-tailed vectors the sample size \eqref{eqn:optScalSparse} is sufficient for recovery.

\par

Let us compare Theorems~\ref{thm:mainIntroNC-sub} and \ref{thm:mainIntroNC-heavy} to existing work. As was mentioned previously, almost all the signal reconstruction results in (memoryless) one-bit compressed sensing concern standard Gaussian measurement matrices, see e.g.\ \cite{DJR17} for an overview. The most closely related work is \cite{JLB13}, which concerns the situation when there is no dithering ($\thres=0$). Recall that in that case it is only possible to recover signals located on the unit sphere. It was shown in \cite[Theorem 2]{JLB13} that if $A\in \R^{m\times n}$ is standard Gaussian and $m\gtrsim \rho^{-1}s\log(n/\rho)$ then, with high probability, any $s$-sparse $x,x' \in S^{n-1}$ for which $\sign(Ax)=\sign(Ax')$ satisfy $\|x-x'\|_2\leq \rho$. In particular, one can approximate $x$ with accuracy $\rho$ by solving the non-convex program
$$
\min_{z\in \R^n} \|z\|_0 \qquad \text{s.t.} \qquad \sign(Ax)=\sign(Az), \ \|z\|_2=1.
$$
In comparison, Theorem~\ref{thm:mainIntroNC-sub} shows that the same result holds in the subgaussian scenario --- and at the same time extends it to sparse vectors in the unit ball and makes it robust to pre- and post-quantization noise. Clearly, such a generalization is possible thanks to the effect of dithering. \green{Remarkably, Theorem~\ref{thm:mainIntroNC-heavy} shows that this result can be extended further to a large class of heavy-tailed measurements. This is the first known recovery result involving quantized heavy-tailed measurements}.

\par

In \cite{BFN17,KSW16} the authors study sparse recovery with Gaussian measurements and introduce standard Gaussian dithering to derive recovery results for sparse vectors in the unit ball. The idea behind these results is to use a `lifting trick': for instance, in \cite{BFN17} one interprets the dithered measurements $\sign(Ax+\tau)$ as $\sign([A \ \tau] [x,1]/\|[x,1]\|_2)$, where $[A \ \tau]$ is obtained by appending $\tau$ to $A$ as an additional column. Since $[A \ \tau]$ is a standard Gaussian again, recovery methods for sparse vectors with unit norm can be used to find an approximation of $[x,1]/\|[x,1]\|_2$ of the form $[x^{\#},1]/\|[x^{\#},1]\|_2$. Afterwards, one can bound $\|x-x^{\#}\|_2$ by the distance between the latter two vectors. Since this lifting argument is based on a reduction to the one-bit compressed sensing with zero thresholds model, it `imports' the strong limitations of that model; in particular, it cannot be used to derive recovery results for a general class of non-Gaussian measurements. In addition, since the recovery methods in \cite{BFN17,KSW16} rely on enforcing quantization consistency, they are not robust to post-quantization noise. In contrast, thanks to our new geometric understanding of the effect of dithering, we find robust recovery results for non-Gaussian measurements matrices and general signal sets.

\vskip0.4cm

Finally, let us present our main recovery result for \eqref{eqn:convTintro}. We only analyze this recovery program in the $L$-subgaussian scenario. We again assume that $\nu$ is $L$-subgaussian with variance $\sigma^2$. We assume for the sake of simplicity that $\E \nu =0$ and denote $U=\conv(T)$ and $U_\rho = (U-U) \cap \rho B_2^n$.
\begin{Theorem} \label{thm:convex-iid-rows}
There exist constants $c_0,\ldots,c_4$ that depend only on $L$ for which the following holds. Let $T \subset R B_2^n$, fix $\rho>0$, set
$$
\lambda \geq c_0 (\sigma + R) \sqrt{\log(c_0/\rho)}
$$
and let $r=c_1 \rho/\log(e\lambda/\rho)$. If $m$ and $\beta$ satisfy
$$
m \geq c_2 \left(\left(\frac{\lambda \ell_*(U_\rho)}{\rho^2}\right)^2 + \red{\lambda^2} \frac{\log {\mathcal N}(T,r)}{\red{\rho^2}}\right), \qquad \beta\sqrt{\log(e/\beta)}=c_3\frac{\rho}{\lambda},
$$
then, with probability at least $1-8\exp(-c_4 m\rho^2/\lambda^2)$, for any $x \in T$ the solution $x^{\#}$ of \eqref{eqn:convTintro} satisfies $\|x^{\#}-x\|_2 \leq\rho$.
\end{Theorem}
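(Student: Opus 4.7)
The plan is to exploit the $1/\lambda$-strong concavity of the objective. Setting $h = x^{\#} - x \in U - U$ and using $\|x^{\#}\|_2^2 - \|x\|_2^2 = 2\langle x, h\rangle + \|h\|_2^2$, the optimality inequality $f_\lambda(x^{\#}) \geq f_\lambda(x)$ rearranges to
$$\frac{1}{2\lambda}\|h\|_2^2 \leq \left\langle \frac{1}{m}A^T q_{\text{corr}} - \frac{x}{\lambda},\, h\right\rangle =: \langle g, h\rangle.$$
The proof reduces to controlling $\langle g, h\rangle$ uniformly: it suffices to show that on a high-probability event, $\langle g, h\rangle \leq \rho^2/(2\lambda)$ holds for every $h \in U - U$ with $\|h\|_2 \leq \rho$, while for $\|h\|_2 > \rho$ the analogous bound strictly dominates $\|h\|_2^2/(2\lambda)$, forcing $\|h\|_2 \leq \rho$.

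The decomposition of $g$ follows the Plan--Vershynin template. Set $u_i = \langle X_i, x\rangle + \nu_i$ and exploit the dithering identity $\E[\sign(u_i + \tau_i) \mid X_i, \nu_i] = T_\lambda(u_i)/\lambda$, where $T_\lambda$ truncates to $[-\lambda,\lambda]$. Writing $q_i$ as $u_i/\lambda$ plus the truncation defect plus the centered quantization noise, and including the corruption $e = q_{\text{corr}} - q$ (with $\|e\|_0 \leq 2\beta m$, $\|e\|_\infty \leq 2$), one obtains the five-term split
$$g \,=\, \tfrac{1}{\lambda}(\hat\Sigma - I)x + \tfrac{1}{m\lambda}\textstyle\sum_i \nu_i X_i + \tfrac{1}{m\lambda}\sum_i(T_\lambda(u_i)-u_i)X_i + \tfrac{1}{m}\sum_i(q_i - \E[q_i|X_i,\nu_i])X_i + \tfrac{1}{m}\sum_i e_i X_i.$$
The first four terms will be bounded uniformly over $h \in (U-U)_\rho := (U-U)\cap \rho B_2^n$, while the last needs a separate sparse-support argument.

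The individual bounds rely on standard subgaussian tools. The matrix-deviation inequality for isotropic $L$-subgaussian ensembles controls the $\hat\Sigma - I$ contribution in terms of $\ell_*(U_\rho)/\sqrt{m}$. The multiplier term $\tfrac{1}{m\lambda}\sum \nu_i X_i$ is handled by a Chevet-type subgaussian multiplier inequality, producing a bound of order $\sigma\,\ell_*(U_\rho)/(\lambda\sqrt{m})$. The truncation term vanishes except on the tail event $\{|u_i|>\lambda\}$, whose probability the choice $\lambda \geq c_0(\sigma+R)\sqrt{\log(c_0/\rho)}$ renders polynomially small in $\rho$, making its contribution negligible. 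The quantization term uses that conditional on $(X_i,\nu_i)$ the summands are centered and bounded by $2$; Gin\'e--Zinn symmetrization and a generic-chaining estimate for the resulting Rademacher process yield $\ell_*(U_\rho)/(\lambda\sqrt{m})$ after integrating out the conditioning. For the corruption term, Cauchy--Schwarz on the support $S$ of $e$ gives
$$|\langle \tfrac{1}{m}\textstyle\sum_i e_i X_i,\, h\rangle| \,\leq\, \tfrac{2}{m}\bigl(\textstyle\sum_{i\in S}\langle X_i, h\rangle^2\bigr)^{1/2}|S|^{1/2},$$
and a uniform restricted-isometry-type bound on all subsets of size $\leq 2\beta m$ (via subgaussian concentration) produces $\lesssim \sqrt{\beta\log(e/\beta)}\,\|h\|_2$. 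The calibration $\beta\sqrt{\log(e/\beta)} \simeq \rho/\lambda$ converts this into $\leq \rho\|h\|_2/\lambda$, which is absorbed into $\|h\|_2^2/(2\lambda)$ whenever $\|h\|_2 \geq \rho$.

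The main technical obstacle is extending the uniform bounds, which are naturally phrased on $(U-U)_\rho$, to arbitrary $h \in U - U$. The remedy is a covering of $T$ at scale $r = c_1\rho/\log(e\lambda/\rho)$: choose an $r$-net $\mathcal{N} \subset T$ with $|\mathcal{N}| = \mathcal{N}(T,r)$, write $h = (x^{\#} - \pi(x^{\#})) + (\pi(x^{\#}) - \pi(x)) + (\pi(x) - x)$ with $\pi$ the nearest-point projection onto $\mathcal{N}$, and treat the two short residuals by the $(U-U)_\rho$-bound while the middle piece, ranging in a set of cardinality $|\mathcal{N}|^2$, is handled by a union bound --- this is the source of the $\log \mathcal{N}(T,r)$ contribution to the sample complexity. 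The logarithmic factor in $r$ compensates for the $\sqrt{\log(e\lambda/\rho)}$ that appears in the subgaussian tail estimates at level $\lambda$ and in the pointwise bound on the restricted piece. Intersecting all concentration events produces the stated probability $1 - 8\exp(-c_4 m\rho^2/\lambda^2)$ and yields $\|h\|_2 \leq \rho$, completing the proof.
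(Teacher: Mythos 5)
Your optimality rearrangement $\tfrac{1}{2\lambda}\|h\|_2^2 \leq \langle g, h\rangle$ with $h=x^\#-x$ and $g=\tfrac{1}{m}A^Tq_{\mathrm{corr}}-x/\lambda$ is equivalent to studying the excess functional $\phi(z)-\phi(x)$ as the paper does, and your five-term split of $g$ via the dithering identity $\E[\sign(u+\tau)]=T_\lambda(u)/\lambda$ is algebraically valid. It is a finer decomposition than the paper's three terms $(1),(2),(3)$: your truncation-defect and $(\hat\Sigma-I)x$ pieces together refine the paper's $(2)$ and $(3)$, and your corruption term matches their $(1)$. However, the proposal has two genuine gaps in the uniformity argument.

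First, the claimed remedy for extending the bounds from $U_\rho$ to all of $U-U$ is both unnecessary and invalid. It is unnecessary because $\langle g,h\rangle$ is linear and $U-U$ is star-shaped about $0$, so the paper's star-shape argument reduces the supremum of $\langle g,h\rangle/\|h\|_2^2$ over $\{h\in U-U:\|h\|_2\geq\rho\}$ directly to a supremum over $U_\rho=(U-U)\cap\rho B_2^n$ at no extra cost. It is invalid because you project $x^\#$ onto a net $\mathcal{N}\subset T$, yet $x^\#\in U=\mathrm{conv}(T)$ need not be anywhere near $T$ (take $T$ the extreme points of a ball), so $\|x^\#-\pi(x^\#)\|_2$ is not controlled by $r$ and the ``short residual'' step fails. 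The covering of $T$ should instead be used, as in the paper, to make the \emph{sign-dependent} pieces of $g$ uniform in $x\in T$.

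Second, and more substantively, the bound on the quantization-noise term $\tfrac{1}{m}\sum_i(q_i-\E[q_i\mid X_i,\nu_i])X_i$ is asserted via ``Gin\'e--Zinn symmetrization and generic chaining,'' but the multipliers $q_i=\sign(\langle X_i,x\rangle+\nu_i+\tau_i)$ vary with $x\in T$. For a single fixed $x$ symmetrization/contraction gives the stated $\ell_*(U_\rho)/\sqrt{m}$; to make it uniform over $x$ one must control how the random sign pattern $\sign(Ax+\nu+\tau)$ changes as $x$ moves. This is exactly what the paper's Lemma~\ref{lemma-strucutre-S_x} provides: for $x$ in an $r''$-ball around a net point $v$ the sign pattern differs from $\sign(Av+\nu+\tau)$ in at most $\sim(r'/\lambda)m$ coordinates, which is why the paper's proof of Theorem~\ref{thm:(2)-in-iid-convex} decomposes $(2)$ into a fixed-multiplier Bernoulli process over $v\in V$ (handled by contraction and majorizing measures) plus a restricted-support term (handled by Theorem~\ref{thm:monotone-subgaussian}). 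Without an analogue of this stability lemma your argument for term~4 does not close, and the $\log\mathcal{N}(T,r)$ dependence in the sample-size condition cannot be obtained. Relatedly, your term~1, a genuine bilinear chaos $\langle(\hat\Sigma-I)x,h\rangle$ with $x\in T$ of scale $R$ and $h\in U_\rho$ of scale $\rho$, is not simply $O(\ell_*(U_\rho)/\sqrt{m})$ — a uniform bound also picks up a contribution from the complexity of $T$ at scale $R$, and you would need to check that the theorem's condition on $m$ (which controls $\log\mathcal{N}(T,r)$, not $\ell_*(T)$ directly) suffices; the paper side-steps this by never isolating $\hat\Sigma-I$, keeping the bounded sign multipliers inside $(2)$ throughout.
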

As an example, let $T=\sqrt{s}B_1^n \cap B_2^n$, the set of approximately $s$-sparse vectors in the Euclidean unit ball, and assume that $\sigma\leq 1$. Observe that $T=U$ and that one may set $\lambda=c_0(L)\sqrt{\log(e/\rho)}$. Also, for $0<\rho \leq 1$, $U_\rho \subset 2(\sqrt{s}B_1^n \cap \rho B_2^n)$, and it is standard to verify that $\ell_*(U_\rho) \simeq \sqrt{s \max\{\log(en\rho^2/s),1\}}$. \red{Taking the estimate \eqref{eqn:sudakovAppl} for $\log {\mathcal N}(T,r)$ into account, it is evident that if}
$$
m=c(L)\frac{s\log(en/s)\red{\log^3(e/\rho)}}{\rho^4}
$$
then with high probability one may recover any $x \in T$ using the convex recovery procedure \eqref{eqn:convTintro}, even in the presence of pre- and post-quantization noise.\par
In the context of Gaussian measurement matrices, Theorem~\ref{thm:convex-iid-rows} improves upon work of Plan and Vershynin \cite{PlV13}, who considered the situation when there is no dithering ($\thres=0$). They introduced the convex program \eqref{eqn:maxCorPV} and proved recovery results \red{for signal sets $T\subset S^{n-1}$} of two different flavours. In a non-uniform recovery setting\footnote{\red{In the uniform recovery setting one attains a high probability event on which recovery is possible for all $x \in T$, whereas in non-uniform recovery the event depends on the signal $x\in T$.}} they showed that $m\gtrsim \rho^{-4}\red{\ell_*^2(T)}$ measurements suffice to reconstruct a fixed signal, even if pre-quantization noise is present and quantization bits are randomly flipped with a probability that is allowed to be arbitrarily close to $1/2$. In the uniform recovery setting, they showed that if $m\gtrsim \rho^{-12}\red{\ell_*^2(T)}$, one can achieve a reconstruction error $\rho$ even if a fraction $\beta=\rho^2$ of the received bits are corrupted in an adversarial manner while quantizing. Theorem~\ref{thm:convex-iid-rows} extends the latter result to subgaussian measurements with a better condition on $m$ and $\beta$, and at the same time incorporates pre-quantization noise and allows the reconstruction of signals that need not be located on the unit sphere.\par
When the measurements are not standard Gaussian, there are very few reconstruction results available. The work \cite{ALP14} generalized the non-uniform recovery results from \cite{PlV13} to subgaussian measurements under additional restrictions. For a fixed $x\in T$ with $T\subset S^{n-1}$ they showed that $m\gtrsim \rho^{-4}\ell_*^2(T)$ suffice to reconstruct $x$ up to error $\rho$ via \eqref{eqn:maxCorPV} provided that either $\|x\|_{\infty}\leq \rho^4$ (meaning that the signal must be sufficiently spread) or the total variation distance between the subgaussian measurements and the standard Gaussian distribution is at most $\rho^{16}$. Theorem~\ref{thm:convex-iid-rows} significantly improves on these results.

\vskip0.4cm

\begin{Remark}
\label{rem:circulant}
At the expense of substantial additional technicalities, the proof strategies developed in this work lead to recovery results for sparse vectors when $A$ is a \blue{random partial circulant matrix} generated by a subgaussian random vector. The latter model occurs in several practical measurement setups, including SAR radar imaging, Fourier optical imaging and channel estimation (see e.g.\ \cite{Rom09} and the references therein). To keep this work accessible to a general audience and clearly expose the main ideas, we choose to defer the additional technical developments needed for the circulant case to a companion work \red{\cite{DiM18b}}.
\end{Remark}

\subsection{Notation}

We use $\|x\|_p$ to denote the $\ell_p$-norm of $x\in \R^n$ and $B_p^n$ denotes the $\ell_p$-unit ball in $\R^n$. For a subgaussian random variable $\xi$ we let
$$\|\xi\|_{\psi_2} := \sup_{p\geq 1}\frac{\|\xi\|_{L^p}}{\sqrt{p}\|\xi\|_{L^2}}.$$ \blue{We use $\mathcal{U}$ to denote the uniform distribution.} For $k\in \N$ we set $[k]=\{1,\ldots,k\}$ and for a set $S$ we let $|S|$ denote its cardinality. $d_H$ is the (unnormalized) Hamming distance on the discrete cube and $\Si_{s,n}=\{x\in \R^n \ : \ \|x\|_0\leq s, \ \|x\|_2\leq 1\}$ is the set of $s$-sparse vectors in the Euclidean unit ball. For $T \subset \R^n$ we set $T_r = (T-T) \cap r B_2^n$  and denote by $\co(T)$ its convex hull. As before, we use $\ell_*(T)$ to denote the Gaussian mean width of $T$ and, for any $r>0$, we denote by ${\mathcal N}(T,r)$ the smallest number of Euclidean balls of radius $r$ that are needed to cover $T$. Finally, $c$ and $C$ denote absolute constants; their value many change from line to line. $c_\alpha$ or $c(\alpha)$ denotes a constant that depends only on the parameter $\alpha$. We write $a\lesssim_{\al} b$ if $a\leq c_{\al} b$, and $a\simeq_{\al} b$ means that both $a\lesssim_{\al} b$ and $a\gtrsim_{\al} b$ hold.

\section{Random tessellations} \label{sec:rand-tess}

This section is devoted to the proof of our main tessellation results, Theorems~\ref{thm:tess} and \ref{thm:tess-subgaussian}, which are generalizations of Theorems~\ref{thm:distances-intro} and \ref{thm:tessHeavyIntro} respectively. Before we formulate the results let us define a mild structural property of a subset of a metric space.
\begin{Definition} \label{def:r-metric-convex}
Let $({\mathcal X},d)$ be a metric space. A set $T \subset {\mathcal X}$ is $(r,\gamma)$-metrically convex in ${\mathcal X}$ if for every $x,y \in T$ there are $z_1,...,z_{\ell} \in {\mathcal X}$ such that
$$
\gamma r \leq d(z_i,z_{i+1}) \leq r \ \ {\rm  and} \ \  \sum_{i=0}^\ell d(z_i,z_{i+1}) \leq \gamma^{-1} d(x,y),
$$
where we set $z_0=x$, $z_{\ell+1} = y$. If ${\mathcal X}=T$, then we say that $T$ is $(r,\gamma)$-metrically convex.
\end{Definition}

The idea behind this notion is straightforward: it implies that controlling `local oscillations' of a function $f$ ensures that it satisfies a Lipschitz condition for long distances. Indeed, assume that $\blue{\sup_{\{w,v \in \mathcal{X} : d(w,v) \leq r\}} |f(w)-f(v)| \leq \kappa}$ and for any $x,y \in T$ that satisfy $d(x,y) \geq 2r$ let $(z_i)_{i=0}^{\ell+1}$ be as in Definition \ref{def:r-metric-convex}. Then
\begin{equation} \label{eq:osc-to-lip}
|f(x)-f(y)| \leq \left|\sum_{i=0}^\ell \left(f(z_i)-f(z_{i+1})\right)\right| \leq \kappa (\ell+1)  \leq \frac{\kappa}{\gamma r} \sum_{i=0}^\ell d(z_i,z_{i+1}) \leq \frac{\kappa}{\gamma^2 r} d(x,y).
\end{equation}
Therefore, $f$ satisfies a Lipschitz condition for long distances with constant $\kappa/\gamma^2 r$.
\par
Observe that if $T$ is \red{a convex} subset of a normed space then it is $(r,1)$-metrically convex for any $r>0$; also, every subset of a normed space is $(r,1)$-metrically convex in its convex hull. Finally, $\Sigma_{s,n}$ is $(r,\gamma)$-metrically convex in $\Si_{2s,n}$ for an absolute constant $\gamma$. We omit the straightforward proofs of these claims.\par

Let us first state our main result in the heavy-tailed scenario. We consider a random vector $X$ that is isotropic, symmetric, and satisfies an $L^1$-$L^2$ norm equivalence: i.e, that for every $t \in \R^n$,
\begin{equation} \label{eqn:L1L2ass}
\|t\|_2= \|\inr{X,t}\|_{L^2} \leq L \|\inr{X,t}\|_{L^1}.
\end{equation}
\begin{Theorem} \label{thm:tess}
There exist constants $c_0,\ldots,\red{c_4}$ that depend only on $L$ for which the following holds. Let $T \subset R B_2^n$ and set $\lambda \geq \blue{c_0} R$. Suppose that $0<r <\rho < \lambda$ satisfy $r \leq \blue{c_1}\rho^2/\lambda$ and assume that
$$
\log{\mathcal N}(T,r) \leq c_2 \frac{m \rho}{\lambda}, \ \ \ \ {\rm and} \ \ \ \ E(T_r) \leq c_2 \frac{\rho^2}{\lambda} \sqrt{m}.
$$
Then with probability at least $1-8\exp(-c_3 m (\rho/\lambda)^2)$, for every $x,y \in T$ that satisfy $\|x-y\|_2 \geq \rho$,
$$
|\{i: \sign(\inr{X_i,x} + \tau_i) \not=\sign(\inr{X_i,y} + \tau_i)\}| \geq c_4m\frac{\|x-y\|_2}{\lambda}.
$$
Moreover, if $T$ is $(r,\gamma)$-metrically convex then on the same event, if $\|x-y\|_2 \geq 2r$,
$$
|\{i: \sign(\inr{X_i,x} + \tau_i) \not=\sign(\inr{X_i,y} + \tau_i)\}| \leq \frac{c_5\lambda}{\rho\gamma^2} \cdot m\frac{\|x-y\|_2}{\lambda}.
$$
\end{Theorem}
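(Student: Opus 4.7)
Throughout, set $h_{x,y}(X,\tau):=\IND[\sign(\inr{X,x}+\tau)\neq\sign(\inr{X,y}+\tau)]$ and $f(x,y):=\tfrac1m\sum_{i=1}^m h_{x,y}(X_i,\tau_i)$. The plan rests on two elementary observations. First, $f$ is a pseudo-metric on $T$, since any hyperplane separating $x$ and $y$ must separate one of $(x,x'),(x',y'),(y',y)$, so $h_{x,y}\leq h_{x,x'}+h_{x',y'}+h_{y',y}$. Second, direct integration of $\tau\sim\cU[-\lambda,\lambda]$ yields
$$\E_\tau h_{x,y}(X,\tau)=\frac{\bigl|\,[\,\min(-\inr{X,x},-\inr{X,y}),\,\max(-\inr{X,x},-\inr{X,y})\,]\cap[-\lambda,\lambda]\,\bigr|}{2\lambda}\;\leq\;\frac{|\inr{X,x-y}|}{2\lambda},$$
with equality on $\{|\inr{X,x}|,|\inr{X,y}|\leq\lambda\}$. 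Combining the $L^1$-$L^2$ equivalence \eqref{eqn:L1L2ass} with Chebyshev to control the truncation event (whose probability is small because $\lambda\geq c_0R$) gives the two-sided bound $c_L\|x-y\|_2/\lambda\leq\E h_{x,y}\leq\|x-y\|_2/\lambda$.

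The technical heart of the proof is a small-scale uniform deviation estimate: with probability at least $1-4\exp(-cm(\rho/\lambda)^2)$,
$$\Psi_r:=\sup_{\substack{x,x'\in T\\\|x-x'\|_2\leq r}}f(x,x')\;\lesssim\;(\rho/\lambda)^2.$$
Writing $f=\E h+(f-\E h)$, the expectation piece is $\leq r/\lambda\leq c_1(\rho/\lambda)^2$. For the centered piece, I would apply Gin\'e-Zinn symmetrization and split $h_{x,x'}(X,\tau)=g_{x,x'}(X)+(h_{x,x'}(X,\tau)-g_{x,x'}(X))$, where $g_{x,x'}(X):=\E_\tau h_{x,x'}(X,\tau)$ is $(1/\lambda)$-Lipschitz in $(\inr{X,x},\inr{X,x'})$ and vanishes on the diagonal; a contraction-style comparison reduces the $g$-part of the Rademacher process to the linear process $\tfrac1m\sum_i\eps_i\inr{X_i,u}$ indexed by $u\in T_r$, bounded by $E(T_r)/(\lambda\sqrt m)\lesssim(\rho/\lambda)^2$ through the hypothesis on $E(T_r)$. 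The residual $\tau$-fluctuation has mean zero in $\tau$ and is bounded, so a Hoeffding-plus-chaining bound conditional on $(X_i)$ takes care of it. Talagrand's concentration inequality, using $\mathrm{Var}(h_{x,x'})\leq\E h_{x,x'}\leq r/\lambda\lesssim(\rho/\lambda)^2$, then upgrades the expectation estimate to the high-probability bound on $\Psi_r$. In parallel, for a minimal $r$-net $\mathcal N$ of $T$, Bernstein's inequality applied to each pair $(x^*,y^*)\in\mathcal N^2$ and peeled over $\|x^*-y^*\|_2\in[2^k\rho,2^{k+1}\rho]$ yields $|f(x^*,y^*)-\E h_{x^*,y^*}|\leq(c_L/4)\|x^*-y^*\|_2/\lambda$ with probability $\geq 1-2\exp(-cm 2^k\rho/\lambda)$; the union bound over $|\mathcal N|^2$ pairs at each scale is absorbed by the hypothesis $\log\mathcal N(T,r)\leq c_2m\rho/\lambda$ and the resulting geometric series in $k\geq 0$ is dominated by its $k=0$ term.

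On the intersection of both events, for $x,y\in T$ with $\|x-y\|_2\geq\rho$ pick nearest net points $x^*,y^*\in\mathcal N$; since $r\leq c_1\rho^2/\lambda\leq\|x-y\|_2/4$, one has $\|x^*-y^*\|_2\geq\|x-y\|_2/2$, and the pseudo-metric property gives
$$f(x,y)\;\geq\; f(x^*,y^*)-f(x,x^*)-f(y,y^*)\;\geq\;\tfrac{c_L}{2}\tfrac{\|x^*-y^*\|_2}\lambda-2\Psi_r\;\geq\;c_L'\tfrac{\|x-y\|_2}\lambda,$$
establishing the lower bound. For the upper bound under $(r,\gamma)$-metric convexity, the small-scale estimate plays the role of the oscillation bound $\kappa:=C(\rho/\lambda)^2$ in \eqref{eq:osc-to-lip}, applied to the map $v\mapsto f(x,v)$ (which vanishes at $v=x$ and whose oscillation on scale $r$ is controlled by $\Psi_r$ via the pseudo-metric inequality $|f(x,w)-f(x,v)|\leq f(w,v)$): this gives $f(x,y)\leq\kappa\|x-y\|_2/(\gamma^2r)\leq c_5\|x-y\|_2/(\gamma^2\rho)$ upon plugging the extremal $r=c_1\rho^2/\lambda$. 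The main obstacle is the contraction step in the small-scale deviation estimate: standard Ledoux-Talagrand contraction does not apply to the discontinuous indicator $h_{x,x'}$, and the heavy-tailed setting precludes any subgaussian chaining shortcut. The resolution is to exploit the smoothing effect of the shift $\tau$, which converts the integrated kernel $g_{x,x'}$ into a Lipschitz function of $(\inr{X,x},\inr{X,x'})$, together with a separate handling of the residual $\tau$-fluctuation; it is precisely in this step that the complexity parameter $E(T_r)$ enters and dictates the form of the hypothesis on $m$.
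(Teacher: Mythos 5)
Your high-level architecture — a net plus a small-scale oscillation bound, glued by the triangle inequality for the Hamming pseudo-metric $f$ — is the same as the paper's, and your Bernstein-plus-peeling argument on the net and the final metric-convexity step for the upper bound are both essentially correct. However, the step you yourself flag as ``the technical heart,'' namely the uniform bound $\Psi_r \lesssim (\rho/\lambda)^2$, has a genuine gap that your proposed decomposition does not close.

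You split $h_{x,x'}=g_{x,x'}+(h_{x,x'}-g_{x,x'})$ with $g=\E_\tau h$, observe that $g$ is Lipschitz so contraction should handle it, and then assert that ``the residual $\tau$-fluctuation has mean zero in $\tau$ and is bounded, so a Hoeffding-plus-chaining bound conditional on $(X_i)$ takes care of it.'' That last sentence does not withstand scrutiny. Conditional on $(X_i)$, the residual $h_{x,x'}(X_i,\tau_i)-g_{x,x'}(X_i)$ is still, as a function of $(x,x')$, a discontinuous indicator-type quantity (it jumps whenever $-\inr{X_i,x}$ or $-\inr{X_i,x'}$ crosses $\tau_i$), and Hoeffding only controls it for a \emph{fixed} pair $(x,x')$; a sup over the continuum still requires chaining in the $(x,x')$ variables, which fails for exactly the same reason contraction fails on $h$ itself. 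The randomness of $\tau$ smooths the \emph{expectation} $g$, but the residual is precisely the un-smoothable part, and no bound on its $L^\infty$-norm or conditional mean gives a handle on the supremum. (Also, for the $g$-part you need a multivariate contraction argument since $g$ depends jointly on $\inr{X,x}$ and $\inr{X,x'}$, not on $\inr{X,x-x'}$ alone, because of the truncation at $\pm\lambda$; this is fixable by a further split, but it is not the one-line Ledoux--Talagrand step you invoke.)

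The paper circumvents this by never attempting to control $\sup_{\|x-x'\|\le r}f(x,x')$ via chaining of the indicator process. Instead it strengthens the net-level statement to \emph{$\theta$-well-separation} (Definition~\ref{def:large-margin} and Theorem~\ref{thm:individual-separation}): for net points $v,w$ one obtains many hyperplanes $H_{X_i,\tau_i}$ that not only separate $v$ and $w$ but do so with margin $\theta\|v-w\|_2$. Lemma~\ref{lemma:stability} then converts a margin into stability under perturbation, so the only quantity that needs uniform control is the \emph{counting process}
$$
\sup_{z\in T_{r''}}\big|\{i:\ |\inr{X_i,z}|\ge c\,r'\}\big|,
$$
which is purely a function of the linear forms $\inr{X_i,z}$, so the Markov-type bound $\IND_{\{|a|\ge t\}}\le |a|/t$ linearizes it and symmetrization-plus-contraction applies cleanly, pulling in $E(T_r)$ as you expected; a bounded-differences inequality then yields the high-probability statement. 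This margin/counting-process route is precisely what replaces the chaining-of-indicators you need, and it also explains why the paper only needs the weaker oscillation bound $\Psi_r\lesssim\rho/\lambda$ rather than the $(\rho/\lambda)^2$ you aim for. So to repair your argument you would have to either (i) establish $\theta$-well-separation at the net and pass to the stability lemma and counting process as the paper does, or (ii) give an actual argument (e.g.\ via VC-type bounds for the one-dimensional interval classes $\{\tau:\sign(a+\tau)\neq\sign(b+\tau)\}$) for the residual supremum that your current sketch leaves open.
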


\vskip0.4cm
\noindent{\bf Proof of Theorem \ref{thm:tessHeavyIntro}.}
Apply Theorem \ref{thm:tess} to the set $U = \co(T)$, which is $(r,1)$ metrically convex for any $r>0$, and for the parameters $\lambda=c_0R$ and $r=c_1\rho^2/R$. With these choices Theorem \ref{thm:tessHeavyIntro} follows immediately.
\endproof

When $X$ is $L$-subgaussian one may establish a sharper result.
\begin{Theorem} \label{thm:tess-subgaussian}
There exist constants $c_0,\ldots,c_5$ that depend only on $L$ for which the following holds. Let $T \subset R B_2^n$, set $\lambda \geq c_0 R$ and consider an isotropic, symmetric, $L$-subgaussian random vector $X$. Let $m$ and $0<r <\rho< \lambda$ satisfy
$$
\rho \geq c_1 r \sqrt{\log(e\lambda/\rho)},
$$
and
$$
m \geq c_2 \max\left\{\frac{\lambda}{\rho} \log{\mathcal N}(T,r), \ \lambda \frac{\ell_*^2(T_r)}{\rho^3} \right\}.
$$
Then with probability at least $1-8\exp(-c_3 m \rho/\lambda)$, for every $x,y \in T$ such that $\|x-y\|_2 \geq \rho$, one has
$$
|\{i: \sign(\inr{X_i,x} + \tau_i) \not=\sign(\inr{X_i,y} + \tau_i)\}| \geq c_4m\frac{\|x-y\|_2}{\lambda}.
$$
Moreover, if $T$ is $(r,\gamma)$-metrically convex then on the same event, if $\|x-y\|_2 \geq 2r$,
$$
|\{i: \sign(\inr{X_i,x} + \tau_i) \not=\sign(\inr{X_i,y} + \tau_i)\}| \leq \blue{\frac{c_5\sqrt{\log(e\lambda/\rho)}}{\gamma^2}} \cdot m\frac{\|x-y\|_2}{\lambda}.
$$
\end{Theorem}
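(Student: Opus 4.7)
The strategy is to establish both bounds from a single high-probability event that combines a Bernstein concentration estimate for $N(x_0,y_0)$ at the scale of an $r$-net of $T$ with a uniform local-oscillation bound, and then to invoke $(r,\gamma)$-metric convexity to pass from the oscillation bound to the upper bound. Throughout, set
\[
f_{x,y}(X,\tau):=\IND[\sign(\inr{X,x}+\tau)\neq\sign(\inr{X,y}+\tau)],\qquad N(x,y)=\sum_{i=1}^m f_{x,y}(X_i,\tau_i).
\]
Conditioning on $X$ and integrating $\tau\sim\cU[-\la,\la]$ gives $\E_\tau f_{x,y}(X,\tau)=(2\la)^{-1}\bigl|[-\inr{X,x},-\inr{X,y}]\cap[-\la,\la]\bigr|$. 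Since $X$ is isotropic and $L$-subgaussian and $\la\ge c_0R$ for a sufficiently large $c_0=c_0(L)$, the truncation costs only an exponentially small amount (Cauchy--Schwarz against a subgaussian tail on $\{|\inr{X,x}|>\la\}$); combined with the $L^1$--$L^2$ equivalence of marginals this pins down
\[
c_L\,\frac{\|x-y\|_2}{\la}\ \le\ \E f_{x,y}\ \le\ \frac{\|x-y\|_2}{2\la}.
\]

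\noindent\emph{Step 1 (net + Bernstein).} Let $\mathcal{N}_r\subset T$ be a minimal $r$-net. For each fixed pair $x_0,y_0\in\mathcal{N}_r$ with $\|x_0-y_0\|_2\ge\rho/2$, $N(x_0,y_0)$ is a sum of independent $[0,1]$-valued random variables with mean and variance both of order $m\|x_0-y_0\|_2/\la$, so Bernstein yields $\bP\bigl(N(x_0,y_0)<(c/2)\,m\|x_0-y_0\|_2/\la\bigr)\le 2e^{-c'm\rho/\la}$. A union bound over $|\mathcal{N}_r|^2$ pairs succeeds because $m\rho/\la\gtrsim\log\mathcal{N}(T,r)$ by hypothesis.

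\noindent\emph{Step 2 (local oscillation).} For arbitrary $x,y\in T$ pick nearest points $x_0,y_0\in\mathcal{N}_r$ and set $u=x-x_0,\ v=y-y_0\in T_r$. Because the symmetric difference of two real intervals sharing one endpoint has length equal to the displacement of the other endpoint,
\[
|N(x,y)-N(x_0,y_0)|\ \le\ \Psi_{x_0}(u)+\Psi_{y_0}(v),\qquad \Psi_w(z):=\sum_{i=1}^m\IND[\tau_i\in J_i^w(z)],
\]
where each interval $J_i^w(z)$ has length $|\inr{X_i,z}|$. It suffices to control $\sup_{w\in\mathcal{N}_r,\,z\in T_r}\Psi_w(z)$. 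For fixed $w$, each summand of $\Psi_w(z)$ is $[0,1]$-valued with variance at most $r/(2\la)$, and the natural $L^2$-pseudo-metric on $T_r$ satisfies $d(z,z')^2\le\|z-z'\|_2/(2\la)$ (again by the one-endpoint estimate). Combining Bousquet's version of Talagrand's concentration inequality with generic chaining---exploiting the subgaussianity of $X$ to dominate the empirical $\gamma_2$-functional by the Gaussian mean width $\ell_*(T_r)$ at the cost of a single $\sqrt{\log(e\la/\rho)}$-factor---and then choosing the deviation parameter $t\simeq m\rho/\la$ to absorb the union bound over $w$, one obtains
\[
\sup_{w\in\mathcal{N}_r,\,z\in T_r}\Psi_w(z)\ \lesssim_L\ \frac{mr\sqrt{\log(e\la/\rho)}}{\la}
\]
on the desired event. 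Both standing assumptions enter here: the chaining contribution of order $\sqrt{m/\la}\,\ell_*(T_r)$ is absorbed using $\ell_*^2(T_r)\lesssim m\rho^3/\la$, and the Bousquet variance term of order $\sqrt{m r t/\la}$ is absorbed using $r\le\rho/(c_1\sqrt{\log(e\la/\rho)})$.

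\noindent\emph{Step 3 (assembly and main obstacle).} On the intersection event the lower bound is immediate: for $\|x-y\|_2\ge\rho$ the nearby net pair satisfies $\|x_0-y_0\|_2\ge\rho/2$, Bernstein gives $N(x_0,y_0)\gtrsim m\|x-y\|_2/\la$, and the oscillation is dominated by this since $r\sqrt{\log(e\la/\rho)}\ll\rho$. For the upper bound invoke $(r,\gamma)$-metric convexity: given $\|x-y\|_2\ge 2r$, choose a chain $x=z_0,z_1,\dots,z_{\ell+1}=y$ with $\ga r\le\|z_i-z_{i+1}\|_2\le r$ and $\sum\|z_i-z_{i+1}\|_2\le\ga^{-1}\|x-y\|_2$, so $\ell+1\le\ga^{-2}\|x-y\|_2/r$; then $N(x,y)\le\sum_i N(z_i,z_{i+1})\le(\ell+1)\cdot C(mr/\la)\sqrt{\log(e\la/\rho)}\le (C\sqrt{\log(e\la/\rho)}/\ga^2)\cdot m\|x-y\|_2/\la$. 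The technical crux is the chaining inside Step 2: the pseudo-metric has square-root character, $d(z,z')\lesssim\sqrt{\|z-z'\|_2/\la}$, so naive Dudley integration diverges logarithmically at the origin, and one must run generic chaining in tandem with the subgaussianity of $X$. This comparison is exactly what produces the $\sqrt{\log(e\la/\rho)}$-factor in the stated upper bound, and the hypothesis $\rho\ge c_1 r\sqrt{\log(e\la/\rho)}$ is calibrated precisely so that the resulting oscillation is simultaneously small enough ($\ll m\rho/\la$) not to spoil the lower bound and tight enough ($\lesssim mr\sqrt{\log}/\la$ per piece) to sum cleanly across the $\ga^{-2}\|x-y\|_2/r$ links of the chain.
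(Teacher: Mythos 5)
Your high-level skeleton (net $\to$ lower bound for net pairs $\to$ local oscillation $\to$ metric convexity for the upper bound) parallels the paper's strategy, and Steps 1 and 3 are fine in spirit: Step~1 replaces the paper's Paley--Zygmund/small-ball argument (Lemma~\ref{lemma:Z-W-tau}, Theorem~\ref{thm:individual-separation}) with a direct computation of $\E f_{x,y}$ plus Bernstein, which is an equivalent route; and Step~3's chaining-of-links arithmetic is correct.

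The gap is in Step~2, and it is substantive. You propose to control
$\sup_{w\in\mathcal{N}_r,\,z\in T_r}\Psi_w(z)$ directly by Bousquet plus generic chaining for the indicator process $\sum_i\IND[\tau_i\in J_i^w(z)]$. As you observe, the natural $L^2$ pseudo-metric on $T_r$ for this process satisfies only $d(z,z')\lesssim\sqrt{\|z-z'\|_2/\lambda}$. But then the claim that ``subgaussianity of $X$ dominates the empirical $\gamma_2$-functional by $\ell_*(T_r)$ at the cost of a single $\sqrt{\log}$-factor'' is unsupported: Talagrand's theorem relates $\ell_*(T_r)$ to the $\gamma_2$-functional for the \emph{linear} process $\sum_i\eps_i\inr{X_i,z}$ with the Euclidean (empirical $\ell_2$) metric, not to the indicator process with the square-root metric. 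Indeed, if one only has Sudakov's bound $\log\mathcal{N}(T_r,v)\lesssim(\ell_*(T_r)/v)^2$, the Dudley-type chaining sum with respect to the square-root metric behaves like $\sqrt{m/\lambda}\,\ell_*(T_r)\int_0^r v^{-3/2}\,dv$, which diverges at the origin; there is no single $\sqrt{\log}$-factor that repairs this, and no standard ``interleaving with subgaussianity of $X$'' that converts the indicator process into a linear one.

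The paper circumvents this entirely. Instead of chaining on an indicator process that mixes $\tau$ and $X$, it first establishes that a positive fraction of hyperplanes separate the net pair \emph{with margin} $\theta\|x_0-y_0\|_2$ (Definition~\ref{def:large-margin}, Theorem~\ref{thm:individual-separation}). Then Lemma~\ref{lemma:stability} shows that a perturbation $x\to x_0$ can only spoil an index $i$ if $|\inr{X_i,x-x_0}|\geq\theta r'/3$. Thus the local oscillation to control is the \emph{purely $X$-dependent} quantity
\[
\sup_{z\in T_{r''}}\bigl|\{i: |\inr{X_i,z}|\geq c r'\}\bigr|,
\]
cf.\ \eqref{eq:osc-subgaussian-1}, which is bounded via the $k$-largest-element trick together with Theorem~\ref{thm:monotone-subgaussian}; this gives the local bound $\lesssim m\rho/\lambda$, and there is no square-root metric in sight. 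Equivalently, in your formulation, one must split
\[
\IND\bigl[\tau_i\in J_i^w(z)\bigr]\ \leq\ \IND\bigl[|\inr{X_i,z}|\geq c r'\bigr]\ +\ \IND\bigl[|\tau_i+\inr{X_i,w}|\leq c r'\bigr],
\]
controlling the first sum via Theorem~\ref{thm:monotone-subgaussian} and the second by a scalar Chernoff bound on the $\tau_i$'s. Without this decomposition (or the margin formulation it encodes), the chaining in your Step~2 does not close, which is the key missing idea.

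Two minor remarks: your intermediate claim $\sup\Psi_w(z)\lesssim m r\sqrt{\log(e\lambda/\rho)}/\lambda$ is stronger than what the decomposition above gives ($\lesssim m\rho/\lambda$), but the weaker bound already suffices for the Step~3 arithmetic; and to make the margin argument work you would need to upgrade your Step~1 to establish well-separation (a lower bound on $|I_{x_0,y_0}(\theta)|$) rather than mere separation, which requires the additional $\tau$-concentration estimate \eqref{eq:single-upper}.
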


\vskip0.4cm
\noindent{\bf Proof of Theorem \ref{thm:distances-intro}.}
Theorem \ref{thm:distances-intro} is an immediate outcome of Theorem \ref{thm:tess-subgaussian} for $U=\co(T)$. Indeed, $\co(T)$ is $(r,1)$ metrically convex for any $r>0$, $\ell_*(U_r) \leq \ell_*(T)$, and by Sudakov's inequality,
$\log {\mathcal N}(U,r) \leq c \ell_*^2(T)/r^2$. The claim follows by setting $r=c\rho/\sqrt{\log(e\lambda/\rho)}$ and $\lambda=c^\prime R$ for suitable absolute constants $c$ and $c^\prime$.
\endproof

\vskip0.4cm
In the context of tessellations, Theorem~\ref{thm:tess} and the first part of Theorem~\ref{thm:tess-subgaussian} improve the estimate from \eqref{eqn:PlVEmbed} in several ways: firstly, Theorem~\ref{thm:tess} holds for a \red{very} general collection of random vectors $X$ \red{-} the vector has to satisfy a small-ball condition rather than being Gaussian. Secondly, both are valid for any subset of $\R^n$ and not just for subsets of the sphere; and, finally, if $X$ happens to be $L$-subgaussian, it yields the best known estimate on the diameter of each `cell' in the random tessellation---even when $X$ is Gaussian and $T$ is a subset of $S^{n-1}$.

\subsection{The heavy-tailed scenario}
A fundamental question that is at the heart of our arguments has to do with stability: given two points $x$ and $y$, how `stable' is the set
$$
\{ i : \sign(\inr{X_i,x}+\tau_i) \not= \sign(\inr{X_i,y}+\tau_i)\}=(*)
$$
to perturbations? If one believes that the cardinality of $(*)$ reflects the distance $\|x-y\|_2$, it stands to reason that if $r$ is significantly smaller than $\|x-y\|_2$ and $\|x-x^\prime\|_2 \leq r$, $\|y-y^\prime\|_2 \leq r$,  then $|\{ i : \sign(\inr{X_i,x^\prime}+\tau_i) \not= \sign(\inr{X_i,y^\prime}+\tau_i)\}|$ should not be very different from $|(*)|$.

Unfortunately, stability is not true in general. If either $x$ or $y$ are `too close' to many of the separating hyperplanes, then even a small shift in either one of them can have a dramatic effect on the signs of $\inr{X_i,\cdot} +\tau_i$ and destroy the separation. Thus, to ensure stability one requires a stronger property than mere separation: points need to be separated by a large margin.

\begin{Definition} \label{def:large-margin}
The hyperplane $H_{X_i,\tau_i}$ \emph{$\theta$-well-separates} $x$ and $y$ if
\begin{itemize}
\item $\sign(\inr{X_i,x}+\tau_i)\neq \sign(\inr{X_i,y}+ \tau_i)$,
\item $|\inr{X_i,x}+\tau_i|\geq \theta\|x-y\|_2$, and
\item $|\inr{X_i,y}+\tau_i|\geq \theta\|x-y\|_2$.
\end{itemize}
Denote by $I_{x,y}(\theta)\subset[m]$ the set of indices for which $H_{X_i,\tau_i}$ $\theta$-well-separates $x$ and $y$.
\end{Definition}

The condition that $|\inr{X_i,x}+\tau_i|, \ |\inr{X_i,x}+\tau_i| \geq \theta\|x-y\|_2$ is precisely what ensures that perturbations of $x$ or $y$ of the order of $\|x-y\|_2$ do not spoil the fact that the hyperplane $H_{X_i,\tau_i}$ separates the two points.

\par

We begin by showing that even in the heavy-tailed scenario and with high probability, $|I_{x,y}(\theta)|$ is proportional to $m \|x-y\|_2$ for any two (fixed) points $x$ and $y$. Let us stress that the high probability estimate is crucial: it will lead to a uniform control on a net of a large cardinality.

\begin{Theorem} \label{thm:individual-separation}
\red{There are constants $c_1,\ldots,c_4$} that depend only on $L$ for which the following holds. Let $x,y \in R B_2^n$ and set $\lambda \geq c_1 R$. With probability at least
$$
1-\red{4}\exp\left(-c_2 m\min\left\{\frac{\|x-y\|_2}{\lambda},1\right\}\right),
$$
$$
|I_{x,y} (c_3)| \geq  c_4 m \frac{\|x-y\|_2}{\lambda}.
$$
\end{Theorem}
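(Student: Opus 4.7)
My plan is to perform a single-hyperplane analysis---bounding from below the probability $p$ that a single hyperplane $H_{X,\tau}$ $\theta$-well-separates the fixed pair $x,y$---and then to combine the i.i.d.\ indicators $\mathbbm{1}_{\{i \in I_{x,y}(\theta)\}}$ via a Chernoff-type bound to obtain the claimed tail estimate.

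First, conditionally on $X$, write $u=x-y$, $a=\inr{X,x}$, $b=\inr{X,y}$, and observe that the set of $\tau\in\R$ for which $H_{X,\tau}$ $\theta$-well-separates $x$ and $y$ is the (possibly empty) interval of length $\max(|\inr{X,u}|-2\theta\|u\|_2,\,0)$ centred at $-(a+b)/2$. Provided the good event
$$\mathcal{G}:=\{|\inr{X,u}|\geq \|u\|_2/(2L)\}\cap\{\max(|a|,|b|)\leq\la/2\}\cap\{\theta\|u\|_2\leq \la/4\}$$
holds, this interval sits inside $[-\la,\la]$ and has length at least $\|u\|_2/(2L)-2\theta\|u\|_2$. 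Choosing $\theta=1/(8L)$ makes this length at least $\|u\|_2/(4L)$, and since $\tau\sim\mathcal{U}[-\la,\la]$ is independent of $X$,
$$\bP_{\tau}\bigl(H_{X,\tau}\text{ $\theta$-well-separates } x,y\mid X\bigr)\geq \|x-y\|_2/(8L\la)\quad\text{on }\mathcal{G}.$$

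Second, I would estimate $\bP(\mathcal{G})$ by three independent small-ball/tail arguments. Paley--Zygmund applied to $|\inr{X,u}|$, which has $L^2$-norm $\|u\|_2$ and---by \eqref{eqn:L1L2ass}---$L^1$-norm at least $\|u\|_2/L$, yields $\bP(|\inr{X,u}|\geq \|u\|_2/(2L))\geq 1/(4L^2)$. Chebyshev's inequality gives $\bP(|\inr{X,x}|>\la/2)\leq 4R^2/\la^2$ and the analogue for $y$, so choosing $c_1=c_1(L)$ large enough forces $\bP(\max(|a|,|b|)\leq \la/2)\geq 1-1/(8L^2)$. The third constraint in $\mathcal{G}$ is automatic because $\|u\|_2\leq 2R\leq 2\la/c_1$. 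Combining and then integrating out $X$,
$$p:=\bP(i\in I_{x,y}(\theta))\geq \frac{1}{64L^3}\cdot\frac{\|x-y\|_2}{\la}.$$

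Finally, $|I_{x,y}(\theta)|$ is a sum of $m$ i.i.d.\ Bernoulli variables with success probability at least $p$, so the multiplicative Chernoff bound gives $\bP(|I_{x,y}(\theta)|\leq mp/2)\leq \exp(-mp/8)$; this yields the claimed statement with $c_3\simeq 1/L$ and $c_4\simeq 1/L^3$. Since $\|x-y\|_2\leq 2R\leq \la$, the minimum in the exponent is attained by $\|x-y\|_2/\la$, which is exactly what the Chernoff bound delivers. The main obstacle I anticipate is Step~1, namely the precise interval-geometry description of the $\theta$-well-separating set of thresholds and the verification that it lies inside $[-\la,\la]$; once that is in place, the small-ball ingredients (Paley--Zygmund, Chebyshev) and the concentration step are entirely standard, and the only delicate issue is tracking constants so that they depend only on $L$.
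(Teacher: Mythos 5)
Your proof is correct, and it takes a genuinely different (and slicker) route from the paper's. The paper factors the argument through Lemma~\ref{lemma:Z-W-tau}, which counts only \emph{sign disagreements}, via a two-stage conditioning: first condition on $(X_i)_{i=1}^m$ to exhibit a large set $J\subset[m]$ of indices where $|Z_j|,|W_j|\leq\lambda$ and $|Z_j-W_j|$ is not too small (via Chernoff over $X$), then apply Chernoff over $\tau$ conditionally. The two margin conditions $|\inr{X_i,x}+\tau_i|\geq\theta\|x-y\|_2$ and $|\inr{X_i,y}+\tau_i|\geq\theta\|x-y\|_2$ are then enforced separately in the proof of Theorem~\ref{thm:individual-separation} via the small-ball estimate \eqref{eqn:SBindep} and another Chernoff bound, which is why the paper's failure probability accumulates a factor of $4$. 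Your approach instead gives a precise geometric description of the set of thresholds $\tau$ for which \emph{all three} conditions of Definition~\ref{def:large-margin} hold simultaneously---an explicit interval of length $\max(|\inr{X,u}|-2\theta\|u\|_2,0)$ centered at $-(a+b)/2$---and reads off the per-hyperplane unconditional success probability $p\gtrsim_L\|x-y\|_2/\lambda$ in one pass, so that a single multiplicative Chernoff bound finishes the job. Both proofs rely on the same basic ingredients (Paley--Zygmund applied to $|\inr{X,x-y}|$, Chebyshev for $|\inr{X,x}|$ and $|\inr{X,y}|$, the constant density of $\tau$, Bernoulli concentration). What the paper's two-stage structure buys is modularity: Lemma~\ref{lemma:Z-W-tau} is stated for abstract $Z,W$ and is reused verbatim in the noisy setting (Theorem~\ref{thm:noisy-individual-separation}) with $Z=\inr{X,x}+\nu$; your single-shot argument would need its interval geometry redone with the extra noise variable, though together with Lemma~\ref{lemma:norm-equivalence-noisy} this would go through as well. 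One small remark: the third constraint $\theta\|u\|_2\leq\lambda/4$ in your event $\mathcal{G}$ is in fact redundant---once $|a|,|b|\leq\lambda/2$, the interval $[-a+\theta\|u\|_2,\,-b-\theta\|u\|_2]$ (say $a>b$) automatically sits inside $[-\lambda/2,\lambda/2]$ because it is contained in $[-a,-b]$---but keeping it does no harm.
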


The proof of Theorem \ref{thm:individual-separation} requires two preliminary observations. Consider a random variable $\tau$ that satisfies the small ball estimate
\begin{equation} \label{eqn:SBass}
\sup_{u\in \R} \bP(|\tau-u|\leq \eps) \leq C_\tau \eps \qquad \text{for all } \eps\geq 0,
\end{equation}
and let $Z$ be independent of $\tau$. Then clearly
\begin{equation}
\label{eqn:SBindep}
\bP(|Z+\tau|\leq \eps) \leq C_\tau \eps, \qquad \text{for all } \eps\geq 0.
\end{equation}
\red{If} $\tau \sim {\mathcal U}[-\la,\la]$  then \eqref{eqn:SBass} holds for $C_\tau=1/\lambda$. \red{Therefore, by the Chernoff bound,} if $(Z_i)_{i=1}^m$ and $(\tau_i)_{i=1}^m$ are independent copies of $Z$ and $\tau$ respectively, then with probability at least $1-2\exp(-c m \eps/\lambda)$,
\begin{equation} \label{eq:single-upper}
|\{i: |Z_i + \tau_i| \geq \eps\}| \geq \left(1-\frac{2\eps}{\lambda}\right)m.
\end{equation}

The second observation is somewhat more involved. Consider a random variable $\tau$ that satisfies
\begin{equation}
\label{eqn:hypSepMinAssTauL1}
\bP(\alpha<\tau\leq\beta)\geq c_\tau (\beta-\alpha)
\end{equation}
for all $-\la\leq \alpha\leq\beta\leq\la$. Let $Z$ and $W$ be square integrable whose difference satisfies a small-ball condition: there are constants $\kappa$ and $\delta$ such that
$$
\bP(|Z-W| \geq \kappa \|Z-W\|_{L^1} ) \geq \delta.
$$
\begin{Lemma} \label{lemma:Z-W-tau}
There are absolute constants $c_0$ and $c_1$ and constants $c_2,c_3 \simeq c_\tau \kappa \delta$ such that the following holds. Assume that $Z$ and $W$ are independent of $\tau$ and that
$$
\lambda \geq (c_0/\sqrt{\delta}) \max\{\|Z\|_{L^2},\|W\|_{L^2}\}.
$$
If $(\tau_i)_{i=1}^m$, $(Z_i)_{i=1}^m$ and $(W_i)_{i=1}^m$ are independent copies of $\tau$, $Z$ and $W$ respectively, then with probability at least
$$
1-2\exp(-c_1 m \delta)-2\exp(-c_2 m \|Z-W\|_{L^1}),
$$
we have
$$
|\{i : \sign(Z_i+\tau_i) \not = \sign(W_i+\tau_i)\}| \geq c_3 m \|Z-W\|_{L^1}.
$$
\end{Lemma}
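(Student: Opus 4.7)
The plan is a two-stage Chernoff argument, which naturally produces the two exponential error terms in the stated failure probability. The key pointwise inequality is that, whenever $|Z|, |W| \leq \la$, the interval with endpoints $-Z$ and $-W$ lies in $[-\la,\la]$, so the density-type hypothesis \eqref{eqn:hypSepMinAssTauL1} gives
\[
\bP\bigl(\sign(Z+\tau) \neq \sign(W+\tau) \,\bigm|\, Z,W\bigr) \;\geq\; c_\tau |Z-W|,
\]
because the event asks precisely that $\tau$ land in an interval of length $|Z-W|$ inside $[-\la,\la]$.

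Stage~1 identifies a large random subset of indices on which the conditional probability of a sign flip is guaranteed to be at least $c_\tau \ka \|Z-W\|_{L^1}$. Set
\[
A \;=\; \{|Z-W| \geq \ka \|Z-W\|_{L^1}\} \cap \{|Z|\leq \la\} \cap \{|W|\leq \la\}.
\]
The first event has probability at least $\delta$ by hypothesis; Chebyshev's inequality together with $\la \geq (c_0/\sqrt{\delta}) \max\{\|Z\|_{L^2},\|W\|_{L^2}\}$ gives $\bP(|Z|>\la) \leq \delta/c_0^2$ and similarly for $W$. Choosing $c_0$ a sufficiently large absolute constant yields $\bP(A) \geq \delta/2$. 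A binomial Chernoff bound applied to the i.i.d.\ indicators $\IND_{A_i}$ then shows that $I_1 := \{i : A_i \text{ holds}\}$ satisfies $|I_1| \geq \delta m/4$ with probability at least $1 - 2\exp(-c_1 m \delta)$.

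In Stage~2 we condition on $(Z_i,W_i)_{i=1}^m$ and work on the event $\{|I_1| \geq \delta m/4\}$. By the pointwise inequality above, for each $i \in I_1$ the conditional probability that $\sign(Z_i+\tau_i) \neq \sign(W_i+\tau_i)$ is at least $c_\tau \ka \|Z-W\|_{L^1}$, and these events are conditionally independent across $i$ because the $\tau_i$ are independent of each other and of $(Z_i,W_i)_{i=1}^m$. A second Chernoff bound applied to these $|I_1|$ Bernoulli trials then guarantees that at least half of them succeed, with conditional failure probability bounded by $2\exp(-c|I_1| c_\tau \ka \|Z-W\|_{L^1}) \leq 2\exp(-c_2 m \|Z-W\|_{L^1})$, where $c_2 \simeq c_\tau \ka \delta$. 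The resulting number of sign flips is then at least $\tfrac{1}{2} c_\tau \ka \|Z-W\|_{L^1} \cdot \delta m/4 \geq c_3 m \|Z-W\|_{L^1}$ with $c_3 \simeq c_\tau \ka \delta$, and a union bound over the two stages delivers the claimed probability.

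The only delicate point is that both the per-index success probability in Stage~2 and the effective sample size $|I_1|$ are random quantities. Conditioning on the Stage~1 good event replaces them by the deterministic lower bounds $c_\tau \ka \|Z-W\|_{L^1}$ and $\delta m/4$, respectively, after which one is dealing with honest i.i.d.\ Bernoulli trials and the rest is routine bookkeeping.
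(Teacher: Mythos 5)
Your proof is correct and follows essentially the same two-stage approach as the paper: first use Chebyshev plus Chernoff on $(Z_i,W_i)$ to isolate a deterministic-size subset of indices on which $|Z_i|,|W_i|\leq\lambda$ and $|Z_i-W_i|\geq\kappa\|Z-W\|_{L^1}$, then condition and use the density bound \eqref{eqn:hypSepMinAssTauL1} together with a second Chernoff bound over the $\tau_i$. The only cosmetic difference is that you apply a single Chernoff bound to the intersection event $A$, whereas the paper runs three separate Chernoff bounds (for $|Z_i|$, $|W_i|$, $|Z_i-W_i|$) and intersects the resulting index sets; both give $|J|\geq\delta m/4$.
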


\begin{proof}
Set $\theta$ to be named later and observe that $\bP(|Z| \geq \|Z\|_{L^2}/\sqrt{\theta}) \leq \theta$. Hence, with probability at least $1-2\exp(-c_1\theta m)$,
$$
|\{i : |Z_i| \geq   \|Z\|_{L^2}/\sqrt{\theta}\}| \leq 2\theta m,
$$
where $c_1$ is an absolute constant; a similar estimate holds for $(W_i)_{i=1}^m$.

At the same time, recall that $\bP(|Z-W| \geq \kappa \|Z-W\|_{L^1}) \geq \delta$, implying that with probability at least $1-2\exp(-c_2\delta m)$,
$$
|\{i : |Z_i-W_i| \geq   \kappa \|Z-W\|_{L^1}\}| \geq \frac{\delta m}{2}.
$$
Set $\theta =\delta/16$ and let $\lambda \geq 4\max\{\|Z\|_{L^2}/\sqrt{\delta}, \ \|W\|_{L^2}/\sqrt{\delta}\}$. The above shows that there is an event ${\mathcal A}$ of $(Z,W)$-probability at least $1-2\exp(-c_3 \delta m )$ on which the following holds: there \red{exists} $J \subset [m]$ of cardinality at least $\delta m/4$ \red{such that} for every $j \in J$,
$$
|Z_j| \leq \lambda, \ \ \ |W_j| \leq \lambda, \ \ {\rm and} \ \ |Z_j-W_j| \geq \kappa \|\red{Z}-\red{W}\|_{L^1}.
$$

Now fix two sequences of numbers $(z_i)_{i=1}^m$ and $(w_i)_{i=1}^m$ and consider the independent events
$$
E_i = \{\sign(z_i+\tau_i)\neq \sign(w_i+\tau_i) \}, \ \ \ 1 \leq i \leq m.
$$
Recall that by \eqref{eqn:hypSepMinAssTauL1}, for every $i \in [m]$, if $|z_i| \leq \lambda$ and $|w_i| \leq \lambda$ then
\begin{align*}
& \bP_{\tau}(\sign(z_i+\tau_i)\neq \sign(w_i+\tau_i))
\\
& \qquad  = \bP_\tau (z_i+\tau_i>0, \ w_i+\tau_i\leq 0) + \red{\bP_{\tau}}(z_i+\tau_i \leq 0, \ w_i+\tau_i>0) \\
& \qquad  = \bP_\tau (-z_i<\tau\leq -w_i) + \bP_\tau(-w_i<\tau\leq -z_i)
\\
& \qquad \geq c_\tau |z_i-w_i|.
\end{align*}
Hence, for every realization of $(Z_i)_{i=1}^m$ and $(W_i)_{i=1}^m$ from the event ${\mathcal A}$,
$$
|\{j: \bP_{\tau}(E_j) \geq c_\tau \kappa \|Z-W\|_{L^1}\}| \geq \frac{\delta m}{4}.
$$
It follows that there are absolute constants $c_4$ and $c_5$, such that with $\tau$-probability at least $1-2\exp(-c_4 c_\tau \kappa  \delta m \|Z-W\|_{L^1})$,
$$
\sum_{i=1}^m \IND_{E_i} \geq \sum_{j \in J} \IND_{E_j} \geq \frac{|J|}{2} \cdot c_\tau \kappa \|Z-W\|_{L^1} \geq c_5 c_\tau \kappa \delta  m\|Z-W\|_{L^1}.
$$
Thus, with the \red{desired} probability with respect to $(Z_i)_{i=1}^m$, $(W_i)_{i=1}^m$ and $(\tau_i)_{i=1}^m$, one has
$$
|\{i: \sign(Z_i+\tau_i)\neq \sign(W_i+\tau_i) \}| \geq c_5 c_\tau \kappa \delta m\|Z-W\|_{L^1},
$$
as claimed.
\end{proof}

Next, let us consider the random variable $\tau$ and the random vector $X$ from Theorem~\ref{thm:tess}: $\tau\sim\cU[-\la,\la]$ and $X$ is isotropic, symmetric and satisfies an $L^1$-$L^2$ norm equivalence with constant $L$. By the Paley-Zygmund inequality (see, e.g., \cite{PeG99}) there are constants $\kappa$ and $\delta$ that depend only on $L$ for which, for every $t \in \R^n$,
$$
\bP( |\inr{X,t}| \geq \kappa \|\inr{X,t}\|_{L^1} ) \geq \delta.
$$
Therefore, $\tau$ satisfies \eqref{eqn:hypSepMinAssTauL1} with constant $\red{c_{\tau}=1/(2\lambda)}$ and the random variables $Z=\inr{X,x}$ and $W=\inr{X,w}$ satisfy Lemma~\ref{lemma:Z-W-tau} with constants $\kappa$ and $\delta$ that depend only on the equivalence constant $L$.

\vskip0.4cm

\noindent {\bf Proof of Theorem \ref{thm:individual-separation}.}
Clearly, by Lemma \ref{lemma:Z-W-tau},
$$
|\{i : \sign(\inr{X_i,x}+\tau_i) \not = \sign(\inr{X_i,y}+\tau_i)\}| \geq c(L) m \frac{\|x-y\|_2}{\lambda}
$$
with the promised probability, using the fact that
$$
\max\{ \|Z\|_{L^2}, \ \|W\|_{L^2} \} = \max \{ \|\inr{X,x}\|_{L^2}, \ \|\inr{X,y}\|_{L^2} \} \leq R.
$$
One has to show that in addition,  $|\inr{X_i,x}+\tau_i|$ and $|\inr{X_i,x}+\tau_i|$ are also reasonably large. To that end, one may apply \eqref{eqn:SBindep} twice, for $Z=\inr{X,x}$ and $Z=\inr{X,y}$, to see that for any $\eps>0$,
$$
\max\left\{\bP(|\inr{X,x}+\tau|\leq \eps), \ \bP(|\inr{X,y}+\tau|\leq \eps)\right\} \leq \frac{\eps}{\lambda}.
$$
Therefore, with probability at least $1-2\exp\left(-c \frac{\eps}{\lambda} m \right)$, there are at most $4\eps m/\lambda$ indices $i$ for which
$$
\red{\min}\left\{|\inr{X_i,x}+\tau|, \ |\inr{X_i,y}+\tau| \right\} \red{\leq} \eps;
$$
hence, setting $\eps=(c(L)/8)\|x-y\|_2$ completes the proof.
\endproof

Next, one has to use the individual high probability estimate from Theorem \ref{thm:individual-separation} to obtain a uniform estimate in $T$. The idea is to use a covering argument combined with a simple stability property:
\begin{Lemma} \label{lemma:stability}
Fix a realization of $X$ and $\tau$ and set $r^\prime>0$. Assume that $\|w-v\|_2 \geq r^\prime$ and that
$$
|\inr{X,x-v}| \leq \theta r^{\prime}/3 \ \  \ \ |\inr{X,y-w}| \leq \theta r^{\prime}/3.
$$
If $v$ and $w$ are $\theta$-well separated by $H_{X,\tau}$ then $x$ and $y$ are separated by $H_{X,\tau}$.
\end{Lemma}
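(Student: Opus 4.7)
The plan is to argue by preservation of signs: since $v$ and $w$ are $\theta$-well separated by $H_{X,\tau}$, the linear forms $\inr{X,v}+\tau$ and $\inr{X,w}+\tau$ are not merely of opposite sign, but each has absolute value at least $\theta\|v-w\|_2 \geq \theta r'$. This ``margin'' is much larger than the perturbations incurred by replacing $v$ by $x$ and $w$ by $y$, so the signs cannot flip.

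Concretely, the first step is to write
\[
\inr{X,x}+\tau = (\inr{X,v}+\tau) + \inr{X,x-v},
\]
and observe that by hypothesis $|\inr{X,x-v}| \leq \theta r'/3 \leq \theta\|v-w\|_2/3$, whereas $|\inr{X,v}+\tau| \geq \theta\|v-w\|_2$ by the definition of $\theta$-well-separation. Hence the perturbation cannot change the sign: $\sign(\inr{X,x}+\tau) = \sign(\inr{X,v}+\tau)$. The second step is the analogous computation for $y$ and $w$, giving $\sign(\inr{X,y}+\tau) = \sign(\inr{X,w}+\tau)$. Combining these two equalities with the third bullet of Definition~\ref{def:large-margin} applied to $v,w$, namely $\sign(\inr{X,v}+\tau)\neq \sign(\inr{X,w}+\tau)$, immediately yields $\sign(\inr{X,x}+\tau)\neq \sign(\inr{X,y}+\tau)$, i.e.\ $H_{X,\tau}$ separates $x$ and $y$.

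There is no real obstacle here; the lemma is essentially a bookkeeping statement making precise the intuition that ``large margin implies robustness of the sign pattern under small perturbations.'' The only subtle point worth mentioning explicitly is the use of $\|v-w\|_2 \geq r'$ to convert the hypothesis $|\inr{X,x-v}|\leq \theta r'/3$ (formulated in terms of $r'$) into the bound $|\inr{X,x-v}|\leq \theta\|v-w\|_2/3$ needed to compare against the margin $|\inr{X,v}+\tau|\geq \theta\|v-w\|_2$; the factor $1/3$ is comfortably less than $1$, so a strict inequality $|\inr{X,x-v}| < |\inr{X,v}+\tau|$ indeed holds and the sign is preserved.
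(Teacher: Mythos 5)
Your proof is correct and takes the same approach as the paper: both observe that the $\theta\|v-w\|_2$ margin of $v,w$ against $H_{X,\tau}$ dominates the perturbations $|\inr{X,x-v}|,|\inr{X,y-w}|\leq \theta r'/3\leq\theta\|v-w\|_2/3$, so the signs of $\inr{X,x}+\tau$ and $\inr{X,y}+\tau$ agree with those of $\inr{X,v}+\tau$ and $\inr{X,w}+\tau$ respectively, which are opposite. You have merely written out the decomposition and the comparison that the paper leaves implicit.
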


\begin{proof}
Since $v$ and $w$ are $\theta$-well separated by $H_{X,\tau}$, one has
$$
\sign(\inr{X,v} + \tau) \not= \sign(\inr{X,w} + \tau), \ \  |\inr{X,v} + \tau| \geq \theta \|v-w\|_2, \ \ \ |\inr{X,w} + \tau| \geq \theta \|v-w\|_2.
$$
Therefore, if
$$
|\inr{X,x-v}| \leq \theta r^{\prime}/3 \ \ \ \ {\rm and } \ \ \ \ |\inr{X,y-w}| \leq \theta r^{\prime}/3
$$
it follows that $\sign(\inr{X,x} + \tau) \not= \sign(\inr{X,y} + \tau)$ \red{(See Figure~\ref{fig:goodHyper} for an illustration).}
\end{proof}

\begin{center}
\begin{figure}
\includegraphics[scale=0.8,trim={255 405 142 206},clip]{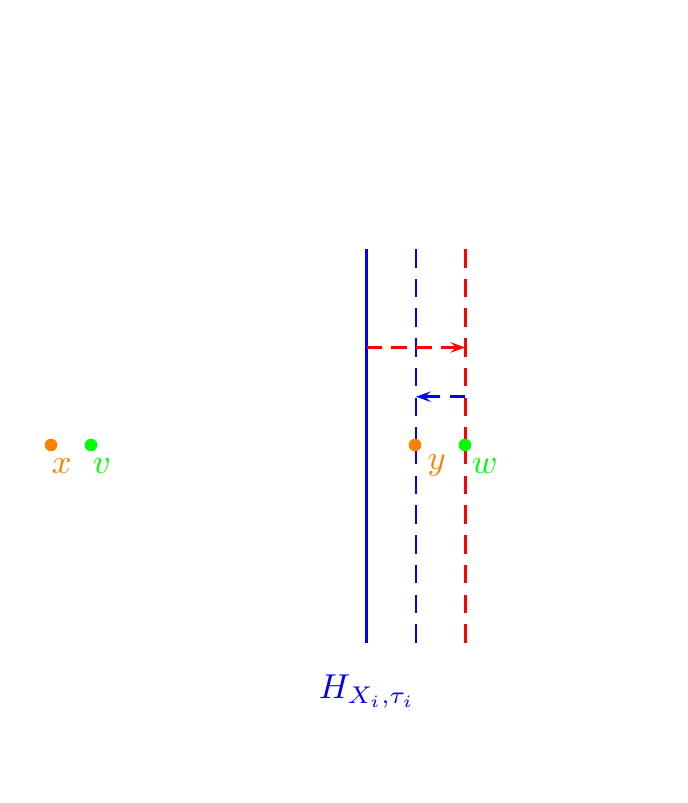}
\caption{Picture of a `good' hyperplane $H_{X_i,\tau_i}$ that well-separates $v$ and $w$. \red{On the one hand}, one needs to shift the hyperplane in parallel by a distance proportional to $\theta\|v-w\|_2$ to hit $w$ (shift marked in red). On the other hand, the parallel shift needed to hit $y$ when starting from $w$ is \red{less than half} this distance (shift marked in blue). As a consequence, a good hyperplane separates $x$ and $y$.}
\label{fig:goodHyper}
\end{figure}
\end{center}

The key component in the proof of Theorem \ref{thm:tess} is the following fact:

\begin{Theorem} \label{thm:single-to-local}
There exist constants $c_0,\ldots,c_6$ that depend only on $L$ for which the following holds. Let $\blue{\lambda \geq c_0 R}$, $r^\prime \leq \lambda/2$, and $r^{\prime \prime} \leq r^\prime/4$. Assume that
\begin{equation} \label{eq:cond-r-1}
\log {\mathcal N}(T ,r^{\prime \prime} ) \leq c_1 m \frac{r^{\prime}}{\lambda},
\end{equation}
and that
\begin{equation} \label{eq:cond-r-2}
\E \sup_{z \in (T-T) \cap r^{\prime \prime} B_2^n} |\{i: |\inr{X_i,z}| \geq c_2 r^{\prime}\}| \leq m\frac{c_3 r^\prime}{\lambda}.
\end{equation}
Then with probability at least $1-8\exp\left(-c_4 m (r^{\prime}/\lambda)^2\right)$,
for every $x,y \in T$ such that $\|x-y\|_2 \geq 2r^\prime$,
$$
|\{i : \sign(\inr{X_i,x}+\tau_i) \not=\sign(\inr{X_i,y}+\tau_i)\}| \geq c_5 \frac{ m r^\prime}{\lambda},
$$
and for every $x,y \in T$ such that $\|x-y\|_2 \leq r^{\prime \prime}/2$,
\begin{equation}
\label{eqn:locUpBdHeavy}
|\{i : \sign(\inr{X_i,x}+\tau_i) \not=\sign(\inr{X_i,y}+\tau_i)\}| \leq c_6 \frac{ mr^{\prime}}{\lambda}.
\end{equation}
\end{Theorem}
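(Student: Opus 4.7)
The plan is to combine Theorem~\ref{thm:individual-separation} (pointwise well-separation for a fixed pair) with a net argument driven by condition~\eqref{eq:cond-r-1}, and then transfer separation from the net to arbitrary points of $T$ via Lemma~\ref{lemma:stability}, whose hypothesis is supplied uniformly by condition~\eqref{eq:cond-r-2}. Fix an $r''$-net $N\subset T$ with $|N|={\mathcal N}(T,r'')$ so that $\log|N|\lesssim mr'/\lambda$, and set the constant $c_2$ in~\eqref{eq:cond-r-2} equal to $c_3/3$, where $c_3$ is the well-separation margin from Theorem~\ref{thm:individual-separation}. Two auxiliary events drive the argument: the event $\Omega_1$ on which Theorem~\ref{thm:individual-separation} produces $|I_{v,w}(c_3)|\geq c_4 m\|v-w\|_2/\lambda$ simultaneously for every pair $(v,w)\in N\times N$ with $\|v-w\|_2\geq r'$, and the event $\Omega_2$ on which the mean bound~\eqref{eq:cond-r-2} is upgraded by Talagrand concentration for bounded empirical processes to
\[
\sup_{z\in T_{r''}}\bigl|\{i:|\langle X_i,z\rangle|\geq c_3 r'/3\}\bigr|\leq C\frac{mr'}{\lambda},
\]
with $C$ as small as one wishes (by shrinking the mean constant in~\eqref{eq:cond-r-2}). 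The union bound for $\Omega_1$ over $|N|^2$ pairs is absorbed by~\eqref{eq:cond-r-1} with a sufficiently small $c_1$, and both events have failure probability $\lesssim \exp(-cmr'/\lambda)$.

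For the lower bound, let $x,y\in T$ satisfy $\|x-y\|_2\geq 2r'$ and choose $v,w\in N$ with $\|x-v\|_2,\|y-w\|_2\leq r''\leq r'/4$; then $\|v-w\|_2\geq r'$, so $|I_{v,w}(c_3)|\geq c_4 mr'/\lambda$ on $\Omega_1$. By Lemma~\ref{lemma:stability} any $i\in I_{v,w}(c_3)$ with both $|\langle X_i,x-v\rangle|\leq c_3 r'/3$ and $|\langle X_i,y-w\rangle|\leq c_3 r'/3$ also separates $x$ from $y$. Since $x-v,y-w\in T_{r''}$, the number of bad indices failing either bound is $\leq 2Cmr'/\lambda$ on $\Omega_2$, and choosing $C<c_4/2$ yields the required $\geq(c_4-2C)mr'/\lambda$ separating hyperplanes.

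For the upper bound on close pairs $\|x-y\|_2\leq r''/2$, use the elementary fact that $\sign(a)\neq\sign(b)$ forces $\min(|a|,|b|)\leq|a-b|$: any separating index $i$ satisfies $|\langle X_i,x\rangle+\tau_i|\leq|\langle X_i,x-y\rangle|$. Separating indices split into those with $|\langle X_i,x-y\rangle|\geq c_3 r'/3$, which are bounded by $Cmr'/\lambda$ on $\Omega_2$ because $x-y\in T_{r''}$, and those with $|\langle X_i,x-y\rangle|<c_3 r'/3$, which force $|\langle X_i,x\rangle+\tau_i|<c_3 r'/3$. Picking $v\in N$ with $\|x-v\|_2\leq r''$ and excluding the at most $Cmr'/\lambda$ indices (on $\Omega_2$) with $|\langle X_i,x-v\rangle|\geq c_3 r'/3$, the remainder satisfies $|\langle X_i,v\rangle+\tau_i|<2c_3 r'/3$; the count of such indices is controlled uniformly in $v\in N$ by applying~\eqref{eq:single-upper} to $Z=\langle X,v\rangle$ with $\eps=2c_3 r'/3$ and union-bounding over $N$ via~\eqref{eq:cond-r-1}, yielding a third event $\Omega_3$ with the same kind of failure probability.

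The main technical obstacle lies in the calibration of constants so that the margin $c_4$ from Theorem~\ref{thm:individual-separation} strictly dominates the bad-index count $2C$ coming from~\eqref{eq:cond-r-2}: this pins down the admissible value of the mean constant in~\eqref{eq:cond-r-2}, which in turn interacts with the choice $c_2=c_3/3$ defining the covering problem there. The stated failure probability $8\exp(-c_4 m(r'/\lambda)^2)$ is in fact strictly weaker than what each of $\Omega_1,\Omega_2,\Omega_3$ individually yields (namely $\exp(-cmr'/\lambda)$, since $r'/\lambda\leq 1/2$), and is presented in this joint form simply to accommodate the three events in a single clean statement.
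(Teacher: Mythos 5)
Your overall architecture matches the paper exactly: an $r''$-net $V\subset T$, Theorem~\ref{thm:individual-separation} plus \eqref{eq:single-upper} applied on the net with a union bound absorbed by \eqref{eq:cond-r-1}, Lemma~\ref{lemma:stability} to transfer well-separation from the net to $T$, and an oscillation event supplying the hypothesis of Lemma~\ref{lemma:stability} uniformly from \eqref{eq:cond-r-2}. The calibration constraint you flag at the end (the separation margin must beat the bad-index count) is also exactly the constant-tracking the paper performs.

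The genuine gap is in the concentration step for $\Omega_2$, and it is not cosmetic. You claim that Talagrand-type concentration for the bounded empirical process $\sup_{z\in T_{r''}}|\{i:|\inr{X_i,z}|\geq c r'\}|$ yields failure probability $\exp(-cmr'/\lambda)$, and hence that the stated $8\exp(-c_4m(r'/\lambda)^2)$ is ``strictly weaker than what each event yields.'' This is false in the heavy-tailed setting. For Bousquet/Talagrand concentration to improve on the bounded differences rate at deviation $t=mr'/\lambda$, one needs the weak variance $\sup_{z}\operatorname{Var}\bigl(\IND_{\{|\inr{X,z}|\geq cr'\}}\bigr)$ to be of order $r'/\lambda$. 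Under only the $L^1$-$L^2$ equivalence \eqref{eqn:L1L2ass}, the available tail bound on $\inr{X,z}$ for $\|z\|_2\leq r''$ is Chebyshev, giving $\bP(|\inr{X,z}|\geq cr')\lesssim (r''/r')^2\lesssim 1$, a constant, not $\lesssim r'/\lambda$. Then $m\sigma^2\simeq m$ and Bousquet's inequality degrades to $\exp(-ct^2/m)=\exp(-cm(r'/\lambda)^2)$, which is exactly what the bounded differences inequality gives directly; this is why the paper uses bounded differences here, and why the exponent $(r'/\lambda)^2$ is the true rate in the heavy-tailed case. The improved rate $\exp(-cmr'/\lambda)$ only becomes available when $X$ is subgaussian, which is precisely the content of Theorem~\ref{thm:single-to-local-subgaussian} and the mechanism behind Theorem~\ref{thm:monotone-subgaussian}. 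If you simply replace your Talagrand step by the bounded differences inequality (with $t=c(r'/\lambda)$), the rest of your argument goes through and matches the paper's proof, probability estimate included.
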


\begin{proof}
Let $V \subset T$ be an $r^{\prime \prime}$-cover of $T$. We apply \eqref{eq:single-upper} to every $Z=\inr{X,v}$, $v \in V$, and Theorem~\ref{thm:individual-separation} to every pair of points from $V$. Let $c_1\leq \min\{c,c_2\}/2$, where $c$ and $c_2$ are as in \eqref{eq:single-upper} and Theorem~\ref{thm:individual-separation}, respectively. If
\begin{equation*}
\log |V| \leq \red{c_1} \frac{m r^{\prime}}{\lambda}
\end{equation*}
then by the union bound there is an event ${\mathcal A}_1$ of probability at least
$1-6\exp\left(-\red{c_2}m r^{\prime}/\lambda\right)$ such that for every $v \in V$,
\begin{equation} \label{eq:in-proof-single-to-local-1}
|\{i : |\inr{X_i,v}+\tau_i| \geq r^{\prime}\}| \geq \left(1-\frac{2 r^\prime}{\lambda}\right)m
\end{equation}
and if $v,w \in V$ satisfy  $\|v-w\|_2 \geq r^\prime$ then
$$
|I_{v,w} (c_3)| \geq  c_4m \frac{\|v-w\|_2}{\lambda},
$$
\red{where} the constants $c_2,c_3$ and $c_4$ depend only on $L$.

Now fix $x,y \in T$ that satisfy $\|x-y\|_2 \geq 2r^\prime$ and let $v,w$ be the nearest points in $V$ to $x$ and $y$ respectively. By Lemma \ref{lemma:stability}, if $i \in I_{v,w}(c_3)$ and
$$
|\inr{X_i,x-v}|, \ |\inr{X_i,y-w}| \leq \frac{c_3}{3}r^{\prime},
$$
then $x$ and $y$ are separated by $H_{X_i,\tau_i}$.

Note that $x-v,y-w \in (T-T) \cap r^{\prime \prime} B_2^n$, let $\tilde{c}_3=\min\{c_3,1\}$ and set ${\mathcal A}_2$ to be the event
\begin{equation} \label{eq:osc-in-proof-tess}
\sup_{z \in (T-T) \cap r^{\prime \prime} B_2^n} |\{i: |\inr{X_i,z}| \geq (\tilde{c}_3/3) r^{\prime}\}| \leq \frac{c_4}{2} \cdot \frac{m r^\prime}{\lambda}.
\end{equation}
Hence, on ${\mathcal A}_1 \cap {\mathcal A}_2$, if $\|x-y\|_2 \geq r^\prime$ then
$$
|\{i : \sign(\inr{X_i,x}+\tau_i) \not=\sign(\inr{X_i,y}+\tau_i)\}| \geq \frac{c_4}{2} \cdot \frac{m r^\prime}{\lambda},
$$
which is the wanted lower bound.

At the same time, if $\|x-v\|_2 \leq r^{\prime \prime}$ then by combining \eqref{eq:in-proof-single-to-local-1} and \eqref{eq:osc-in-proof-tess}, one has the upper bound
$$
|\{i: \sign(\inr{X_i,x}+\tau_i) \not=\sign(\inr{X_i,v}+\tau_i)\}| \leq c_5\frac{m r^\prime}{\lambda}.
$$
All that is left is to estimate the probability of the event ${\mathcal A}_2$. Note that
$$
|\{i \in [m]: |\inr{X_i,w}| \geq (c_3/3) r^{\prime} \}| = \sum_{i =1}^m \IND_{\left\{|\langle X_i,w\rangle| \geq (c_3/3) r^{\prime \prime} \right\}} \red{=:} H_w,
$$
and by the bounded differences inequality \red{(see e.g.\ \cite[Theorem 6.2]{BLM13})},
$$
\bP\left( \sup_{w \in (T-T) \cap r^{\prime \prime} B_2^n} H_w \geq \E \sup_{w \in (T-T) \cap r^{\prime \prime} B_2^n} H_w + mt \right) \leq 2\exp(-cmt^2)
$$
for a suitable absolute constant $c$. The claim follows with the choice of $t=(c_4/4) \cdot  (r^\prime/\lambda)$.
\end{proof}

\par

\noindent{\bf Proof of Theorem \ref{thm:tess}.} We apply Theorem \ref{thm:single-to-local} for the choice $r^\prime =\rho/2$. Let us identify the conditions on $r^{\prime \prime}$ one has to impose to ensure that \eqref{eq:cond-r-2} is satisfied.

By the Gin\red{\'{e}}-Zinn symmetrization theorem \red{\cite{GiZ84}} and the contraction inequality for Bernoulli processes \red{\cite{LeT91}}, one has
\begin{align*}
& \E \sup_{z \in (T-T) \cap r^{\prime \prime} B_2^n} |\{i: |\inr{X_i,z}| \geq c \rho\}| \leq \E \sup_{z \in (T-T) \cap r^{\prime \prime} B_2^n} \frac{1}{c \rho} \sum_{i=1}^m |\inr{X_i,z}|
\\
\leq & \E \sup_{\red{z} \in (T-T) \cap r^{\prime \prime} B_2^n} \frac{2}{c \rho} \left|\sum_{i=1}^m \eps_i \inr{X_i,z}\right| + \frac{mr^{\prime \prime}}{c \rho}= (1)+(2).
\end{align*}
To satisfy \eqref{eq:cond-r-2} it suffices to bound both terms by $cm\rho/\lambda$. The required estimate on $(2)$ holds once
$$
r^{\prime \prime} \leq c(L) \frac{\red{\rho^2}}{\lambda},
$$
and to ensure a suitable estimate on $(1)$ it suffices that
$$
E(\red{T_{r^{\prime \prime}}}) \leq c(L) \sqrt{m} \frac{\rho^2}{\lambda}.
$$
The claim follows by setting $r=r^{\prime \prime}$.

\par

This immediately yields the lower bound in Theorem~\ref{thm:tess}. To complete the proof of the upper bound, recall that $T$ is $(r,\gamma)$-metrically convex. \blue{For given $x$, $y$ with $\|x-y\|_2\geq 2r$} let $(z_j)_{j=1}^\ell$ be as in Definition \ref{def:r-metric-convex}. Then
$$
|\{i : \sign(\inr{X_i,x}+\tau_i) \not=\sign(\inr{X_i,y}+\tau_i)\}| \leq \sum_{j=0}^\ell |\{i : \sign(\inr{X_i,z_j}+\tau_i) \not=\sign(\inr{X_i,z_{j+1}}+\tau_i)\}|,
$$
and the claim follows from the `local' upper bound \eqref{eqn:locUpBdHeavy}.
\endproof

\subsection{The subgaussian scenario}

When $X$ is an $L$-subgaussian random vector one may establish an improved version of Theorem \ref{thm:single-to-local}: first, by showing that one may take $r^{\prime \prime}$ to be of the order of $r^\prime$ up to a logarithmic factor; and second, by providing a better probability estimate on the outcome. Moreover, thanks to the subgaussian property, one may replace the empirical parameter $E(T_r)$ by its Gaussian counterpart, $\ell_*(T_r)$.

\begin{Theorem} \label{thm:single-to-local-subgaussian}
There exist constants $c_0,\ldots,c_5$ that depend only on $L$ for which the following holds. Assume that $\blue{\lambda \geq c_0 R}$,
$$
r^{\prime \prime} \leq c_1 \frac{r^\prime}{\sqrt{\log(e\lambda/r^\prime)}},
$$
that
\begin{equation} \label{eq:cond-r-1-sg}
\log {\mathcal N}(T ,r^{\prime \prime}) \leq c_2 \frac{mr^{\prime}}{\lambda},
\end{equation}
and that
\begin{equation} \label{eq:cond-r-2-sg}
\ell_*(T_{r^{\prime \prime}}) \leq c_3 \sqrt{m} \frac{(r^\prime)^{3/2}}{\sqrt{\lambda}}
\end{equation}
Then with probability at least $1-8\exp\left(-c_4 m r^{\prime}/{\lambda}\right)$,
for every $x,y \in T$ such that $\|x-y\|_2 \geq 2r^\prime$,
$$
|\{i : \sign(\inr{X_i,x}+\tau_i) \not=\sign(\inr{X_i,y}+\tau_i)\}| \geq c_5 \frac{ m r^\prime}{\lambda},
$$
and for every $x,y \in T$ such that $\|x-y\|_2 \leq r^{\prime \prime}/2$,
$$
|\{i : \sign(\inr{X_i,x}+\tau_i) \not=\sign(\inr{X_i,y}+\tau_i)\}| \leq c_6 \frac{ mr^{\prime}}{\lambda}.
$$
\end{Theorem}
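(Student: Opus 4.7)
The approach mirrors the proof of Theorem~\ref{thm:single-to-local}, with a strengthened oscillation bound exploiting the subgaussian hypothesis. First, I would fix an $r''$-net $V\subset T$ with $|V|={\mathcal N}(T,r'')$; by \eqref{eq:cond-r-1-sg}, $\log|V|\leq c_2 mr'/\lambda$. I then apply \eqref{eq:single-upper} (with $\eps=r'$) at each $v\in V$ and Theorem~\ref{thm:individual-separation} to each pair $(v,w)\in V\times V$ with $\|v-w\|_2\geq r'$. A union bound absorbs the cardinality factor into the probability and yields an event ${\mathcal A}_1$ of probability at least $1-C\exp(-c_3 mr'/\lambda)$ on which $|\{i:|\inr{X_i,v}+\tau_i|\geq r'\}|\geq(1-2r'/\lambda)m$ for every $v\in V$ and $|I_{v,w}(c)|\geq c'm\|v-w\|_2/\lambda\geq c'mr'/\lambda$ for every $v,w\in V$ with $\|v-w\|_2\geq r'$. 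As in the proof of Theorem~\ref{thm:single-to-local}, Lemma~\ref{lemma:stability} then reduces the task to showing that, with probability at least $1-\exp(-c_4 mr'/\lambda)$, the local oscillation
$$\mathcal{O}:=\sup_{z\in T_{r''}}\bigl|\{i:|\inr{X_i,z}|\geq c_0 r'\}\bigr|$$
is at most $c\,mr'/\lambda$.

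The heart of the proof is the control of $\mathcal{O}$. The subgaussian assumption provides the tail bound $\bP(|\inr{X,z}|\geq c_0 r')\leq 2\exp(-c(r'/Lr'')^2)$ for $\|z\|_2\leq r''$. Under the choice $r''\leq c_1 r'/\sqrt{\log(e\lambda/r')}$, with $c_1$ small in terms of $L$ and $c_0$, this pointwise probability is at most $r'/\lambda$, so each indicator has mean and variance bounded by $r'/\lambda$. Applying Talagrand's functional Bernstein inequality to the $\{0,1\}$-valued empirical process $z\mapsto\sum_i\IND_{|\inr{X_i,z}|\geq c_0 r'}$ yields
$$\bP(\mathcal{O}\geq\E\mathcal{O}+t)\leq\exp\bigl(-c\min\bigl(t^2/(mr'/\lambda+\E\mathcal{O}),\,t\bigr)\bigr),$$
which at $t\simeq mr'/\lambda$ produces the announced rate $\exp(-c_4 mr'/\lambda)$, so long as $\E\mathcal{O}\lesssim mr'/\lambda$. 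For $\E\mathcal{O}$, I would perform a Gin\'e--Zinn symmetrization, contract against a smooth approximation of the indicator on scale $r'$, and invoke Talagrand's subgaussian comparison theorem ($E(K)\lesssim_L\ell_*(K)$) to reduce to a bound in terms of $\ell_*(T_{r''})$; the hypothesis \eqref{eq:cond-r-2-sg} then delivers the target. The two conclusions of the theorem follow verbatim from the end of the proof of Theorem~\ref{thm:single-to-local}: for $\|x-y\|_2\geq 2r'$, Lemma~\ref{lemma:stability} applied to nearest net points $v,w\in V$ produces the lower bound, while for $\|x-y\|_2\leq r''/2$ the threshold estimate at a common net point and the oscillation bound yield the upper bound.

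The main obstacle is the expectation bound $\E\mathcal{O}\lesssim mr'/\lambda$ with the delicate exponent in \eqref{eq:cond-r-2-sg}. A naive symmetrization followed by contraction against a $1/r'$-Lipschitz surrogate of the indicator produces a bound of order $\sqrt{m}\ell_*(T_{r''})/r'$, which under \eqref{eq:cond-r-2-sg} is only $\sim m\sqrt{r'/\lambda}$, short of the target by a factor of $\sqrt{\lambda/r'}$. Recovering the correct order requires a finer argument: either a peeling step that isolates the few indices on which $\sup_{z\in T_{r''}}|\inr{X_i,z}|$ is atypically large from the typical indices (controlled by exponential concentration at scale $r'/\sqrt{\log(e\lambda/r')}$), or a direct generic-chaining bound on the indicator empirical process using the $\psi_2$-metric on $T_{r''}$ induced by $X$. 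Either route relies critically on the full subgaussian hypothesis and on the majorizing measures theorem, and constitutes the technically heaviest step of the proof.
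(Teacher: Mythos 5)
Your framework is right — the net, \eqref{eq:single-upper}, Theorem~\ref{thm:individual-separation} at pairs of net points, Lemma~\ref{lemma:stability}, and a reduction to the oscillation bound $\sup_{z\in T_{r''}}|\{i:|\inr{X_i,z}|\geq cr'\}|\lesssim mr'/\lambda$ — and you are also right that this is where the subgaussian hypothesis must be spent. But you correctly diagnose, and then fail to fill, a genuine gap: your proposed route (functional Bernstein for the indicator process, with the expectation controlled by symmetrization and contraction against a $1/r'$-Lipschitz surrogate) yields $\E\mathcal{O}\lesssim\sqrt{m}\,\ell_*(T_{r''})/r'\sim m\sqrt{r'/\lambda}$, off by a factor $\sqrt{\lambda/r'}$, and the two remedies you gesture at (a peeling of atypical indices, or generic chaining directly on the indicator process in the $\psi_2$-metric) are not worked out and are not what the paper does.

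The paper's mechanism is different and more direct: rather than controlling the cardinality of the level set, it controls the $k$-th order statistic. If $a_k^*$ is the $k$-th largest of $(|\inr{X_i,z}|)_{i=1}^m$, then $a_k^*\leq\max_{|I|\leq k}\bigl(k^{-1}\sum_{i\in I}\inr{X_i,z}^2\bigr)^{1/2}$, so it suffices to show, with $k\simeq mr'/\lambda$, that
$$
\sup_{z\in T_{r''}}\max_{|I|\leq k}\Bigl(\tfrac{1}{k}\sum_{i\in I}\inr{X_i,z}^2\Bigr)^{1/2}\leq cr'.
$$
This is exactly the quantity controlled by Theorem~\ref{thm:monotone-subgaussian}, a uniform sparse-$\ell_2$ (RIP-type) bound obtained by generic chaining, which gives $\ell_*(T_{r''})+u\,r''\sqrt{k\log(em/k)}$ with probability $1-2\exp(-c u^2 k\log(em/k))$. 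Plugging in $k\simeq mr'/\lambda$ and matching to $cr'\sqrt{k}$ produces precisely the conditions $r''\lesssim r'/\sqrt{\log(e\lambda/r')}$ and $m\gtrsim\lambda\ell_*^2(T_{r''})/(r')^3$, i.e.\ \eqref{eq:cond-r-2-sg}, with a probability bound $1-2\exp(-c(r'/\lambda)\log(e\lambda/r')m)$ that is even slightly stronger than the announced $1-8\exp(-cmr'/\lambda)$. So the missing idea is to trade the level-set cardinality for an order statistic and then invoke the sparse quadratic-form bound, rather than trying to concentrate the indicator process around a delicate expectation.
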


The only difference between the proof of Theorem \ref{thm:single-to-local-subgaussian} and that of Theorem \ref{thm:single-to-local} is the control one has on the probability that
\begin{equation} \label{eq:osc-subgaussian-1}
\sup_{z \in (T-T) \cap r^{\prime \prime} B_2^n} |\{i: |\inr{X_i,z}| \geq c r^{\prime}\}| \leq Cm\frac{ r^\prime}{\lambda}.
\end{equation}
When $X$ merely satisfies \red{an} $L^1$-$L^2$ norm equivalence, one has to resort to the bounded differences inequality for a high probability estimate. However, when $X$ is $L$-subgaussian one has more machinery at one's disposal. Specifically, we use the following fact.
\begin{Theorem} \label{thm:monotone-subgaussian}
Let $X$ be an \red{isotropic} $L$-subgaussian random vector and let $\red{S} \subset \R^n$. If $1 \leq k \leq m$ and $u \geq 1$ then with probability at least $1-2\exp(-c_1 u^2 k \log(e\red{m}/k))$,
$$
\sup_{z \in \red{S}} \max_{|I| \leq k} \left(\sum_{i \in I} \inr{X_i,z}^2 \right)^{1/2} \leq c_2 \left(\ell_*(\red{S}) + u d_{\red{S}} \sqrt{k \log(em/k)} \right),
$$
where $c_1$ and $c_2$ depend only on $L$ and $d_{\red{S}} = \sup_{z \in \red{S}} \|z\|_2$.
\end{Theorem}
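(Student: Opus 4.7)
The plan is to recognize the left-hand side as the supremum of a bilinear $L$-subgaussian process and apply Talagrand's generic-chaining deviation bound.

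The starting point is the duality identity
$$
\max_{|I|\leq k}\left(\sum_{i\in I}\inr{X_i,z}^2\right)^{1/2}=\sup_{y\in B_k}\sum_{i=1}^m y_i\inr{X_i,z},\qquad B_k:=\{y\in\R^m:\|y\|_2\leq 1,\ \|y\|_0\leq k\},
$$
so that the quantity of interest equals $\sup_{(y,z)\in B_k\times S}Z_{y,z}$ with $Z_{y,z}:=\sum_{i=1}^m y_i\inr{X_i,z}$. Writing $Z_{y,z}-Z_{y',z'}=\sum_i\inr{X_i,y_iz-y_i'z'}$ and exploiting independence together with the $L$-subgaussian property of each $X_i$, one obtains
$$
\|Z_{y,z}-Z_{y',z'}\|_{\psi_2}^2\lesssim_L\sum_{i=1}^m\|y_iz-y_i'z'\|_2^2=\|y\otimes z-y'\otimes z'\|_F^2,
$$
so $(Z_{y,z})$ is an $L$-subgaussian process on $A:=\{y\otimes z:(y,z)\in B_k\times S\}\subset\R^{mn}$ relative to the Frobenius metric.

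Next I would invoke the tail bound for subgaussian processes (a standard consequence of Talagrand's majorizing measures theorem, cf.\ \cite{Tal14}): for every $s\geq 0$,
$$
\bP\!\left(\sup_{a\in A}Z_a\geq c_L\ell_*(A)+s\right)\leq 2\exp\!\left(-\frac{cs^2}{L^2\,\mathrm{diam}_F(A)^2}\right).
$$
The diameter is easy: $\|y\otimes z\|_F=\|y\|_2\|z\|_2\leq d_S$, so $\mathrm{diam}_F(A)\leq 2d_S$. For the Gaussian mean width, split $y\otimes z-y'\otimes z'=y\otimes(z-z')+(y-y')\otimes z'$ and use subadditivity of $\ell_*$ to get
$$
\ell_*(A)\lesssim \ell_*(S)+d_S\,\ell_*(B_k).
$$
The factor $\ell_*(B_k)$ is the expected $\ell_2$-norm of the $k$ largest coordinates (in modulus) of a standard Gaussian vector in $\R^m$, a well-known quantity of order $\sqrt{k\log(em/k)}$.

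Combining these estimates with the choice $s=c\,L\,u\,d_S\sqrt{k\log(em/k)}$ yields
$$
\sup_{(y,z)\in B_k\times S}Z_{y,z}\lesssim_L \ell_*(S)+ud_S\sqrt{k\log(em/k)}
$$
with probability at least $1-2\exp(-c_1u^2k\log(em/k))$, which is exactly the claim. The only step requiring care is the deviation bound for subgaussian processes in terms of $\ell_*(A)$ and $\mathrm{diam}_F(A)$---this is the main input from generic chaining and must be applied with the correct tracking of the $L$-subgaussian constant; everything else reduces to the two convex-geometric estimates on $\ell_*(A)$ and $\mathrm{diam}_F(A)$ above.
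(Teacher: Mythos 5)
Your overall strategy---duality to rewrite the quantity as $\sup_{(y,z)\in B_k\times S}\sum_i y_i\inr{X_i,z}$, then a generic-chaining deviation bound combined with Talagrand's majorizing measures theorem applied to this bilinear subgaussian process, with a separate Chevet-type estimate of the Gaussian width of the index set---is precisely the route the paper alludes to when it refers to generic chaining (Dirksen's tail-bound paper) together with Talagrand's theorem. The diameter estimate, the identification $\ell_*(B_k)\simeq\sqrt{k\log(em/k)}$, and the final choice $s\simeq Lu\,d_S\sqrt{k\log(em/k)}$ are all correct and produce the stated bound and probability.

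The one step that is not justified as written is the claim that $\ell_*(A)\lesssim\ell_*(S)+d_S\,\ell_*(B_k)$ follows from ``subadditivity of $\ell_*$'' applied to the split $y\otimes z-y'\otimes z'=y\otimes(z-z')+(y-y')\otimes z'$. After centering at a base point $y_0\otimes z_0$ and splitting, the piece $\E\sup_{y\in B_k,\,z\in S}\inr{G,y\otimes(z-z_0)}=\E\sup_{y,z}\,y^{T}G(z-z_0)$ still carries a supremum over \emph{both} variables and is itself a bilinear quantity of exactly the same type as $\ell_*(A)$; subadditivity gives no reduction here and certainly does not collapse this term to $\ell_*(S)$. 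What the increment decomposition actually yields is the metric domination $\|y\otimes z-y'\otimes z'\|_F\le\|z-z'\|_2+d_S\|y-y'\|_2$, and from there one must invoke the Sudakov--Fernique/Slepian comparison with the decoupled Gaussian process $\inr{g_1,z}+d_S\inr{g_2,y}$---that is, Chevet's inequality---to obtain $\ell_*(A)\lesssim d_{B_k}\ell_*(S)+d_S\ell_*(B_k)=\ell_*(S)+d_S\ell_*(B_k)$. Once that step is replaced by the comparison-inequality argument, the proof is complete and matches the paper's intended approach.
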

\blue{We omit the proof of Theorem \ref{thm:monotone-subgaussian}, which is standard. It is based on generic chaining (see e.g.\ \cite[Theorem 3.2]{Dir15}) combined with Talagrand's majorizing measures theorem \cite{Tal14}.}

\vskip0.4cm

\noindent {\bf Proof of Theorem \ref{thm:single-to-local-subgaussian}.}
Observe that if $(a_i)_{i=1}^m$ is a sequence of nonnegative numbers then the $k$-largest element satisfies
$$
a_k^* \leq \max_{|I| \leq k} \left(\frac{1}{k} \sum_{i \in I} a_i^2 \right)^{1/2}.
$$
With that in mind, one has to ensure that for $k = Cmr^\prime/\lambda$,
$$
\sup_{z \in (T-T) \cap r^{\prime \prime} B_2^n} \max_{|I| \leq k} \left(\frac{1}{k}\sum_{i \in I} \inr{X_i,z}^2 \right)^{1/2} \leq c r^\prime,
$$
and by Theorem \ref{thm:monotone-subgaussian}, it suffices to verify that
$$
\left(\frac{\sqrt{\lambda} \ell_*(T_{r^{\prime \prime}})}{\sqrt{m r^\prime}} + r^{\prime \prime} \sqrt{\log(e\lambda/r^\prime)} \right) \leq c_1(L)r^\prime.
$$
Clearly, the wanted estimate follows if
$$
r^{\prime \prime} \leq  c_2(L)\frac{r^\prime}{\sqrt{\log(e\lambda/r^\prime)}} \ \ \ \ {\rm and} \ \ \ \ m \geq c_3(L) \lambda \frac{(\ell_*(\red{T_{r^{\prime \prime}}}))^2}{(r^\prime)^3};
$$
and in that case, \eqref{eq:osc-subgaussian-1} holds with probability at least $1-2\exp(-c_4 m (r^\prime/\lambda) \log(e\lambda/r^\prime))$.

The rest of the proof of Theorem \ref{thm:single-to-local-subgaussian} is identical to that of Theorem \ref{thm:single-to-local} and is omitted.
\endproof

\par

Now one may complete the proof of Theorem \ref{thm:tess-subgaussian}, by setting $\rho = 2r^\prime$ and $r =r^{\prime \prime}/2$, and noting that Theorem \ref{thm:single-to-local-subgaussian} yields the lower bound and the `local' upper bound. The upper bound follows directly from the local upper bound and the metric convexity assumption (see the end of the proof of Theorem~\ref{thm:tess} for this argument).
\endproof

\section{Random tessellations - noisy measurements}
With the machinery developed for the proofs of Theorem \ref{thm:tess} and Theorem \ref{thm:tess-subgaussian} at \red{our} disposal, let us present the proofs of Theorem \ref{thm:mainIntroNC-sub} and Theorem \ref{thm:mainIntroNC-heavy}.

Recall that $T \subset R B_2^n$ and that $X$ is an isotropic and symmetric random vector, while $\tau\sim{\mathcal U}[-\la,\la]$. The rows of the measurement matrix $A$ are $(X_i)_{i=1}^m$ and the given observations are the coordinates of the vector $q_{\text{corr}}$, which is a corrupted version of
$$
\sign(Ax+\noise+\thres)= \left(\inr{X_i,x} + \nu_i + \tau_i \right)_{i=1}^m,
$$
by at most $\beta m$ `sign flips'. In the first scenario, $X$ and $\nu$ satisfy an $L^1$-$L^2$ norm equivalence with constant $L$, while in the second they are $L$-subgaussian.

\red{The goal is to show that there is a constant $C$ depending only on $\lambda$ and $L$ so that with high probability, for any $x,y \in T$ that satisfy $\|x-y\|_2 \geq \rho$,
\begin{equation} \label{eq:in-noisy-lower}
|\{i : \sign(\inr{X_i,x}+\nu_i + \tau_i) \not=\sign(\inr{X_i,y}+ \tau_i)\}| \geq C m \|x-y\|_2,
\end{equation}
and at the same time,
\begin{equation} \label{eq:in-noisy-upper-x}
|\{i : \sign(\inr{X_i,x}+\nu_i + \tau_i) \not=\sign(\inr{X_i,x}+ \tau_i)\}| < C m \rho/4.
\end{equation}
Together these conditions imply that the recovery program \eqref{eqn:bitMatchIntro}, which minimizes the Hamming distance between $q_{\text{corr}}$ and $(\sign(\inr{X_i,z}+ \tau_i))_{i=1}^m$ with respect to $z\in T$, achieves reconstruction accuracy $\rho$ as long as the fraction of the corrupted bits is at most $\beta \leq C\rho/4$. Indeed, \eqref{eq:in-noisy-upper-x} implies that any solution $x^{\#}$ of \eqref{eqn:bitMatchIntro} must satisfy
$$d_H(q_{\text{corr}},(\sign(\inr{X_i,x^{\#}}+ \tau_i))_{i=1}^m)<C m \rho/2$$
and then \eqref{eq:in-noisy-lower} shows that $\|x^{\#}-x\|_2\leq \rho$.}

\vskip0.4cm

The proofs of \eqref{eq:in-noisy-lower} in both scenarios follow from minor modifications of the results established in the previous section. Rather than repeating the arguments, let us sketch the adjustments one has to make.

\par
First, one has to consider a modified notion of being `well-separated' by a hyperplane:

\begin{Definition} \label{def:noisy-large-margin}
The hyperplane $H_{X_i,\tau_i}$ \emph{$\theta$-well-separates} $x$ and $y$ if
\begin{itemize}
\item $\sign(\inr{X_i,x}+\nu_i+\tau_i)\neq \sign(\inr{X_i,y}+ \tau_i)$,
\item $|\inr{X_i,x}+\nu_i+\tau_i|\geq \theta\|x-y\|_2$, and
\item $|\inr{X_i,y}+\tau_i|\geq \theta\|x-y\|_2$.
\end{itemize}
Denote by $J_{x,y}(\theta)\subset[m]$ the set of indices for which $H_{X_i,\tau_i}$ $\theta$-well-separates $x$ and $y$.
\end{Definition}

Next, one has to establish the analog of Theorem \ref{thm:individual-separation} and show that every pair $x,y$ is well separated by a fraction of the hyperplanes that is proportional to $\|x-y\|_2$.

\begin{Theorem} \label{thm:noisy-individual-separation}
There are constants $c_1,\ldots,c_4$ that depend only on $L$ for which the following holds. Let $x,y \in R B_2^n$ and set $\lambda \geq c_1 \max\{R,\|\nu\|_{L^2}\}$. With probability at least
$1-4\exp\left(-c_2 m \|x-y\|_2/\lambda\right)$,
$$
|J_{x,y} (c_3)| \geq  c_4 m \frac{\|x-y\|_2}{\lambda}.
$$
\end{Theorem}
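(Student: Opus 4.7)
The argument follows the template of Theorem~\ref{thm:individual-separation}, with the one real novelty being an $L^1$-$L^2$ norm equivalence for the noisy gap $\xi:=\inr{X,x-y}+\nu$. Set $Z=\inr{X,x}+\nu$ and $W=\inr{X,y}$, so that $\sign(\inr{X_i,x}+\nu_i+\tau_i)\neq\sign(\inr{X_i,y}+\tau_i)$ is precisely $\sign(Z_i+\tau_i)\neq\sign(W_i+\tau_i)$. I then apply (the immediate adaptation of) Lemma~\ref{lemma:Z-W-tau} to these $Z,W$, and finally upgrade plain sign-separation to $\theta$-well-separation through the single-point small-ball estimate \eqref{eq:single-upper}.

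\textbf{Step 1 (the main obstacle: small-ball for $\xi$).} By independence of $X$ and $\nu$ together with the symmetry of $X$, the cross term vanishes and $\|\xi\|_{L^2}^2=\|x-y\|_2^2+\|\nu\|_{L^2}^2$. For the $L^1$ norm, introduce an auxiliary symmetric sign $\eps$ independent of $(X,\nu)$; since $\eps\inr{X,x-y}\stackrel{d}{=}\inr{X,x-y}$, the pointwise inequality $|\eps a+b|+|-\eps a+b|\geq 2\max\{|a|,|b|\}$ yields
\begin{equation*}
\|\xi\|_{L^1}=\E_{X,\nu}\E_\eps|\eps\inr{X,x-y}+\nu|\geq \E\max\{|\inr{X,x-y}|,|\nu|\}\geq \max\{\|\inr{X,x-y}\|_{L^1},\|\nu\|_{L^1}\}.
\end{equation*}
Applying the $L^1$-$L^2$ equivalences of $\inr{X,x-y}$ and $\nu$ (each with constant $L$) to each term on the right gives $\|\xi\|_{L^1}\geq \tfrac{1}{2L}(\|\inr{X,x-y}\|_{L^2}+\|\nu\|_{L^2})\geq \tfrac{1}{2L}\|\xi\|_{L^2}$. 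Paley-Zygmund then produces constants $\kappa,\delta$ depending only on $L$ with $\bP(|\xi|\geq\kappa\|\xi\|_{L^1})\geq\delta$, and trivially $\|\xi\|_{L^1}\geq \|\inr{X,x-y}\|_{L^1}\geq \|x-y\|_2/L$.

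\textbf{Step 2 (apply Lemma~\ref{lemma:Z-W-tau}).} The $L^2$ hypothesis $\lambda\geq (c_0/\sqrt{\delta})\max\{\|Z\|_{L^2},\|W\|_{L^2}\}$ is guaranteed by $\lambda\geq c_1\max\{R,\|\nu\|_{L^2}\}$ with $c_1$ large enough, since $\|Z\|_{L^2}\leq R+\|\nu\|_{L^2}$ and $\|W\|_{L^2}\leq R$. The proof of Lemma~\ref{lemma:Z-W-tau} used the uniform law of $\tau$ only through the density bound $c_\tau=1/(2\lambda)$, which is insensitive to the extra $\nu$ inside $Z$. It therefore delivers, with probability at least $1-2\exp(-c m\delta)-2\exp(-c' m\|x-y\|_2/\lambda)$,
\begin{equation*}
|\{i:\sign(\inr{X_i,x}+\nu_i+\tau_i)\neq\sign(\inr{X_i,y}+\tau_i)\}|\geq c(L)\cdot m\,\|x-y\|_2/\lambda.
\end{equation*}
Since $\|x-y\|_2\leq 2R\leq \lambda$ (choosing $c_1\geq 2$) and $\delta$ depends only on $L$, both exponential terms collapse into $2\exp(-c'' m\|x-y\|_2/\lambda)$.

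\textbf{Step 3 (upgrade to $\theta$-well-separation).} It remains to prune the indices from Step 2 so that additionally $|\inr{X_i,x}+\nu_i+\tau_i|,|\inr{X_i,y}+\tau_i|\geq c_3\|x-y\|_2$. Apply the single-coordinate estimate \eqref{eq:single-upper} with the choices $Z=\inr{X,x}+\nu$ and then $Z=\inr{X,y}$: each is independent of $\tau\sim{\mathcal U}[-\lambda,\lambda]$, so for any $\eps>0$ at most $4m\eps/\lambda$ indices can have either magnitude below $\eps$, except on an event of probability $4\exp(-cm\eps/\lambda)$. Taking $\eps=c_3\|x-y\|_2$ with $c_3$ chosen small enough that $4c_3$ is less than half the constant $c(L)$ from Step 2, a union bound yields $|J_{x,y}(c_3)|\geq c_4 m\|x-y\|_2/\lambda$ with the claimed probability.
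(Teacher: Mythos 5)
Your proof is correct and follows essentially the same route as the paper: apply Lemma~\ref{lemma:Z-W-tau} with $Z=\inr{X,x}+\nu$ and $W=\inr{X,y}$, verify the needed small-ball condition for $Z-W$ by establishing an $L^1$-$L^2$ equivalence via the symmetrization trick (your Step~1 is precisely the paper's Lemma~\ref{lemma:norm-equivalence-noisy}, derived with a cosmetically different but equivalent pointwise inequality), and then upgrade to $\theta$-well-separation with the single-coordinate small-ball bound \eqref{eq:single-upper}, exactly as in the proof of Theorem~\ref{thm:individual-separation}.
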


Just as in Theorem \ref{thm:individual-separation}, the proof is an outcome of Lemma \ref{lemma:Z-W-tau}, only this time with the choice $Z=\inr{X,x}+\nu$ and $W=\inr{X,y}$. To that end, one has to verify that $Z-W$ satisfies a small-ball condition, and that is immediate \red{from} the following observation and the Paley-Zygmund inequality.

\begin{Lemma} \label{lemma:norm-equivalence-noisy}
Let $Z$ and $W$ be as above. Then
$$
\|Z-W\|_{L^1} \geq \frac{1}{2L} \|Z-W\|_{L^2}.
$$
\end{Lemma}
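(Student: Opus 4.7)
The plan is to write $Z - W = U + V$ where $U := \langle X, x-y\rangle$ and $V := \nu$, exploit the assumed independence of $X$ and $\nu$, the symmetry of $X$, and the individual $L^1$--$L^2$ equivalences to conclude.

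\medskip

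First, I would handle the $L^2$ side using independence. Since $X$ is symmetric, $U$ is a symmetric random variable and in particular $\E U = 0$. Combined with the independence of $U$ and $V$ this gives $\E(UV)=\E U\cdot \E V=0$, so
$$
\|U+V\|_{L^2}^2 = \|U\|_{L^2}^2 + \|V\|_{L^2}^2.
$$

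The harder part (and the main obstacle, since $V$ need not be centered) is a matching lower bound on $\|U+V\|_{L^1}$. The key observation is that because $U$ is symmetric and independent of $V$, conditioning on $V=v$ and using that $U$ and $-U$ have the same distribution yields $\E_U|U+v|=\E_U|U-v|$. The standard symmetrization trick then gives, for every fixed $v$,
$$
\E_U|U+v|=\tfrac{1}{2}\bigl(\E_U|U+v|+\E_U|U-v|\bigr)\geq \tfrac{1}{2}\E_U\bigl(|U+v|+|U-v|\bigr)\geq \max\bigl(|v|,\,\E|U|\bigr),
$$
where the last step uses $|a|+|b|\geq\max(|a+b|,|a-b|)$ applied to $a=U+v$, $b=U-v$. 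Taking expectation in $V$ and moving the max outside,
$$
\|U+V\|_{L^1}=\E|U+V|\geq \E\max\bigl(|V|,\,\|U\|_{L^1}\bigr)\geq \max\bigl(\|U\|_{L^1},\,\|V\|_{L^1}\bigr).
$$

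Finally I would invoke the two $L^1$--$L^2$ equivalences. By assumption, $\|U\|_{L^1}\geq \|U\|_{L^2}/L$, and in the relevant scenario $\|V\|_{L^1}=\|\nu\|_{L^1}\geq \|\nu\|_{L^2}/L$. Hence
$$
\|U+V\|_{L^1}\geq \frac{1}{L}\max\bigl(\|U\|_{L^2},\|V\|_{L^2}\bigr)\geq \frac{1}{L\sqrt{2}}\sqrt{\|U\|_{L^2}^2+\|V\|_{L^2}^2}=\frac{1}{L\sqrt{2}}\|U+V\|_{L^2},
$$
which is stronger than the claimed bound $\|Z-W\|_{L^1}\geq \|Z-W\|_{L^2}/(2L)$. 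The only nontrivial ingredient is the symmetrization step in the $L^1$ lower bound; the rest is bookkeeping with independence and the hypotheses on $X$ and $\nu$.
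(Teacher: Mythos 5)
Your proof is correct and relies on the same key idea as the paper's argument: use the symmetry of $X$ (via an explicit Rademacher symmetrization, or equivalently by conditioning on $\nu$ and averaging $U$ against $-U$) to lower-bound $\|\langle X,x-y\rangle+\nu\|_{L^1}$ in terms of $\|\langle X,x-y\rangle\|_{L^1}$ and $\|\nu\|_{L^1}$ separately, and then invoke the $L^1$--$L^2$ equivalences of both $\langle X,\cdot\rangle$ and $\nu$. The only difference is the bookkeeping at the end: you take the maximum of the two $L^1$ norms and exploit the orthogonality $\E\bigl[\langle X,x-y\rangle\,\nu\bigr]=0$ on the $L^2$ side, whereas the paper averages the two $L^1$ norms and closes with the triangle inequality in $L^2$; your version gives the marginally sharper constant $1/(L\sqrt{2})$ in place of $1/(2L)$, but it is essentially the same route.
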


\begin{proof}
Let $\alpha, \beta \in \R$ and set $\eps$ to be a symmetric, $\{-1,1\}$-valued random variable. Clearly,
$$
\E_\eps |\eps \alpha + \beta| \geq \frac{1}{2}(|\alpha| + |\beta|).
$$
Since $X$ is a symmetric random vector, $\inr{X,x-y}$ has the same distribution as $\eps \inr{X,x-y}$, where $\eps$ that is independent of $X$ and of $\nu$. Hence,
\begin{align*}
\|Z-W\|_{L^1} & = \|\inr{X,x-y}+\nu \|_{L^1} = \E_{X,\nu} \E_\eps |\eps \inr{X,x-y} + \nu| \geq \frac{1}{2} (\E |\inr{X,x-y}| + \E |\nu|)
\\
& \geq \frac{1}{2L} (\|\inr{X,x-y}\|_{L^2} + \|\nu\|_{L^2}) \geq \frac{1}{2L} \|\inr{X,x-y}+\nu\|_{L^2}.
\end{align*}
\end{proof}

The other components needed for ensuring separation in the sense of Definition \ref{def:noisy-large-margin} follow from an identical argument used in the proof of Theorem \ref{thm:individual-separation}, by conditioning on $X_i$ and $\nu_i$ rather than just on $X_i$.

\par

Theorem \ref{thm:noisy-individual-separation} allows one to control the set of `centers' $V$ of a cover of $T$, and all that remains now is to show that if $x^\prime$ is close to a center $x$ and $y^\prime$ is close $y$ then there will be few indices for which
$$
\sign(\inr{X_i,x}+\nu_i + \tau_i) \not=\sign(\inr{X_i,x^\prime}+\nu_i + \tau_i),
$$
or
$$
\sign(\inr{X_i,y}+ \tau_i) \not=\sign(\inr{X_i,y^\prime}+ \tau_i).
$$
In both cases, and using the notation of the previous section, one may follow the argument in the proof of Theorem~\ref{thm:single-to-local}. Thus, it suffices to show that
\begin{equation} \label{eq:osc-in-proof-noisy-tess}
\sup_{z \in (T-T) \cap r^{\prime \prime} B_2^n} |\{i: |\inr{X_i,z}| \geq c_1 r^{\prime}\}| \leq c_2 m\frac{r^\prime}{\lambda},
\end{equation}
and that concludes the proof of the bound \eqref{eq:in-noisy-lower} with $C \red{\sim_L} \frac{1}{\lambda}$.
\endproof

Let us turn to \eqref{eq:in-noisy-upper-x}. In the heavy-tailed case, one may invoke the proof of Theorem \ref{thm:single-to-local} to show that with probability at least $1-8\exp(-cm(\rho/\lambda)^2)$, for every $x \in T$,
$$
\left|\left\{ i : |\inr{X_i,x} + \tau_i | \geq \frac{\rho}{2} \right\} \right| \geq m \left(1-\frac{\rho}{\lambda}\right).
$$
If $|\E \nu| \leq \rho/16$ and $\sigma^2 \leq (1/64) \rho^3/\lambda$, it follows that
$$
\bP(|\nu| \geq \rho/4) \leq \bP( |\nu - \E \nu| \geq \rho/8) \leq \frac{\rho}{\lambda}.
$$
Hence, with probability at least $1-2\exp(-cm\rho/\lambda)$,
$$
|\{i: |\nu_i| \geq \rho/4\}| \leq \frac{2\rho m}{\lambda}.
$$
On the intersection of the two events, for every $x \in T$ one has
$$
|\{i : \sign(\inr{X_i,x} + \nu_i + \tau_i) \not = \sign(\inr{X_i,x} + \tau_i)\}| \leq \frac{3 \rho m}{\lambda},
$$
as required.

\par

The proof in the subgaussian case is \red{analogous and therefore omitted}.
\endproof

\section{Robust recovery via a convex program} \label{sec:robust-convex}
This section is devoted to the proof of Theorem \ref{thm:convex-iid-rows}. Set
\begin{equation} \label{eq:reg-functional}
\phi(z) = \frac{1}{m}\inr{q_{\rm corr},Az} - \frac{1}{2\lambda} \|z\|_2^2,
\end{equation}
\red{then the convex optimization procedure \eqref{eqn:convTintro} is exactly}
$$
\max_{z \in \co(T)} \phi(z).
$$
Recall that $U=\co(T)$ and that $U_\rho = (U-U) \cap \rho B_2^n$; $X_1,...,X_m$ are the rows of the matrix $A$ which are distributed according to an isotropic, symmetric, $L$-subgaussian random vector $X$; and $\tau \sim {\mathcal U}[-\la,\la]$. Here we assume for the sake of simplicity that $\nu$ has mean zero and variance $\sigma^2$, though the modifications needed to handle the case in which $\nu$ has a nontrivial adversarial component are straightforward. Finally, as before $q_{\text{corr}}\in \{-1,1\}^m$ satisfies
\begin{equation} \label{eqn:fractionCorr}
d_H(q_{\text{corr}},\sign(Ax+\noise+\thres)) \leq \beta m.
\end{equation}

\red{As in most regularized procedures, the idea is to study the `excess functional' $\phi(z)-\phi(x)$: for a reconstruction error $\rho$, we determine a sufficient condition on the number of measurements $m$ which guarantees that $\phi(z)-\phi(x) < 0$ whenever $z \in U$ and $\|x-z\|_2 \geq \rho$. Clearly, that implies that the solution $x^{\#}$ to \eqref{eq:reg-functional} satisfies $\|x^{\#}-x\|_2 \leq \rho$. As before, we wish to obtain a uniform estimate, i.e., the high probability event under which the above holds should not depend on the identity of $x \in T$.}

\par

The first step towards a uniform estimate is a decomposition of the excess functional. Note that
\begin{align*}
\phi(z)-\phi(x) & = \frac{1}{m} \left(\inr{q_{\operatorname{corr}},Az}-\inr{q_{\operatorname{corr}},Ax}\right) -\frac{1}{2\lambda}\|z\|_2^2 + \frac{1}{2\lambda}\|x\|_2^2
\\
& = \frac{1}{m} \inr{q_{\operatorname{corr}}-\sign{(Ax+\noise+\thres)},A(z-x)}
\\
& + \frac{1}{m} \left(\inr{\sign{(Ax+\noise+\thres)},A(z-x)} - \E \inr{\sign{(Ax+\noise+\thres)},A(z-x)} \right)
\\
& + \frac{1}{m} \E \inr{\sign{(Ax+\noise+\thres)},A(z-x)} -\frac{1}{2\lambda}\|z\|_2^2 + \frac{1}{2\lambda}\|x\|_2^2
\\
& \red{=:} (1)+(2)+(3),
\end{align*}

We use this decomposition to find constants $C$ and $\rho>0$ and a high probability event on which, for every $x \in T$ and $z \in U$,
\begin{equation} \label{eq:path}
|(1)|  \leq C\|x-z\|_2^2; \ \ |(2)| \leq C\|x-z\|_2^2 \ \ {\rm and} \ \ (3) \leq -4C\|x-z\|_2^2,
\end{equation}
provided that $\|x-z\|_2 \geq \rho$.

\vskip0.4cm

\subsection*{Estimating $(3)$}

The starting point is a straightforward observation: for $\tau \sim \mathcal{U}[-\la,\la]$ and any $z \in \R$,
\begin{equation} \label{eq:uniform+const-in-conv}
\E\sign(z+\tau) = \frac{z}{\lambda}\IND_{\{|z| \leq \lambda\}}+\IND_{\{z>\lambda\}}-\IND_{\{z<-\lambda\}}.
\end{equation}

\begin{Lemma} \label{lemma:est-on-(3)-iid}
There exist absolute constants $C$ and $c$ for which the following holds. Let $Z$ and $W$ be random variables and let $\tau \sim \mathcal{U}[-\la,\la]$ be independent of $Z$ and $W$.  Then
$$
\left|\E W\sign(Z+\tau) -\frac{1}{\la} \E W Z \right| \leq C\|W\|_{L_2} \max\left\{1,\frac{\|Z\|_{\psi_2}}{\la}\right\} \exp(-c \la^2/\|Z\|_{\psi_2}^2).
$$
\end{Lemma}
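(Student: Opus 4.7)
The plan is to reduce the claim to a tail estimate for $Z$ via conditioning on $(Z,W)$ and the explicit formula \eqref{eq:uniform+const-in-conv}. Specifically, since $\tau$ is independent of $(Z,W)$, Fubini gives
\[
\E W \sign(Z+\tau) = \E W \, \E_\tau \sign(Z+\tau) = \E W\left(\frac{Z}{\la}\IND_{\{|Z|\le\la\}} + \IND_{\{Z>\la\}} - \IND_{\{Z<-\la\}}\right).
\]
Subtracting $\frac{1}{\la}\E WZ$ and writing $1 = \frac{\la}{\la}$ inside the indicators on $\{Z>\la\}$ and $\{Z<-\la\}$, the error telescopes to
\[
\E W\sign(Z+\tau) - \frac{1}{\la}\E WZ \;=\; -\frac{1}{\la}\,\E W\left((Z-\la)\IND_{\{Z>\la\}} + (Z+\la)\IND_{\{Z<-\la\}}\right).
\]

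Next I would observe that the two terms on the right have signs that match their indicators, so their absolute values combine as $(|Z|-\la)\IND_{\{|Z|>\la\}} = (|Z|-\la)_+$. By the triangle inequality and Cauchy--Schwarz,
\[
\left|\E W\sign(Z+\tau) - \frac{1}{\la}\E WZ\right| \;\le\; \frac{1}{\la}\E\bigl(|W|\,(|Z|-\la)_+\bigr) \;\le\; \frac{\|W\|_{L^2}}{\la}\,\bigl\|(|Z|-\la)_+\bigr\|_{L^2}.
\]

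It remains to bound $\|(|Z|-\la)_+\|_{L^2}^2 \le \E Z^2 \IND_{\{|Z|>\la\}}$ using the subgaussian tail $\bP(|Z|>t)\le 2\exp(-ct^2/\|Z\|_{\psi_2}^2)$. Setting $K = \|Z\|_{\psi_2}$ and applying the layer-cake identity,
\[
\E Z^2 \IND_{\{|Z|>\la\}} = \la^2\bP(|Z|>\la) + \int_{\la}^{\infty} 2t\,\bP(|Z|>t)\,dt \;\lesssim\; (\la^2 + K^2)\exp(-c\la^2/K^2).
\]
Dividing by $\la$ yields a factor $\la^{-1}\sqrt{\la^2+K^2} \lesssim \max\{1,K/\la\}$, which is precisely the $\max\{1,\|Z\|_{\psi_2}/\la\}$ appearing in the stated bound (after adjusting the constant $c$ to absorb any polynomial prefactors into the exponential).

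The main (mild) obstacle is the bookkeeping in the second step: verifying that the residual really collapses to $(|Z|-\la)_+$ rather than to a more complicated quantity. Once that algebraic identity is in hand, the rest is a standard subgaussian tail integration and the constants in the statement follow by choosing $c$ slightly smaller than the $\psi_2$-tail constant so that the polynomial prefactor $\sqrt{\la^2+K^2}/\la$ can be swallowed into $\max\{1,K/\la\}$ times the exponential.
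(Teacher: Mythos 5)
Your proof is correct and follows essentially the same route as the paper: apply Fubini with the explicit formula \eqref{eq:uniform+const-in-conv}, isolate the residual on $\{|Z|>\lambda\}$, then control it by Cauchy--Schwarz plus a subgaussian tail integration of $\E Z^2\IND_{\{|Z|>\la\}}$. The only cosmetic difference is that you collapse the residual into the single tidy quantity $\frac{1}{\la}\E|W|\,(|Z|-\la)_+$ before applying Cauchy--Schwarz, whereas the paper splits it into the $-\frac{WZ}{\la}\IND_{\{|Z|>\la\}}$ piece and the two indicator terms, then observes $\IND_{\{|Z|>\la\}}\le(|Z|/\la)\IND_{\{|Z|>\la\}}$ to reduce the latter to the former; your version is marginally tighter (since $(|Z|-\la)_+\le|Z|\IND_{\{|Z|>\la\}}$) but the final bound and all constants come out the same.
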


\proof
By \eqref{eq:uniform+const-in-conv},
\begin{align*}
\E_\tau W\sign(Z+\tau) = & W \left(\frac{Z}{\la} \IND_{\{|Z| \leq \la\}} + \IND_{\{Z > \la\}} - \IND_{\{Z < - \la\}} \right)
\\
= & W \left(\frac{Z}{\la} - \frac{Z}{\la}\IND_{\{|Z|> \la\}} + \IND_{\{Z > \la\}} - \IND_{\{Z < - \la\}} \right)
\\
= & \frac{WZ}{\la} + (*).
\end{align*}
Hence, $\E \sign(Z+\tau)W = \frac{1}{\la} \E W Z + \E(*)$, and all that is left to show is
$$
\E|(*)| \leq C \|W\|_{L_2} \max\left\{1,\frac{\|Z\|_{\psi_2}}{\la}\right\}  \exp(-c\la^2/\|Z\|_{\psi_2}^2)
$$
for absolute constants $C$ and $c$.

Note that $\E |WZ\IND_{\{|Z| >\la\}} | \leq \|W\|_{L_2} (\E Z^2\IND_{\{|Z| >\la\}})^{1/2}$. By tail integration,
$$
\E Z^2 \IND_{\{|Z|> \la\}} \leq \la^2 \bP(|Z| > \la) + 2\int_\la^\infty t\bP(|Z|>t)dt \leq (\la^2+\|Z\|_{\psi_2}^2) \exp(-c_1\la^2/\|Z\|_{\psi_2}^2),
$$
where $c_1$ is a suitable absolute constant. The estimate on the other two terms follows because $\IND_{\{|Z|>\la\}} \leq (|Z|/\la)\IND_{\{|Z|>\la\}}$.
\endproof

\begin{Corollary} \label{cor:(3)-in-convex-a}
There exist absolute constants $c$ and $C$ for which the following holds. For every $x,z \in \R^n$,
\begin{align*}
& \frac{1}{m} \E \inr{\sign(Ax+\noise+\thres),A(z-x)}
\\
\leq & \frac{1}{\la} \inr{x,z-x} + C\|z-x\|_2 \max\left\{1,\frac{L (\sigma+\|x\|_2)}{\la}\right\} \exp\left(-c\frac{ \la^2}{L^2(\|x\|_2^2+\sigma^2)}\right),
\end{align*}
where, as always, $L$ is the subgaussian constant of $X$ and of $\nu$.

In particular,
\begin{align*}
& \frac{1}{m} \E \inr{\sign{(Ax+\noise+\thres)},A(z-x)} -\frac{1}{2\la}\|z\|_2^2 + \frac{1}{2\la}\|x\|_2^2
\\
\leq & -\frac{1}{2\la} \|z-x\|_2^2 + C\|z-x\|_2 \max\left\{1,\frac{L (\sigma+R)}{\la}\right\} \exp(-c \la^2/L^2(R^2+\sigma^2))
\end{align*}
\end{Corollary}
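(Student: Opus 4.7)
The plan is to deduce the corollary from Lemma~\ref{lemma:est-on-(3)-iid} applied row-by-row and then average, so the bulk of the work is already done. For each $i \in [m]$, set
$$
Z_i = \inr{X_i,x} + \nu_i, \qquad W_i = \inr{X_i,z-x}, \qquad \tau_i \sim \mathcal{U}[-\la,\la],
$$
where $\tau_i$ is independent of $(X_i,\nu_i)$ by assumption, so the independence hypothesis of Lemma~\ref{lemma:est-on-(3)-iid} is satisfied. The per-row conclusion of that lemma then yields, after summing and dividing by $m$,
$$
\frac{1}{m}\E \inr{\sign(Ax+\noise+\thres),A(z-x)} \leq \frac{1}{\la m}\sum_{i=1}^m \E W_i Z_i + C \max_i \|W_i\|_{L^2} \cdot \max\!\left\{1,\tfrac{\|Z_i\|_{\psi_2}}{\la}\right\}\exp(-c\la^2/\|Z_i\|_{\psi_2}^2).
$$

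The next step is to evaluate the main term and estimate the two $L^2$/$\psi_2$ norms. Since $X$ is isotropic, $\|W_i\|_{L^2} = \|z-x\|_2$; since $X_i$ is $L$-subgaussian and $\nu_i$ is $L$-subgaussian and independent of $X_i$, one has $\|Z_i\|_{\psi_2} \lesssim L(\|x\|_2 + \sigma)$. For the main term, independence of $\nu_i$ from $X_i$ combined with $\E \nu_i = 0$ kills the cross term, and isotropy of $X_i$ gives
$$
\E W_i Z_i = \E \inr{X_i,z-x}\inr{X_i,x} + \E \nu_i \cdot \E \inr{X_i,z-x} = \inr{z-x,x}.
$$
This produces the first displayed inequality of the corollary (with slightly adjusted absolute constants $c,C$ to absorb the $L$-dependent factors in $\|Z_i\|_{\psi_2}$).

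For the second part, one simply adds the regularization term and completes the square:
$$
\frac{1}{\la}\inr{x,z-x} - \frac{1}{2\la}\|z\|_2^2 + \frac{1}{2\la}\|x\|_2^2 = -\frac{1}{2\la}\|z-x\|_2^2.
$$
Finally, since $x \in T \subset RB_2^n$, we have $\|x\|_2 \leq R$, which lets us replace $\|x\|_2$ by $R$ in both factors of the error term (the prefactor $\max\{1,L(\sigma+\|x\|_2)/\la\}$ and the Gaussian-type exponent $\exp(-c\la^2/L^2(\|x\|_2^2+\sigma^2))$), yielding the stated uniform bound.

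There is no real obstacle: Lemma~\ref{lemma:est-on-(3)-iid} does all the heavy lifting, and what remains is bookkeeping of the subgaussian and isotropy constants together with the completion-of-the-square identity. The only mild subtlety is recognizing that the dependence between $Z_i$ and $W_i$ (both are functions of $X_i$) does not matter — the lemma is stated for possibly dependent $(Z,W)$, requiring only that $\tau$ be independent of them.
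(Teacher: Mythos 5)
Your proposal is correct and follows essentially the same route as the paper: apply Lemma~\ref{lemma:est-on-(3)-iid} row-by-row with $Z_i = \inr{X_i,x}+\nu_i$, $W_i = \inr{X_i,z-x}$, use isotropy and subgaussianity to evaluate $\|W_i\|_{L^2}$, $\E W_iZ_i$ and $\|Z_i\|_{\psi_2}$, then complete the square and replace $\|x\|_2$ by $R$. The only cosmetic difference is that you spell out why the cross term vanishes ($\E\nu_i=0$ and independence) whereas the paper states the identity directly.
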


\proof
The first statement follows from Lemma \ref{lemma:est-on-(3)-iid} for the choices of $Z_i = \inr{X_i,x} + \nu_i$ and $W_i=\inr{X_i,z-x}$: recalling that the $\nu_i$ are centred, have variance $\sigma^2$ and are independent of $X_i$,
$$
\|Z_i\|_{\psi_2}^2 \leq c (\|\inr{X_i,x}\|_{\psi_2}^2 + \|\nu_i\|_{\psi_2}^2) \leq cL^2 (\|x\|_2^2+\sigma^2) \leq cL^2 (R^2+\sigma^2);
$$
and, because $X$ is isotropic,
$$
\|\langle X_i,z-x\rangle\|_{L_2} = \|z-x\|_2 \ \ \ {\rm and} \ \ \ \frac{1}{\la} \E (\langle X_i,x\rangle+\nu_i)\langle X_i,z-x\rangle =  \frac{1}{\la} \langle x,z-x\rangle.
$$
The `in particular' part is evident because
$$
\frac{1}{\la}\inr{x,z-x}  - \frac{1}{2\la}\|z\|_2^2 + \frac{1}{2\la}\|x\|_2^2 = -\frac{1}{2\la} \|z-x\|_2^2.
$$
\endproof

Corollary \ref{cor:(3)-in-convex-a} leads to the wanted estimate on $(3)$:

\begin{Corollary} \label{cor:(3)-in-convex}
There are constants $c_1$ and $c_2$ that depend only on $L$ such that if
\begin{equation} \label{eq:condition-on-la-(1)}
\lambda \geq c_1 (\sigma+R) \sqrt{\log(c_2/\rho)}
\end{equation}
and $\|x-z\|_2 \geq \rho$ then
\begin{equation} \label{eq:(3)-outcome}
\frac{1}{m} \E \inr{\sign{(Ax+\noise+\thres)},A(z-x)} -\frac{1}{2\lambda}\|z\|_2^2 + \frac{1}{2\lambda}\|x\|_2^2 \leq -\frac{1}{4\lambda} \|z-x\|_2^2.
\end{equation}
\end{Corollary}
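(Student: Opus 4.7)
The plan is to start from Corollary~\ref{cor:(3)-in-convex-a}, which gives
$$
\frac{1}{m} \E \inr{\sign{(Ax+\noise+\thres)},A(z-x)} -\frac{1}{2\la}\|z\|_2^2 + \frac{1}{2\la}\|x\|_2^2 \leq -\frac{1}{2\la}\|z-x\|_2^2 + E(x,z),
$$
where $E(x,z) = C\|z-x\|_2 \max\{1, L(\sigma+R)/\la\} \exp\bigl(-c\la^2/(L^2(R^2+\sigma^2))\bigr)$. The goal is to absorb $E(x,z)$ into half of the main quadratic term, so that the remaining contribution is exactly $-\frac{1}{4\la}\|z-x\|_2^2$, as required by \eqref{eq:(3)-outcome}.

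First I would enlarge $c_1$ as a function of $L$ so that the assumption $\la \geq c_1(\sigma+R)\sqrt{\log(c_2/\rho)}$ already forces $\la \geq L(\sigma+R)$, which reduces $\max\{1,L(\sigma+R)/\la\}$ to $1$. Next, using the hypothesis $\|z-x\|_2 \geq \rho$, the inequality $E(x,z) \leq \frac{1}{4\la}\|z-x\|_2^2$ follows from
$$
C\exp\left(-\frac{c\la^2}{L^2(R^2+\sigma^2)}\right) \leq \frac{\rho}{4\la},
$$
or equivalently $\la^2 \geq (L^2/c)(R^2+\sigma^2)\log(4C\la/\rho)$. Substituting $\la = c_1(\sigma+R)\sqrt{\log(c_2/\rho)}$, the left-hand side is of order $c_1^2(R^2+\sigma^2)\log(c_2/\rho)$, while the right-hand side equals $\log(1/\rho)$ up to lower-order logarithmic corrections originating from $\log\la$; choosing $c_1$ large enough depending only on $L$ and $c_2$ a suitable absolute constant makes the inequality hold.

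I do not expect a serious obstacle: the entire argument is a quantitative verification that the exponentially small error coming from Lemma~\ref{lemma:est-on-(3)-iid} is dominated by $\rho/\la$ once $\la$ is chosen as in~\eqref{eq:condition-on-la-(1)}. The only minor subtlety is the self-referential appearance of $\la$ inside $\log(4C\la/\rho)$, but since $\la$ grows only like $\sqrt{\log(1/\rho)}$, the factor $\log(\la/\rho)$ is bounded by $(1+o(1))\log(1/\rho)$ and is absorbed by enlarging $c_1$.
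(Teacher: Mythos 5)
The proposal is correct and takes essentially the same route as the paper: Corollary~\ref{cor:(3)-in-convex} is stated with no written proof immediately after Corollary~\ref{cor:(3)-in-convex-a}, and the intended argument is exactly the one you carry out, namely absorbing the exponentially small remainder into half of the $-\frac{1}{2\la}\|z-x\|_2^2$ term and using $\|z-x\|_2\geq\rho$.

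Two small points are worth tightening. First, the hypothesis is a \emph{lower bound} on $\la$, not an equality, so after reducing the requirement to
$$
4C\la\exp\Bigl(-\frac{c\la^2}{L^2(R^2+\sigma^2)}\Bigr)\leq\rho
$$
you should observe that the left-hand side, as a function of $\la$, is decreasing once $\la\geq L\sqrt{(R^2+\sigma^2)/(2c)}$ (which holds on the whole admissible range as soon as $c_1\gtrsim L$ and $c_2\geq e$, since $(\sigma+R)\sqrt{\log(c_2/\rho)}\geq\sqrt{R^2+\sigma^2}$); this monotonicity is what justifies checking the inequality only at the boundary value $\la=c_1(\sigma+R)\sqrt{\log(c_2/\rho)}$, which is what your substitution implicitly does. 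Second, when you dismiss $\log\la$ as a ``lower-order logarithmic correction,'' note that $\log\la$ carries a $\log(\sigma+R)$ contribution in addition to $\tfrac12\log\log(c_2/\rho)$; the latter is indeed negligible against $\log(1/\rho)$, but the former is not controlled by the hypothesis when $\sigma+R$ is allowed to be arbitrarily large relative to $\rho$, so strictly speaking one needs either a normalization such as $\sigma+R\lesssim 1$ or constants $c_1,c_2$ that are permitted to depend on $R$ and $\sigma$. This imprecision is already present in the paper's own statement (which in all its downstream applications takes $R=1$ and $\sigma\leq 1$), so it is not a flaw you introduced, but it is worth flagging rather than hand-waving past it.
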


\subsection*{Estimating $(1)$}

Next, let us estimate $|(1)|$ from above, by studying
$$
\sup_{x \in T} \sup_{\{z \in U : \|z-x\|_2 \geq \rho \}} \left|\frac{1}{m} \inr{q_{\operatorname{corr}}-\sign{(Ax+\noise+\thres)},A(z-x)/\|z-x\|_2^2}\right|=(*).
$$
Observe that for every $x,z$, and $\eta_i=(q_{\operatorname{corr}})_i-\sign(\langle X_i,x\rangle+\nu_i+\tau_i)$, one has that
$$
\left|\frac{1}{m} \sum_{i=1}^m \eta_i \inr{X_i,z-x} \right| \leq \frac{2}{m} \sum_{ \{i : \eta_i \not = 0\}} |\inr{X_i,z-x}|
$$
and $(\eta_i)_{i=1}^m$ has at most $\beta m$ non-zero coordinates. Hence, for any $X_1,...,X_m$,
$$
(*) \leq \max_{|I| \leq \beta m} \sup_{\{x,z \in U, \|z-x\|_2 \geq \rho\}} \frac{1}{m} \sum_{i \in I} |\inr{X_i,(z-x)/\|z-x\|_2^2}|.
$$
Therefore, taking into account the estimate \eqref{eq:(3)-outcome} on $(3)$, it suffices to show that for every $x,z \in U$ such that $\|x-z\|_2 \geq \rho$,
$$
\max_{|I| \leq \beta m} \frac{1}{m} \sum_{i \in I} |\inr{X_i,(z-x)/\|z-x\|_2^2}| \leq \frac{1}{16\la}.
$$
\red{To that end, observe the following: if $f:\R^n\to \R_+$ is positive homogeneous and $W\subset \R^n$ is star-shaped around $0$, i.e., $\theta w\in W$ for all $w\in W$ and $0<\theta<1$, then
$$\sup_{\{w\in W \ : \ \|w\|_2\geq \rho\}} f(w/\|w\|_2^2) = \sup_{\{w\in W \ : \ \|w\|_2=\rho\}} f(w)/\rho^2.$$
We will refer to this argument, which reflects the general fact that star-shaped sets become `relatively richer' close to their centres, as a `star-shape argument'.}

\begin{Theorem} \label{thm:(1)-in-iid-convex}
There exist constants $c_1,c_2,$ and $c_3$ depending only on $L$ for which the following holds. Assume that
\begin{equation} \label{eq:condtion-on-rho-(1)}
\rho \geq  c_1 \lambda \beta  \sqrt{\log(e/\beta)} \ \ {\rm and} \ \ \ell_*(U_\rho) \leq c_2 \sqrt{\frac{m}{\beta}} \cdot \frac{\rho^2}{ \lambda}.
\end{equation}
Then with probability at least $1-2\exp(-c_3 \beta m \log(e/\beta))$, for every $x,z \in U$ such that $\|z-x\|_2 \geq \rho$ one has
$$
\max_{|I| \leq \beta m} \frac{1}{m} \sum_{i \in I} |\inr{X_i,z-x}| \leq \frac{1}{16\la} \|z-x\|_2^2.
$$
\end{Theorem}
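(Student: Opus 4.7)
The natural first move is a star-shape reduction. Set $w=z-x\in U-U$; since $U$ is convex, $U-U$ is symmetric and convex, hence star-shaped around $0$. Thus $\bar w := (\rho/\|w\|_2)\,w$ lies in $(U-U)\cap\rho B_2^n=U_\rho$ whenever $\|w\|_2\geq\rho$, and $\langle X_i,w\rangle=(\|w\|_2/\rho)\langle X_i,\bar w\rangle$. Using this identity and $\|w\|_2\geq\rho$, the claimed bound
$$
\frac{1}{m}\sum_{i\in I}|\langle X_i,w\rangle|\leq\frac{1}{16\lambda}\|w\|_2^2
$$
follows from its normalized version
$$
\sup_{\bar w\in U_\rho}\max_{|I|\leq\beta m}\frac{1}{m}\sum_{i\in I}|\langle X_i,\bar w\rangle|\leq\frac{\rho^2}{16\lambda}.
$$
So the whole task is to control this single supremum uniformly over $U_\rho$.

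The second step replaces the $\ell_1$ partial sum by an $\ell_2$ partial sum so that Theorem \ref{thm:monotone-subgaussian} becomes applicable. Set $k=\lceil\beta m\rceil$. Cauchy--Schwarz gives
$$
\max_{|I|\leq k}\sum_{i\in I}|\langle X_i,\bar w\rangle|\leq\sqrt{k}\,\max_{|I|\leq k}\Bigl(\sum_{i\in I}\langle X_i,\bar w\rangle^2\Bigr)^{1/2}.
$$
Apply Theorem \ref{thm:monotone-subgaussian} with $S=U_\rho$ (so $d_S=\rho$), with $u=1$ and $k=\lceil\beta m\rceil$: with probability at least $1-2\exp(-c_1\beta m\log(e/\beta))$,
$$
\sup_{\bar w\in U_\rho}\max_{|I|\leq k}\Bigl(\sum_{i\in I}\langle X_i,\bar w\rangle^2\Bigr)^{1/2}\leq c_2\bigl(\ell_*(U_\rho)+\rho\sqrt{\beta m\log(e/\beta)}\bigr).
$$

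The third step is to balance the two resulting terms. Plugging back and dividing by $m$,
$$
\sup_{\bar w\in U_\rho}\max_{|I|\leq\beta m}\frac{1}{m}\sum_{i\in I}|\langle X_i,\bar w\rangle|\leq c_2\sqrt{\tfrac{\beta}{m}}\,\ell_*(U_\rho)+c_2\beta\rho\sqrt{\log(e/\beta)}.
$$
For the right-hand side to be at most $\rho^2/(16\lambda)$ it suffices that each summand is at most $\rho^2/(32\lambda)$, which is precisely
$$
\ell_*(U_\rho)\leq c\sqrt{m/\beta}\,\frac{\rho^2}{\lambda}\qquad\text{and}\qquad\beta\sqrt{\log(e/\beta)}\leq c\,\frac{\rho}{\lambda},
$$
i.e. the two hypotheses in \eqref{eq:condtion-on-rho-(1)} after adjusting constants. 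This completes the plan.

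The main obstacle, had it not been packaged into Theorem \ref{thm:monotone-subgaussian}, would be proving the uniform bound on the monotone rearrangement $\sup_{\bar w\in U_\rho}\max_{|I|\leq k}(\sum_{i\in I}\langle X_i,\bar w\rangle^2)^{1/2}$ via generic chaining; given that tool, the remaining work is purely a bookkeeping exercise consisting of the star-shape reduction and Cauchy--Schwarz. The only subtlety worth double-checking is that the star-shape step really reduces the $1/\|w\|_2^2$ scaling cleanly to $1/\rho^2$ (which it does because the linear functional is $1$-homogeneous and $U-U$ is star-shaped around $0$), so that one may assume without loss of generality $\|w\|_2=\rho$.
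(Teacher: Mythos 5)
Your proposal is correct and follows essentially the same route as the paper: a star-shape reduction from $\|z-x\|_2\geq\rho$ to the normalized set $U_\rho$, Cauchy--Schwarz to pass from the partial $\ell_1$ sum to a partial $\ell_2$ sum, and then Theorem~\ref{thm:monotone-subgaussian} with $k\simeq\beta m$, after which the two hypotheses in \eqref{eq:condtion-on-rho-(1)} exactly absorb the two resulting terms. The only cosmetic difference is that the paper states the star-shape identity with $(U-U)\cap\rho S^{n-1}$ while you bound over the slightly larger $U_\rho$; since $(U-U)\cap\rho S^{n-1}\subset U_\rho$ this changes nothing.
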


\begin{proof}
\red{Since $U-U$ is star-shaped around $0$, the above star-shape argument implies that for any $I \subset [m]$}
$$
\sup_{\{x,z \in U, \|z-x\|_2 \geq \rho\}} \frac{1}{m } \sum_{i \in I} \Big|\Big\langle X_i,\frac{z-x}{\|z-x\|_2^2}\Big\rangle\Big| = \sup_{w \in (U-U)\cap \rho S^{n-1}} \frac{1}{m \rho^2} \sum_{i \in I} \left|\inr{X_i,w}\right|.
$$
Apply Theorem \ref{thm:monotone-subgaussian} for $k = \beta m$ and note that for $w \in \R^k$, $\|w\|_1 \leq \sqrt{k} \|w\|_2$. This shows that, with probability at least $1-2\exp(-c(L) \beta m \log(e/\beta))$,
$$
\sup_{w \in (U-U)\cap \rho S^{n-1}} \frac{1}{m \rho^2} \sum_{i \in I} \left|\inr{X_i,w}\right|  \leq  \frac{C(L)}{\red{m}\rho^2} \left(\sqrt{\beta m} \ell_*(U_\rho) +\rho \beta m \sqrt{\log(e/\beta)}\right) \leq \frac{1}{16 \la},
$$
provided that \eqref{eq:condtion-on-rho-(1)} holds.
\end{proof}

\subsection*{Estimating $(2)$}

Finally, let us derive an upper estimate on
$$
\left|\frac{1}{m} \left(\inr{\sign{(Ax+\noise+\thres)},A(z-x)} - \E \inr{\sign{(Ax+\noise+\thres)},A(z-x)} \right)\right|
$$
that holds uniformly for all $x \in T$ and $z \in U$ such that $\|x-z\|_2 \geq \rho$.
\begin{Theorem} \label{thm:(2)-in-iid-convex}
There exist constants $c_1,\ldots,c_4$ that depend only on $L$ for which the following holds. Let $\lambda \geq c_1$ and assume that
$$
\log{\mathcal N}\left(T, \rho \log^{-1}(e \lambda/\rho)\right) \leq c_2 m \cdot \red{\frac{\rho^2}{\lambda^2}} \qquad \text{and} \qquad \ell_*(U_\rho) \leq c_3 \frac{\rho^2}{\lambda}\sqrt{m}.
$$
Then with probability at least \red{$1-5\exp(c_4m\rho^2/\lambda^2)$}, we have that for every $x\in T$ and $z \in U$ such that $\|x-z\|_2 \geq \rho$,
$$
|(2)| \leq \frac{\rho^2}{16 \lambda}.
$$
\end{Theorem}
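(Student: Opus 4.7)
The plan is to follow the blueprint used for terms $(1)$ and $(3)$: a star-shape normalization, discretization of $T$, a chaining/concentration bound on the net, and a careful estimate on the local oscillation. Setting
$$
\phi_x(w):=\frac{1}{m}\sum_{i=1}^m\Bigl(\sign(\inr{X_i,x}+\nu_i+\tau_i)\inr{X_i,w}-\E[\cdot]\Bigr),
$$
one has $(2)=\phi_x(z-x)$. Since $x,z\in U=\co(T)$ and $U$ is convex, $U-\{x\}$ is star-shaped about $0$ for every $x\in T$, and linearity of $\phi_x$ in $w$ reduces the problem (by the same star-shape argument used for $(1)$) to showing, uniformly over $x\in T$ and $w\in(U-\{x\})\cap\rho S^{n-1}\subset U_\rho$, that $|\phi_x(w)|\leq\rho^2/(16\lambda)$.

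I would then fix a minimal $r_0$-net $V\subset T$ with $r_0:=c\rho/\log(e\lambda/\rho)$ (whose log-cardinality is $\leq c_2m\rho^2/\lambda^2$ by hypothesis), and for each $x\in T$ let $v(x)\in V$ be a nearest point. Decompose $\phi_x(w)=\phi_{v(x)}(w)+(\phi_x(w)-\phi_{v(x)}(w))$. For fixed $v\in V$, the summands $\sign(\inr{X_i,v}+\nu_i+\tau_i)\inr{X_i,w}$ are i.i.d.\ in $i$ and $L$-subgaussian with $\psi_2$-norm $\lesssim\|w\|_2\leq\rho$; the induced increment metric on $U_\rho$ is Euclidean up to constants, and Talagrand's majorizing measures theorem combined with high-probability generic chaining (as in \cite{Dir15}) yields
$$
\bP\!\left(\sup_{w\in U_\rho}|\phi_v(w)|\geq C\ell_*(U_\rho)/\sqrt m+t\right)\leq 2\exp\bigl(-cm\min(t^2/\rho^2,t/\rho)\bigr).
$$
Choosing $t=\rho^2/(32\lambda)$ (which is $\leq\rho$ since $\rho\leq\lambda$) and invoking $\ell_*(U_\rho)\leq c_3\rho^2\sqrt m/\lambda$, a union bound over $V$ gives $\sup_{v\in V,\,w\in U_\rho}|\phi_v(w)|\leq\rho^2/(32\lambda)$ with probability $\geq 1-2\exp(-c_4m\rho^2/\lambda^2)$.

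The oscillation piece is treated at two different scales. Since $\lambda\geq c_0(\sigma+R)\sqrt{\log(c_0/\rho)}$, Lemma~\ref{lemma:est-on-(3)-iid} (applied with $Z=\inr{X,x}+\nu$ and $W=\inr{X,w}$, and analogously for $v$) makes the exponential residuals negligible and bounds the mean contribution by $|\inr{x-v,w}|/\lambda\leq r_0\rho/\lambda\ll\rho^2/\lambda$. For the random part, writing $\xi_i(y):=\sign(\inr{X_i,y}+\nu_i+\tau_i)$, the difference $\xi_i(x)-\xi_i(v)$ vanishes unless the noise-shifted hyperplane $H_{X_i,\nu_i+\tau_i}$ separates $x$ and $v$. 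The noisy analog of the local upper bound in Theorem~\ref{thm:single-to-local-subgaussian} (which goes through verbatim after replacing $\tau_i$ by $\nu_i+\tau_i$), applied at the intermediate scale $r':=c\rho/\sqrt{\log(e\lambda/\rho)}$---for which $2r_0\leq c_1r'/\sqrt{\log(e\lambda/r')}$---bounds the separation set by $k:=c_6mr'/\lambda$ uniformly in all such $(x,v)$, on an event of probability $\geq 1-\exp(-cm\rho/\lambda)$. On that event, Cauchy--Schwarz combined with Theorem~\ref{thm:monotone-subgaussian} applied to $S=U_\rho$ yields
$$
\sup_{\substack{\|x-v\|_2\leq r_0\\ w\in U_\rho}}\frac{1}{m}\Bigl|\sum_{i=1}^m(\xi_i(x)-\xi_i(v))\inr{X_i,w}\Bigr|\leq \frac{2\sqrt k}{m}\Bigl(C\ell_*(U_\rho)+C\rho\sqrt{k\log(em/k)}\Bigr).
$$
With $k\simeq m\rho/(\lambda\sqrt{\log(e\lambda/\rho)})$, the first summand is $\lesssim\rho^{5/2}/\lambda^{3/2}\leq\rho^2/\lambda$ (using $\rho\leq\lambda$), while $\log(em/k)\simeq\log(e\lambda/\rho)$ forces the second to be exactly of order $\rho^2/\lambda$. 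Combining these contributions with the mean part gives $\sup|\phi_x(w)-\phi_v(w)|\leq\rho^2/(32\lambda)$, which together with the net bound delivers the theorem.

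The principal obstacle is the matching of the three relevant scales. A naive single-scale choice $r_0=r'$ loses an extra $\sqrt{\log(e\lambda/\rho)}$ factor in Theorem~\ref{thm:monotone-subgaussian} and so cannot reach the $\rho^2/\lambda$ target; only by decoupling the net scale $r_0\simeq\rho/\log(e\lambda/\rho)$ from the intermediate tessellation scale $r'\simeq\rho/\sqrt{\log(e\lambda/\rho)}$ do the two error terms in the monotone-subgaussian inequality simultaneously land at the correct order. The only other technical item is checking that the tessellation arguments of Section~\ref{sec:rand-tess} carry over once $\tau_i$ is replaced by $\nu_i+\tau_i$, which is immediate because the small-ball/Paley--Zygmund input for $\tau$ continues to hold after conditioning on $\nu$.
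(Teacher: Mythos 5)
Your proof follows the paper's architecture closely: cover $T$ at the fine scale $r_0 \simeq \rho/\log(e\lambda/\rho)$, apply the star-shape normalization to reduce to $w \in U_\rho$, control the net term by chaining over $U_\rho$ and a union bound over $V$, and control the oscillation by bounding the number of sign flips at the intermediate scale $r' \simeq \rho/\sqrt{\log(e\lambda/\rho)}$ (via the noisy version of the local upper bound, i.e.\ Lemma~\ref{lemma-strucutre-S_x}) and then applying Theorem~\ref{thm:monotone-subgaussian}. You have correctly identified the essential scale-decoupling idea ($r_0 \ne r'$) and checked that both terms produced by Theorem~\ref{thm:monotone-subgaussian} land at order $\rho^2/\lambda$, which is the crux of the argument. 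The one substantive divergence from the paper is that you do not symmetrize: the paper first passes to the Bernoulli process $\frac{1}{m}\sum_i \eps_i \sign(\inr{X_i,x}+\nu_i+\tau_i)\inr{X_i,\cdot}$, which simultaneously eliminates the expectation and lets the contraction inequality strip off the fixed coefficients $\sign(\inr{X_i,v}+\nu_i+\tau_i)$. Because you work with the centered process directly, your oscillation term $\phi_x(w)-\phi_v(w)$ carries a deterministic mean piece $\frac{1}{m}\E\inr{\sign(Ax+\noise+\thres)-\sign(Av+\noise+\thres),Aw}$, which you handle via Lemma~\ref{lemma:est-on-(3)-iid}. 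That estimate needs the exponential tails to be negligible, which is why you invoke $\lambda \gtrsim (\sigma+R)\sqrt{\log(e\lambda/\rho)}$; this hypothesis is available in Theorem~\ref{thm:convex-iid-rows} but is not part of the statement of Theorem~\ref{thm:(2)-in-iid-convex}, which only assumes $\lambda \geq c_1$. So strictly speaking your argument proves the lemma under a slightly stronger hypothesis on $\lambda$ than stated, though this has no effect on the application. The paper's symmetrization route is cleaner precisely because it never has to estimate that mean difference; otherwise the two proofs are the same.
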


The proof of Theorem \ref{thm:(2)-in-iid-convex} is based on a covering argument. Let $V \subset T$ be an $r$-cover for a well-chosen $r$. A crucial part of the proof is to show that for every $\red{v} \in V$ the (random) set of sign patterns
$$
\mathbb{S}_{\red{v}} = \left\{ \sign{(A\red{x}+\noise\red{+}\thres)} : \red{x} \in U, \|\red{x-v}\|_2 \leq r \right\}
$$
is relatively simple: it consists of small perturbations of $\sign{(A\red{v}+\noise+\thres)}$.

Since this observation is essentially the second part of Theorem \ref{thm:single-to-local-subgaussian}, we formulate it in the way that it will be applied and omit its proof---which is almost identical to the proof of Theorem \ref{thm:single-to-local-subgaussian}.

\begin{Lemma} \label{lemma-strucutre-S_x}
There are constants $c_0, c_1$ and $c_2$ that depend only on $L$ for which the following holds. Let $0< r^\prime \leq \lambda/2$ and set $r^{\prime \prime} \leq c_0r^\prime /\sqrt{\log(e\lambda/r^\prime)}$. Assume that
\begin{equation} \label{eq:condition-on-eps-(2)}
\log \mathcal{N}(T,r^{\prime \prime} ) \leq c_1 \frac{r^\prime}{\lambda} m,
\end{equation}
and that
\begin{equation} \label{eq:condition-on-eps-r-(2)}
\ell_*(U_{r^{\prime \prime}}) \leq c_1 \frac{(r^\prime)^{3/2}}{\lambda^{1/2}} \sqrt{m}.
\end{equation}
If $V$ is a minimal $r^{\prime \prime}$ cover of $T$, then with probability at least $1-2\exp(-c_2 m r^\prime/\lambda)$, for every $\red{v} \in V$
$$
\mathbb{S}_{\red{v}} \subset \sign{(A\red{v}+\red{\noise}+\red{\thres})} + \blue{2}\mathcal{Z},
$$
where $\mathcal{Z}$ is the set of all $\{-1,0,1\}$-valued vectors in $\R^m$ that have at most $3(r^\prime/\lambda)m$ non-zero coordinates.
\end{Lemma}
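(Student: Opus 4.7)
The plan is to mimic the argument behind Theorem~\ref{thm:single-to-local-subgaussian}, localising it around each centre $v \in V$. Fix a threshold $\varepsilon \simeq r'$. The key observation is that if both $|\langle X_i, v\rangle + \nu_i + \tau_i| \geq \varepsilon$ and $|\langle X_i, x-v\rangle| < \varepsilon$, then $\sign(\langle X_i,x\rangle + \nu_i + \tau_i) = \sign(\langle X_i,v\rangle + \nu_i + \tau_i)$. Since $v \in V \subset T \subset U$ and $\|x-v\|_2 \leq r''$ imply $x-v \in U_{r''}$, it suffices to control two quantities: (i) the cardinality of $\{i : |\langle X_i, v\rangle + \nu_i + \tau_i| \leq \varepsilon\}$ uniformly in $v \in V$, and (ii) the cardinality of $\{i : |\langle X_i, w\rangle| \geq \varepsilon\}$ uniformly in $w \in U_{r''}$.

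For (i), since $\tau_i \sim \mathcal{U}[-\lambda,\lambda]$ is independent of $\langle X_i, v\rangle + \nu_i$, the shifted small-ball estimate \eqref{eqn:SBindep} gives $\bP(|\langle X_i,v\rangle+\nu_i+\tau_i| \leq \varepsilon) \leq \varepsilon/\lambda$. A Chernoff bound then yields at most $2 m \varepsilon/\lambda$ such indices for each fixed $v$ except on an event of probability $\leq 2\exp(-c m \varepsilon/\lambda)$; the covering hypothesis \eqref{eq:condition-on-eps-(2)} is tuned precisely so that the union bound over $V$ costs only a constant factor in the exponent, leaving a failure probability of the form $\exp(-c(L)\, m r'/\lambda)$. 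For (ii), this is exactly the local oscillation estimate used in Theorem~\ref{thm:single-to-local-subgaussian}: we apply Theorem~\ref{thm:monotone-subgaussian} to $S = U_{r''}$ with $k \sim (r'/\lambda) m$, using the dominating-order-statistic trick
$$
a_k^* \leq \max_{|I| \leq k} \Bigl(\frac{1}{k}\sum_{i \in I} a_i^2\Bigr)^{1/2}.
$$
The assumptions \eqref{eq:condition-on-eps-(2)}--\eqref{eq:condition-on-eps-r-(2)} together with the sparsification $r'' \lesssim r'/\sqrt{\log(e\lambda/r')}$ are calibrated so that $\ell_*(U_{r''}) + u\, r'' \sqrt{k \log(em/k)} \lesssim \varepsilon \sqrt{k}$ for a suitable constant $u$, giving the desired uniform bound on an event of probability $\geq 1 - \exp(-c m r'/\lambda)$.

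Combining (i) and (ii) on the intersection of their high-probability events, for any $v \in V$ and any $x \in U$ with $\|x-v\|_2 \leq r''$, the set of indices where $\sign(\langle X_i,x\rangle + \nu_i + \tau_i)$ disagrees with $\sign(\langle X_i,v\rangle + \nu_i + \tau_i)$ lies in the union of the two exceptional sets, hence has cardinality at most $3(r'/\lambda) m$. A coordinatewise difference of two $\{\pm 1\}^m$ vectors is $\{-2,0,2\}^m$-valued, so we may write $\sign(Ax + \nu + \tau) - \sign(Av + \nu + \tau) = 2z$ for some $z \in \mathcal{Z}$, which is the stated inclusion $\mathbb{S}_v \subset \sign(Av+\nu+\tau) + 2\mathcal{Z}$. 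The only conceptually nontrivial step is (ii), but Theorem~\ref{thm:monotone-subgaussian} is black-boxed, so the remaining work is routine bookkeeping of constants inherited from the proof of Theorem~\ref{thm:single-to-local-subgaussian}; the presence of $\nu_i$ introduces nothing new, as it simply merges with $\langle X_i, v\rangle$ inside the small-ball computation of (i).
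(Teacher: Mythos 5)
Your proposal is correct and follows exactly the approach the paper intends: the paper explicitly omits this proof, stating it is ``almost identical to the proof of Theorem~\ref{thm:single-to-local-subgaussian},'' and your argument reproduces the two ingredients of that proof (a small-ball/Chernoff bound over a net for the centres, plus the oscillation bound via Theorem~\ref{thm:monotone-subgaussian} and the order-statistic trick), combined through the stability observation of Lemma~\ref{lemma:stability}. Your handling of the noise term $\nu_i$ by absorbing it into the shifted variable before invoking \eqref{eqn:SBindep}, and your explanation of the factor $2$ in $2\mathcal{Z}$, also match the paper's intent.
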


\par

\noindent {\bf Proof of Theorem \ref{thm:(2)-in-iid-convex}.} By a standard symmetrization argument it suffices to estimate the probability with which
$$
\sup_{x \in T} \sup_{\{z \in U: \|z-x\|_2 \geq \rho\}} \left|\frac{1}{m} \sum_{i=1}^m \eps_i \sign{(\inr{X_i,x}+\nu_i+\tau_i)}\left\langle X_i,\frac{x-z}{\|x-z\|_2^2}\right\rangle\right| \leq \frac{1}{32 \lambda},
$$
where $(\eps_i)_{i=1}^m$ are independent, symmetric, $\{-1,1\}$-valued random variables that are also independent of $(X_i,\nu_i,\tau_i)_{i=1}^m$.

\red{Let $r^\prime$, $r^{\prime \prime}$, and $V$ be as in Lemma \ref{lemma-strucutre-S_x} and observe that}
\begin{align*}
& \sup_{x \in T} \sup_{\{z \in U : \|z-x\|_2 \geq \rho\}} \left|\frac{1}{m} \sum_{i=1}^m \eps_i \sign{(\inr{X_i,x}+\nu_i+\tau_i)}\left\langle X_i,\frac{x-z}{\|x-z\|_2^2}\right\rangle\right|
\\
= & \max_{\red{v} \in V} \sup_{\{\red{x} \in T : \|\red{x-v}\|_2 \leq r^{\prime \prime}\}} \sup_{\{z \in U : \|z-\red{x}\|_2 \geq \rho\}} \left|\frac{1}{m} \sum_{i=1}^m \eps_i \sign{(\inr{X_i,\red{x}}+\nu_i+\tau_i)}\left\langle X_i,\frac{\red{x}-z}{\|\red{x}-z\|_2^2}\right\rangle\right|.
\end{align*}
Let ${\mathcal A}$ be the event from Lemma \ref{lemma-strucutre-S_x}. Recall that $\red{\bP}({\mathcal A}) \geq 1-2\exp(-cmr^\prime/\lambda)$ and that on ${\mathcal A}$, for any $\red{v} \in V$ and $\red{x} \in T$ such that $\|\red{x-v}\|_2 \leq r^{\prime \prime}$, $\sign{(\inr{X_i,\red{x}}+\nu_i+\tau_i)}$ differs from $\sign{(\inr{X_i,\red{v}}+\nu_i+\tau_i)}$ on at most $(3r^\prime/\lambda)m$ indices. Therefore, for every $\red{v} \in V$,
\begin{align*}
& \sup_{\{\red{x} \in T : \|\red{x}-\red{v}\|_2 \leq r^{\prime \prime}\}} \sup_{\{z \in U : \|z-\red{x}\|_2 \geq \rho\}} \left|\frac{1}{m} \sum_{i=1}^m \eps_i \sign{(\inr{X_i,\red{x}}+\nu_i+\tau_i)}\left\langle X_i,\frac{\red{x}-z}{\|\red{x}-z\|_2^2}\right\rangle\right|
\\
\leq & \sup_{\{\red{x} \in T : \|\red{x}-\red{v}\|_2 \leq r^{\prime \prime}\}} \sup_{\{z \in U : \|z-\red{x}\|_2 \geq \rho\}} \left|\frac{1}{m} \sum_{i=1}^m \eps_i \sign{(\inr{X_i,\red{v}}+\nu_i+\tau_i)}\left\langle X_i,\frac{\red{x}-z}{\|\red{x}-z\|_2^2}\right\rangle \right| +
\\
+ & 2\sup_{\{\red{x} \in T : \|\red{x}-\red{v}\|_2 \leq r^{\prime \prime}\}} \sup_{\{z \in U : \|z-\red{x}\|_2 \geq \rho\}} \max_{|I| \leq (3r^\prime/\lambda)m} \frac{1}{m} \sum_{i \in I} \left|\left\langle X_i,\frac{\red{x}-z}{\|\red{x}-z\|_2^2}\right\rangle\right|
\\
=: & (a)_{\red{v}}+(b)_{\red{v}}.
\end{align*}
Observe that both $(a)_{\red{v}}$ and $(b)_{\red{v}}$ are homogeneous in $\red{x}-z$; therefore, by a star-shape argument
\begin{equation}
\label{eqn:estavStarShape}
(a)_{\red{v}} \red{\leq} \frac{1}{\rho^2}  \sup_{\{z,\red{x} \in U : \|z-\red{x}\|_2 = \rho\}} \left|\frac{1}{m} \sum_{i=1}^m \eps_i \sign{(\inr{X_i,{\red{v}}}+\nu_i+\tau_i)}\inr{X_i,\red{x}-z}\right|
\end{equation}
and
\begin{equation}
\label{eqn:estbvStarShape}
(b)_{\red{v}} \red{\leq} \frac{1}{\rho^2}  \sup_{\{z,\red{x} \in U : \|z-\red{x}\|_2 = \rho\}} \max_{|I| \leq (3r^\prime/\lambda)m} \frac{1}{m} \sum_{i \in I} \left|\inr{X_i,\red{x}-z}\right|.
\end{equation}

Let us begin by estimating the right hand side of \eqref{eqn:estavStarShape}. For every fixed $\red{v\in V}$ and conditioned on $(X_i,\nu_i,\tau_i)_{i=1}^m$, this is the supremum of a Bernoulli process indexed by a set of the form
$$
\left\{ \sum_{i=1}^m a_i w_i : w \in W\right\},
$$
where $(a_i)_{i=1}^m$ is a fixed vector of signs and
$$
W \subset \left\{ \left(\inr{X_i,\red{u}}\right)_{i=1}^m : \red{u} \in U_\rho \right\}.
$$
By the contraction inequality for Bernoulli processes \red{\cite{LeT91}} applied conditionally on $(X_i,\nu_i,\tau_i)_{i=1}^m$, it follows that
$$
\bP \left( \sup_{u \in U_\rho} \left|\sum_{i=1}^m \eps_i a_i \inr{X_i,u} \right| \geq t \right) \leq 2\bP \left( \sup_{u \in U_\rho} \left|\sum_{i=1}^m \eps_i \inr{X_i,u} \right| \geq t \right).
$$
\blue{Since $\frac{1}{\sqrt{m}} \sum_{i=1}^m \eps_i X_i$ is $cL$-subgaussian, it follows by generic chaining (see e.g.\ \cite[Theorem 3.2]{Dir15}) combined with Talagrand's majorizing measures theorem \cite{Tal14} that}
$$
\left\|\sup_{u \in U_\rho} \frac{1}{\sqrt{m}} \sum_{i=1}^m \eps_i \inr{X_i,u} \right\|_{L_p} \leq c(L) \left(\ell_*(U_\rho) + \sqrt{p} \sup_{u \in U_\rho} \|u\|_2\right).
$$
\red{Hence, for any $t\geq 1$, with probability at least $1-e^{-t}$,
$$\frac{1}{\rho^2} \sup_{u \in U_\rho} \left|\frac{1}{m} \sum_{i=1}^m \eps_i \inr{X_i,u} \right| \leq \frac{c_0}{\rho^2}\left(\frac{\ell_*(U_\rho)}{\sqrt{m}} + \rho\sqrt{\frac{t}{m}}\right).$$
Taking $t=c_1 m\rho^2/\lambda^2$, it follows that if
\begin{equation} \label{eq:condition-on-rho-(2)-a}
\ell_*(U_\rho) \lesssim_L \frac{\rho^2 }{\lambda}\sqrt{m},
\end{equation}
then for any fixed $v\in V$,
\begin{equation} \label{eq:(a)-in(2)}
(a)_v \leq \frac{1}{64\la} \ \ \ {\rm with \ probability \ } 1-\exp(-c_1 m\rho^2/\lambda^2).
\end{equation}
By the union bound, \eqref{eq:(a)-in(2)} holds uniformly for all $v\in V$ as long as
\begin{equation} \label{eq:condition-on-rho-(2)-b}
\log |V| = \log {\mathcal N}(T,r^{\prime \prime}) \leq \frac{c_1}{2} \frac{m\rho^2}{\lambda^2}.
\end{equation}
}
Next, observe that by \eqref{eqn:estbvStarShape}
$$
\sup_{\red{v} \in V} (b)_{\red{v}} \leq \frac{1}{\rho^2}\sup_{u \in U_\rho} \max_{|I| \leq (3r^\prime/\lambda)m} \frac{1}{m} \sum_{i \in I} \left|\inr{X_i,u}\right|.
$$
By Theorem \ref{thm:monotone-subgaussian} for $k=3r^\prime m/\lambda$, with probability at least $1-2\exp(-c_2(r^\prime/\lambda)\log(e\lambda/r^\prime)m)$,
$$
\frac{1}{\rho^2} \sup_{u \in U_\rho} \max_{|I| \leq (3r^\prime/\lambda)m} \frac{1}{m} \sum_{i \in I} \left|\inr{X_i,v}\right| \lesssim_L \left( \sqrt{\frac{r^\prime}{\lambda m}} \frac{\ell_*(U_\rho)}{\rho^2} +  \frac{r^\prime}{\lambda \rho}\sqrt{ \log(e\lambda/r^\prime)} \right) \leq \frac{1}{\red{64} \lambda},
$$
where the last inequality holds as long as
\begin{equation} \label{eq:condition-on-eps-rho-(2)}
\ell_*(U_\rho) \lesssim_L \frac{\rho^2}{\sqrt{r^\prime \lambda}} \sqrt{m}, \ \ \ \ {\rm and} \ \ \ \ r^\prime \sqrt{\log(e\lambda/r^\prime)} \lesssim_L \rho.
\end{equation}
All that is left is to select $r^\prime$ and $r^{\prime \prime}$, taking into account the conditions accumulated along the way, specifically, \eqref{eq:condition-on-eps-(2)}; \eqref{eq:condition-on-eps-r-(2)}; \eqref{eq:condition-on-rho-(2)-a}; \eqref{eq:condition-on-rho-(2)-b} and \eqref{eq:condition-on-eps-rho-(2)}.

Our starting point is \eqref{eq:condition-on-rho-(2)-a}, which is a condition on $\rho$ and does not involve $r^\prime$ or $r^{\prime \prime}$. Next, \red{the second condition in \eqref{eq:condition-on-eps-rho-(2)} is satisfied if we set}
\begin{equation} \label{eq:choice-of-eps-(2)}
r^\prime \simeq_L \frac{\rho}{\sqrt{\log(e\lambda/\rho)}},
\end{equation}
and we may assume without loss of generality that $r^\prime \leq \rho$ by the choice of $\lambda \geq c(L)$.

\red{With this choice for $r^{\prime}$, the first condition in \eqref{eq:condition-on-eps-rho-(2)} holds if}
$$
\ell_*(U_\rho) \lesssim_L \frac{\rho^{3/2}}{\sqrt{\lambda}} \log^{1/4}(e\lambda/\rho) \cdot \sqrt{m},
$$
which is automatically satisfied if \eqref{eq:condition-on-rho-(2)-a} holds.

With the choice of $r^\prime$ set in place, we set $r^{\prime\prime}$ according to the condition in Lemma~\ref{lemma-strucutre-S_x}, i.e.,
\begin{equation} \label{eq:choice-of-r-(2)}
r^{\prime \prime} \simeq_L \frac{r^{\prime}}{\sqrt{\log(e\lambda/r^{\prime})}}\simeq_L \frac{\rho}{ \log(e\lambda/\rho)}.
\end{equation}
Moreover, since $r^{\prime \prime} \leq \rho$, \eqref{eq:condition-on-eps-r-(2)} holds if
$$
\ell_*(U_\rho) \lesssim_L \frac{\rho^{3/2}}{\lambda^{1/2} \red{\log^{3/4}}(e\lambda/\rho)}\red{\sqrt{m}},
$$
which is satisfied thanks to \eqref{eq:condition-on-rho-(2)-a} and the choice $\lambda\geq c(L)$.

\red{Finally, to satisfy \eqref{eq:condition-on-eps-(2)} we need}
$$
\log |V| = \log {\mathcal N}(T,r^{\prime \prime} ) \lesssim_L \frac{r^\prime}{\lambda} m \simeq_L \frac{m\rho}{\lambda \sqrt{\log (e\lambda/\rho)}},
$$
\red{which is true by \eqref{eq:condition-on-rho-(2)-b}.}
\endproof
\vskip0.4cm

The proof of Theorem \ref{thm:convex-iid-rows} is concluded by combining the estimates on $(1)$,  $(2)$ and $(3)$.
\endproof

\end{document}